\def\EMAIL#1{\href{mailto:#1}{#1}}
\newcommand{\maxmab}{\mbox{\sc MaxMAB}}
\newcommand{\excess}{\mbox{\sc Excess}}
\newcommand{\lpmaxmab}{\mbox{\sc LPMaxMab}}
\newcommand{\modmaxmab}{\mbox{\sc LagMaxMab}}
\newcommand{\lpbud}{\mbox{\sc LPBud}}
\newcommand{\budlag}{\mbox{\sc BudLag}}
\newcommand{\achievedmaxmab}{\mbox{\sc Max-Pays-Mab}}
\newcommand{\onlythemax}{\mbox{\sc Only-Max-Costs}}
\newcommand{\allpay}{\mbox{\sc All-Plays-Cost}}
\newcommand{\oneatatime}{\mbox{\sc One-at-a-Time}}
\newcommand{\simult}{\mbox{\sc Simultaneous-Feedback}}
\newcommand{\lpone}{\mbox{\sc LP1}}
\newcommand{\lplag}{\mbox{\sc LPLag}} 
\newcommand{\lpdelay}{\mbox{\sc LPDelay}}
\newcommand{\h}[1]{\widehat{#1}}
\renewcommand{\t}[1]{\widehat{#1}}
\newcommand{\edge}[1]{\mbox{\sc \#Edges}{#1}}
\newtheorem{proposition}{Proposition}
\newtheorem{definition}{Definition}
\newtheorem{theorem}{Theorem}
\newtheorem{lemma}[theorem]{Lemma}
\newtheorem{claim}[theorem]{Claim} 
\newtheorem{corollary}[theorem]{Corollary}
\begin{document}

\title{Approximation Algorithms for Bayesian Multi-Armed Bandit Problems\footnote{This paper presents a unified version of results that first appeared in three conferences: STOC '07~\cite{GM07}, ICALP '09~\cite{GM09}, and APPROX '13~\cite{GM13}, and subsumes unpublished manuscripts~\cite{GM08,GMP10}.}}
\author{Sudipto Guha \thanks{Department of Computer and Information Sciences,
    University of Pennsylvania, Philadelphia, PA 19104.
\EMAIL{sudipto@cis.upenn.edu}. Research supported by NSF awards CCF-0644119, CCF-1117216. Part of this research was performed when the author was a visitor at Google.}
\and Kamesh Munagala
\thanks{Department of Computer Science,
    Duke University, Durham, NC 27708-0129.,
\EMAIL{kamesh@cs.duke.edu}.
Supported by an Alfred P. Sloan Research  Fellowship, an award from Cisco, and by NSF  grants CCF-0745761, CCF-1008065, and IIS-0964560.}
}

\newcommand{\R}{{\mathbb R}}
\renewcommand{\H}{{\cal H}}
\renewcommand{\O}{{\cal O}}
\newcommand{\comment}[1] {}
\newcommand{\etal}{{et al.~}}
\newcommand{\roundup}[1]{\left \lceil #1 \right \rceil}
\newcommand{\rounddown}[1]{\left \lfloor #1 \right \rfloor}
\newcommand{\expec}[1]{E \left ( #1 \right ) }
\newcommand{\rk}{\mbox{\sc Rk}}
\newcommand{\prob}[1]{\Pr \left ( #1 \right ) }
\newcommand{\hatyi}{\hat{y}_i}
\newcommand{\hatk}{\hat{k}} 
\newcommand{\delc}{{\Delta_C(i)}}
\newcommand{\alg}{{\cal A}}
\newcommand{\obj}{{\cal O}}
\newcommand{\gap}{\mbox{\sc Gap}\ }
\newcommand{\y}{{\mathbf Y}}
\newcommand{\C}{{\mathcal C}}
\newcommand{\X}{{\mathbf X}}
\newcommand{\bin}{{\mathbf{Bin}}}
\renewcommand{\L}{\mathcal{L}}
\newcommand{\eat}[1]{}
\newcommand{\acc}{E}
\newcommand{\uhull}{\mbox{\sc Uhull}\ }
\newcommand{\spike}{\mbox{\sc Spike}\ }
\newcommand{\stego}{\mbox{\sc Stego}\ }
\renewcommand{\k}{\kappa}
\newcommand{\A}{{\cal A}}
\newcommand{\I}{{\cal I}}
\newcommand{\D}{{\cal D}}
\newcommand{\E}{{\mathbf E}}
\renewcommand{\P}{{\cal P}}
\renewcommand{\S}{{\cal S}}
\newcommand {\T}{\mathcal T}
\newcommand{\lpopt}{\gamma^*}
\newcommand{\lpfeas}{{{\mathcal L}^{\mathcal O}}}

\renewcommand{\E}{\mathbf{E}}
\newcommand{\Pp}{{{\cal P}}}
\newcommand{\n}{{\overrightarrow{n}}}
\maketitle

\begin{abstract}
  In this paper, we consider several finite-horizon Bayesian
  multi-armed bandit problems with side constraints. These constraints
  include metric switching costs between arms, delayed feedback about
  observations, concave reward functions over plays, and
  explore-then-exploit models.  These problems do not have any known
  optimal (or near optimal) algorithms in sub-exponential running
  time; several of the variants are in fact computationally
  intractable (NP-Hard). All of these problems violate the exchange
  property that the reward from the play of an arm is not contingent
  upon when the arm is played.  This separation of scheduling and
  accounting of the reward is critical to almost all known analysis
  techniques, and yet it does not hold even in fairly basic and
  natural setups which we consider here. Standard index policies
  are suboptimal in these contexts, there has been little analysis of such
  policies in these settings.

  We present a general solution framework that yields constant factor
  approximation algorithms for all the above variants. Our framework
  proceeds by formulating a weakly coupled linear programming
  relaxation, whose solution yields a collection of compact policies
  whose execution is restricted to a single arm. These single-arm
  policies are made more structured to ensure polynomial time
  computability of the relaxation, and their execution is then
  carefully sequenced so that the resulting global policy is not only
  feasible, but also yields a constant approximation. We show that the
  relaxation can be solved using the same techniques as for computing
  index policies; in fact, the final policies we design are very close
  to being index policies themselves.

  Conceptually, we find policies that satisfy an approximate version
  of the exchange property, namely, that the reward from a play
  does not depend on time of play to within a constant factor.
  However such a property does not hold on a per-play basis and only
  holds in a global sense: We show that by restricting the state
  spaces of the arms, we can find single arm policies that can be
  combined into global (near) index policies that satisfy the
  approximate version of the exchange property analysis in expectation.
  The number of different bandit problems that can be addressed by
  this technique already demonstrate its wide applicability.

\end{abstract}
\thispagestyle{empty}
\newpage
\pagenumbering{arabic}
\section{Introduction.}
\label{sec:intro}
In this paper, we consider the problem of iterated allocation of
resources, when the effectiveness of a resource is uncertain a priori,
and we have to make a series of allocation decisions based on past
outcomes. Since the seminal
contributions of Wald~\cite{Wald47} and Robbins~\cite{Robbins}, a vast
literature, including both optimal and near optimal solutions, has
been developed, see references in~\cite{book,CL,GGW}. 

Of particular interest is the celebrated Multi-Armed Bandit (MAB)
problem, where an agent decides on allocating resources between $n$
competing actions (arms) with uncertain rewards and can only take one
action at a time (play the arm). The play of the arm provides the
agent both with some reward as well as some further information about
the arm which causes the state of the arm to be updated. The objective
of the agent is to play the arms (according to some specified
constraints) for a time horizon $T$ which maximizes sum of the
expected rewards obtained in the $T$ steps.  The goal of the algorithm
designer (and this paper) in this case is to provide the agent with a
a decision policy. Formally, each arm $i$ is equipped with a state
space $\S_i$ and a {\em decision policy} is a mapping from the current
states of all the arms to an {\em action}, which corresponds to
playing a subset of the arms. The goal is to use the description of
the state spaces $\{\S_i\}$ and the constraints as input, and output a
decision policy which maximizes the objective of the agent. The
running time of the algorithm that outputs such a decision policy, and
the complexity of specifying the policy itself, should ideally be
polynomial in $\sum_i|\S_i|$, which is the complexity of specifying
the input. In this regard, of particular interest are {\em Index
  Policies} where each arm is reduced to a single {\em priority} value
and the priority values of the arms (indices) are combined using some
scheduling algorithm.  Index policies are significantly easier to
implement and conceptually reduce the task of designing a decision
policy to designing {\em single arm} policies. However optimum index
policies exist only in limited settings, see \cite{book,GGW} for
further discussion.

\medskip In recent years MAB problems are increasingly being used in
situations where the possible alternative actions (arms) are machine
generated and therefore the parameter $n$ is significantly large in
comparison to the optimization horizon $T$. This is in contrast to
the historical development of the MAB literature which mostly considered
few alternatives, motivated by applications such as medical treatments
or hypothesis testing. While several of those applications considered
large $|\S_i|$, many of those results relied on concentration of
measure properties derived from the fact that $n/T$ could be assumed to 
be vanishingly small in those applications.

The recent applications of MAB problems arise in advertising, content
delivery, and route selection, where the arms correspond to
advertisers, (possibly machine generated) webpages, and (machine
generated) routes respectively, and the parameter $n/T$ does not
necessarily vanish. As a result, the computational complexity of
optimization becomes an issue with large $n$. Moreover, even simple
constraints such as budgets on rewards, concavity of rewards render
the computation intractable (NP Hard) --- and these recent
applications are rife with such constraints. This fact forces us to
rethink bandit formulations in settings which were assumed to be well
understood in the absence of computational complexity considerations.
One such general setting is the {\em Bayesian Stochastic Multi-Armed
  Bandit} formulation, which dates back to the results of
\cite{ABG49,Bradt}.

\newcommand{\p}{\mathbf{p}}
\paragraph{Finite Horizon Bayesian Multi-armed Bandit Problem.} We
revisit the classical finite-horizon\footnote{We study the
  finite-horizon version instead of the more ``standard'' discounted
  reward version for two reasons: It makes the presentation simpler
  and easier to follow, and furthermore, in many applications the
  discounted reward variant is used mainly as an approximation (in the
  limit) to the finite horizon variant. We note that taking the limit
  can create surprises with regard to polynomial time tractability.}
multi-armed bandit problem in the Bayesian setting. This problem forms
the basic scaffolding for all the variants we consider in this paper,
and is described in detail in Section~\ref{prelim}. There is a set of
$n$ independent arms. Arm $i$ provides rewards in an $i.i.d.$ fashion
based on an parametrized distribution $D_i(\theta_i)$, where the
parameter $\theta_i$ is unknown. A prior distribution $\D_i$ is
specified over possible $\theta_i$; the priors of different arms are
independent.  At each step, the decision policy can play a set of arms
(as allowed by the constraints of the specific problem) and observe
the outcome of {\bf only} the set of arms that have been played. Based
on the outcomes, the priors of the arms that were played in the
previous step are updated using Bayes' rule to the corresponding
posterior distribution. The objective (most often) is to maximize the
expected sum of the rewards over the $T$ time steps, where the
expectation is taken over the priors as well as the outcome of each
play.\footnote{We also consider the situation where the $T$ are used
  for pure exploration, and the goal is to optimize the expected
  reward for the $T+1^{st}$ step only.}
  
The process of updating the prior information for an arm can be
succinctly encoded in a natural state space $\S_i$ for arm $i$, where
each state encodes the posterior distribution conditioned on a
sequence of observations from plays made to that arm. At each state,
the available actions (such as plays) update the posterior
distribution by Bayes' rule, causing probabilistic transitions to
other states, and yielding a reward from the observation of the play.
This process therefore is a special case of the classic finite horizon
multi-armed bandit problem~\cite{GJ74}. The key property of the state
space corresponding to Bayesian updating is that the rewards satisfy
the {\em martingale property}: The reward at the current state is the
same as the expected reward of the distribution over next states when
a play is made in that state. The Bayesian MAB formulation is
therefore a canonical example of the {\em Martingale Reward Bandit}
considered in recent literature such as \cite{farias}\footnote{While
  most of the results in this paper translate to the larger class of
  Martingale Reward bandits, we focus the discussion on Bayesian
  Bandits in the interest of simplicity.}.

\smallskip
In this paper, we study several Bayesian MAB variants which are
motivated by side-constraints which arise both from modern and
historical applications. We briefly outline some of these constraints
below, and discuss technical challenges that arise later on.

\eat{ The state space of an arm $\S_i$ defines a natural DAG,
the root of which corresponds to the initial prior $\D_i$.  Every
other state $u \in \S_i$ corresponds to a set of observations, and
hence to the posterior $X_{iu}$ obtained by applying Bayes rule to
$\D_i$ with the those observations.\footnote{Since the underlying true
  object is a distribution, the specific sequence of observation is
  not important.} Playing the arm in state $u$ yields a transition to
state $v$ with probability $\p_{uv}$, provided $v$ can be obtained
from $u$ in one additional observation; the probability $\p_{uv}$ is
simply the probability of this observation conditioned on the
posterior $X_{iu}$ at $u$. The expected posterior mean at a state $u
\in \S_i$, denoted by $r_u = \E[X_{iu}]$ satisfies a Martingale
property $r_u = \sum_v \p_{uv}r_v$. We present an example of priors
and state spaces in Section~\ref{prelim}.}

\begin{enumerate}[(a)]
\item Arms can have {\em budgets} on how much reward can be accrued from them - this 
is a very natural constraint in most applications of the bandit setting.\label{budgetlabel}
\item The decision to stop playing an arm can be {\em irrevocable} due to high setup costs (or destruction) of the availability of the underlying action.\label{irrevocablelabel}
\item There could be a {\em switching cost} from one arm to another; these could be economically 
motivated and adversarial in nature. These could also arise from feasibility constraints 
on policies 
for instance, energy consumption in a sensor network to switch measurements from one node to the next, or for a robot to move physically to a different location.
\label{metriclabel}
\item Feedback about rewards can have a {\em time-delay} in arriving, for instance in pay-per-conversion auction mechanisms where information about a conversion arrives at a later time. These delays can be non-uniform.\label{delayedlabel}
\item The reward at a time-step can be a {\em non-convex} function of
  the set of arms played at that step.  Consider for instance, when a
  person is shown multiple different advertisements, and the sale is
  attributed to the ``most influential'' advertisement that was shown;
  or consider a situation where packets are sent across multiple links
  (arms) and delivering the packet twice does not give any additional
  reward. \label{maxmablabel}
\item An extreme example of (\ref{maxmablabel}) is the futuristic
  optimization where the the actions taken over the $T$ steps are for
  exploration only. The goal of the optimization to maximize the
  reward of the action taken at the $T+1^{st}$ step.
\label{futurelabel}
\end{enumerate}

\paragraph{Violation of Exchange Properties.} While the constraints
outlined above appear to be very different, they all outline a
fundamental issue: They all violate the property that the reward of
the arm also does not change depending of {\em when} it is
scheduled for play. This property of exchange of plays is the most
known application of the {\em idling bandit property}, which ensures
that the state of an arm does not change while that arm is not being
played. As a consequence an arm which has high reward on the current
play, can be played immediately (``exchanged'' with its later play)
without any loss of reward. This ability of exchanging plays, is at
the core of all existing analysis of stochastic MAB problems with provable
guarantees \cite{CL,book,GGW}. In fact index policies provide the
sharpest example of such an use of the exchangeable property.
Contingent on this property, the scheduling decisions can be decoupled
from the policy decisions --- as a result the
optimization problem can be posed without resorting to a ``time indexed''
formulation.

\smallskip However this exchangeability of plays does not hold in {\em
  all} the problems we consider here.  For example, in constraint
(\ref{maxmablabel}), the reward is non-convex combination of the
outcomes of simultaneous plays of different arms, the reward from the
play of an arm is not just the function of the current play of the
arm, because the other arms may have larger or smaller values.  This
is also obviously true if the goal is to maximize the reward of the
$T+1^{st}$ step, as in constraint (\ref{futurelabel}).  The same issue
arises in constraint (\ref{delayedlabel}) -- the information derived
about an arm is not a function of the current play of the arm; since
we may be receiving delayed information about that arm due to a
previous play. A similar phenomenon occurs in constraints
(\ref{metriclabel}) and (\ref{irrevocablelabel}) where conditioned on
the decision of switching to another arm, the (effective) reward from
an arm changes even though we have not played the arm.
For the above problems, the traditional and well understood index
policies such as the Gittins index~\cite{Gittins,Tsitsiklis} which are
optimal for discounted infinite horizon settings, are either undefined
or lead to poor performance bounds. Therefore natural questions that 
arise are: {\em Can we design
  provably near optimal policies for these problems? Can such policies be shown to be 
  almost (relaxations of) index policies? } And perhaps
more importantly: {\em What new conceptual analysis idea can we bring
  to bear on these problems?} Answering these questions is the
focus of this paper.

\subsection{A Solution Recipe} 
Our main contribution is to present a general solution recipe for {\em
  all} the above constraints.  At a high level, our approach uses
linear programming of a type that is similar to the widely used {\em
  weakly coupled} relaxation for bandit problems\footnote{We owe the
  terminology ``weakly coupled'' to Adelman and
  Mersereau~\cite{Adel}.} or ``decomposable'' linear program (LP)
relaxation~\cite{book,whittle}. This LP relaxation provides a
collection of {\em single-arm} policies whose execution is confined to
one bandit arm. Such policies, though not feasible, are desirable for
two reasons: First, they are efficient to compute since the
computation is restricted to one bandit arm (in contrast, note that
the joint state space over all the arms is exponential in $n$); and
secondly, they often naturally yield index
policies.  However, as mentioned before, prior results in this realm
crucially require indexability, and fail to provide any guarantees for
non-indexable bandit problems of the type we consider. Our recipe
below provides a novel avenue for efficiently designing and analyzing feasible
policies, albeit at the cost of optimality. 

\begin{description}
\item[Single-arm Policies:] The first step of the recipe is to
  consider the execution of a global policy restricted to a single
  arm, and express its actions only in terms of parameters related to
  that arm. This defines a {\em single-arm} policy. We identify a
  tractable state space for single arms, which is a function of the
  original state spaces $\S_i$, and reason that good single arm
  policies exist in that space. Though classic index policies are
  constructed using this approach~\cite{nino,whittle}, our approach is
  fundamentally different: In a global policy, actions pertaining to a
  single arm need not be contiguous in time -- our single-arm policies
  make them contiguous. We show that the {\em martingale property} of
  the rewards enables us to do this step, and this will be the crux of
  the subsequent algorithm. In essence, though these policies are {\em
    weaker} in structure than previously studied policies, this lack
  of structure is critical in deriving our results.

\item[Linear Program:] We write a linear programming relaxation, which
  finds optimal single-arm policies feasible for a small set of {\em
    coupling constraints} across the arms, where the coupling
  constraints describe how these single arm policies interact within a
  global policy. This step can either be a direct linear program;
  however, this is infeasible in many cases, and we show a novel
  application of the Lagrangian relaxation to not only solve this
  relaxation, but also analyze the approximation guarantee the
  relaxation produces.

\item[Scheduling:] The single-arm policies produced by the relaxation
  are not a feasible global policy. Our final step is to {\em
    schedule} the single arm policies into a globally feasible policy.
  This scheduling step needs ideas from analysis of approximation
  algorithms, and specially from the analysis of adaptivity gaps. We
  note that the feasibility of this step crucially requires the weak
  structure in the single-arm policies alluded to in the first item.
  The final policies we design are indeed very similar to (though not
  the same as) index policies; we highlight this aspect in
  Section~\ref{sec:gittins}.
\end{description}

We note that each of the above steps requires new ideas, some of which
are problem specific. For instance, for the first step, it is not
always obvious why the single-arm policies have polynomial complexity.
In constraint (\ref{maxmablabel}), the reward depends on the specific
set of $K$ arms that are played at any step.  This does not correspond
naturally to any policy whose execution is restricted to a {\em single
  arm}.  Similarly, in constraint (\ref{delayedlabel}), the state for
a {\em single arm} depends on the time steps in the past where plays
were made and the policy is awaiting feedback -- this is exponential
in the length of the delay.  Our first contribution is to show that in
each case, there are different state space for the policy which has
size polynomial in $\sum_i |\S_i|, T$, over which we can write a
relaxed decision problem.

Similarly, the scheduling part has surprising aspects: For constraint
(\ref{maxmablabel}), the LP relaxation does not even encode that we
receive the max reward every step, and instead only captures the {\em
  sum of rewards} over time steps. Yet, we can schedule the single-arm
policies so that the max reward at any step yields a good
approximation. For in constraint (\ref{delayedlabel}), the single-arm
policies wait different amounts of time before playing, and we show
how to interleave these plays to obtain good reward.

\medskip Finally, there are fundamental roadblocks, above and beyond
the specific issues mentioned for each problem, that applies to almost
every problem of this genre. There are very few techniques that
provide good algorithm-independent upper bounds for these problems.
One such approach has been linear programming. But there is a
fundamental hurdle regarding the use of linear programs in the context
of MAB problems.

A linear program often (and for many of the cases described herein)
accounts for the rewards globally after all observations have been
revealed.  In contrast, a policy typically has to adapt as the
observations are revealed. The gap between these two accounting
processes is termed as the adaptivity gap (an example of such in the
context of scheduling can be found in \cite{DGV}).  The adaptivity gap
is a central facet of bandit problems where the policies are adapt to
the observable outcomes.  We show that there exists policies for whom
the adaptivity gap is at most a constant factor! Observe that showing
this property also implies satisfying an {\em approximate version of
  the exchange property, namely, up to a constant factor the reward
  from a play does not depend on when the play is made.} Note that
this property does not (can not) hold on a per play basis and is a
global statement which holds in expectation.

\subsection{Problem Definitions, Results, and Roadmap} 
We now describe the specific problems we study in this paper,
highlighting the algorithmic difficulty that arises therein.  In terms
of performance bounds, we show that the resulting policies have
expected reward which is {\em always} within a fixed constant factor
of the reward of the optimal policy, regardless of the parameters
(such as $n,T$) of the problem, and regardless of the nature of the
priors $\D_i$ of the arms. Such a multiplicative guarantee is termed
as {\em approximation ratio} or {\em factor}\footnote{Following
  standard convention, we will express this ratio as a number larger
  than $1$, meaning that it will be an upper bound on how much larger
  the optimal reward can be compared to the reward generated by our
  policy.}. We present constant factor approximation algorithms for
all the problems we consider. The running times we achieve are small
polynomial functions of $T$ and $\sum_i |\S_i|$, which is also the
complexity of an explicit representation of the state space of each
arm. As mentioned above, our running times are very close to the time
taken to compute standard indices.

Approximation algorithms are a natural avenue for designing policies,
since fairly natural versions of the problems we consider can be shown
to be {\sc NP-Hard}. For instance, for constraint (\ref{maxmablabel}),
where the per-step reward is the maximum observed value from the set
of plays is NP-Hard because it generalizes computing a subset of $k$
distributions that maximizes the expected maximum~\cite{GoelGM}. The
bandit problem with switching costs (constraint~\ref{metriclabel}) is
{\sc Max-SNP} Hard\footnote{There exists $\epsilon>0$ such that
  computing a $1+\epsilon$ approximation is NP Hard.} since it
generalizes an underlying combinatorial optimization problem termed
{\em orienteering}~\cite{BlumCKLMM03}.  For other problems, such as
the constraints of irrevocable policies
(constraint~\ref{irrevocablelabel}), there are examples where no
global policy satisfying the said constraint can achieve a reward
within a small constant factor of the upper bound proved by some
natural linear programming relaxation.  In this case, and many others
we provide alternate analysis of existing heuristics which have been
shown to perform well in practice~\cite{farias}.

We now discuss the specific problems. Each of the problems involve
budgets; however, these can be incorporated in the state space in a
natural way (by not allowing states corresponding to larger rewards).
In some cases, specially constraint (\ref{maxmablabel}), budgets pose
more unusual challenges.

\paragraph{Irrevocable Policies for Finite Horizon MAB Problem.} This
problem was formalized by Farias and Madan \cite{farias}.  This is the
Bayesian MAB problem described earlier where the policy plays at most
$K$ arms in a step and the objective is to maximize the expected
reward over a horizon of $T$ steps. Each of the arms have budgets on
the total reward that can be accrued from that arm.  There is an
additional constraint that we do not revisit any arm that we have
stopped playing. Using the algorithmic framework introduced
in~\cite{GM07}, Farias and Madan showed an $8$ approximation algorithm
which also works well in practice.  In Section~\ref{finite}, we
provide a better analysis argument (based on revisiting \cite{GM07} and newer ideas) 
and that argument improves the result to a
factor $(2+\epsilon)$-approximation for any $\epsilon>0$ in time
$O((\sum_i \edge(\S_i))\log (nT/\epsilon))$ where $\edge(\S_i)$ is the
number of edges that define the statespace $\S_i$.  The parameter
$\edge(\S_i)$ is a natural representation of the sparsity of the input
and thus we can view the algorithm to be almost linear time in that
said measure.  We also show that $2-O(\frac1n)$ is the best possible
bound against weakly coupled relaxations over single arm policies for
$n$ arms.  Although this is not the main result in this paper, we
present this problem first because the intuitive analysis idea is used throughout the paper and 
this problem has a right level of complexity for illustrative purposes.

\paragraph{Traversal Dependent MAB Problems.} It is widely
acknowledged~\cite{BS94,ML} that the scenarios that call for the
application of bandit problems typically have constraints/costs for
switching arms. Banks and Sundaram \cite{BS94} provide an illuminating
discussion even for the discounted reward version, and highlight the
technical difficulty in designing reasonable policies.  Since the
switching costs couple the arms together, it is not even clear how to
define a weakly coupled system. These constraints/costs can be
motivated by strategic considerations and even be adversarial:
Price-setting under demand uncertainty~\cite{Roth}, decision making in
labor markets~\cite{labor1,labor2,BS94}, and resource allocation among
competing projects~\cite{banks1,GGW}. For instance~\cite{BS94,banks1},
consider a worker who has a choice of working in $k$ firms. A priori,
she has no idea of her productivity in each firm. In each time period,
her wage at the current firm is an indicator of her productivity
there, and partially resolves the uncertainty in the latter quantity.
Her expected wage in the period, in turn, depends on the underlying
(unknown) productivity value. At the end of each time step, she can
change firms in order to estimate her productivity at different
places. Her payoff every epoch is her wage at the firm she is employed
in at that epoch. The added twist is that changing firms does not come
for free, and incurs a cost to the worker. What should the strategy of
the worker be to maximize her (expected) payoff?  A similar problem
can be formulated from the perspective of a firm trying out different
workers with a priori unknown productivity to fill a post. On the
other hand, the switching costs may be constraints imposed on the
policies by physical considerations -- for example an energy
constrained robot moving to specific locations. The above discussion
suggests two sets of problems -- the {\em
  Adversarial Order Irrevocable Policies for Finite Horizon MAB}
problem and the {\em MAB with Metric Switching Costs} problem. We
address each of them below:

\paragraph{Adversarial Order Irrevocable Policies for Finite Horizon
  MAB.}
In all these cases, the decision 
agent may not have the flexibility to choose an ordering of events and 
yet have to provide some algorithm with provable rewards.
 In this problem we are asked to design a collection of single
arm policies subject to a total horizon of $T$ and $K$ plays at a
time. After we design the policies, an adversary chooses the order in
which we can play the arms irrevocably -- if we quit playing an arm
then we cannot return to playing that same arm. We then schedule the
policies such that we maximize the expected reward. This problem
naturally models a ``switching cost'' behavior where the costs are
induced by an adversary. To the best of our knowledge this problem had
not been formulated earlier -- in Section~\ref{sec:metric} 
we provide a $(4+\epsilon)$-approximate
solution (when compared to the best possible order we could have
received) that can be computed in time $O((\sum_i \edge(\S_i))\log
(nT/\epsilon))$ as before. The main benefit of this problem is
revealed in the next problem on metric switching costs.

\paragraph{MAB with Metric Switching Costs.} In this variant, the arms are
located in a metric space with distance function $\ell$. If the
previous play was for arm $i$ and the next play is for arm $j$, the
{\em distance cost} incurred is $\ell_{ij}$. The total distance cost
is at most $L$ in all decision paths. The policy pays this cost
whenever it switches the arm being played. If the budget $L$ is
exhausted, the policy must stop. This problem already models the more
common ``switching in-and-out'' costs using a star graph.  To the best
of our knowledge there was no prior theoretical results on the
Bayesian multi-armed bandit with metric switching
costs\footnote{Except the conference paper \cite{GM09}, which is
  subsumed by this article.}. Observe that in absence of the metric
assumption, the problem already encodes significantly difficult graph
traversal problems and are unlikely to have bounded performance
guarantees.  In Section~\ref{sec:metric}, we present a $(4 +
\epsilon)$-approximation\footnote{These result weakens if we use the
  $4$-approximation in \cite{BlumCKLMM03} instead of the $2+\epsilon$
  approximation for the orienteering problem in \cite{CKP}. The
  constants become $16/3 + \epsilon$ and $25/4 +\epsilon$ for $K=1$
  and $K\geq 2$ respectively.}  (for any $\epsilon>0$) the
Finite-horizon MAB problem with metric switching costs for $K=1$. The
result worsens to $(4.5+\epsilon)$-approximation for any $K\geq 2$.
This result uses the ideas from the adversarially ordered bandits and separates 
the hard combinatorial traversal decisions from
the bandit decisions.

\paragraph{Delayed Feedback.} In this variant, the feedback about a
play (the observed reward) for arm $i$ is obtained after some
$\delta_i$ time steps.  This notion was introduced by
Anderson~\cite{Anderson} in an early work in mid 1960s.  Since then,
though there have been additional results~\cite{Suzuki,Choi}, a
theoretical guarantee on adaptive decision making under delayed
observations has been elusive, and the computational difficulty in
obtaining such has been commented upon
in~\cite{sched,Armitage,Simon,Eick}. Recently, this issue of delays
has been thrust to the fore due to the increasing application of
iterative allocation problems in online advertising. As an example
described in Agarwal etal in \cite{agarwal}, consider the problem of
presenting different websites/snippets to an user and induce the user
to visit these web pages. Each pair of user and webpage define an arm.
The delay can also arise from batched updates and
systems issues, for instance in information gathering and
polling~\cite{agarwal}, adaptive query processing~\cite{TianD03},
experiment driven management and profiling~\cite{expt,whatif},
reflective control in cloud services~\cite{cloud}, or unmanned aerial
vehicles~\cite{leny}. There has been little formal analysis of delayed
feedback and its impact on performance despite the fact that delays
are ubiquitous.

We note that in the case of delayed feedback, even for a single arm,
the decision policy needs to keep track of plays with outstanding
feedback for that arm, so that the optimal single-arm decision policy
has complexity which is exponential in the delay parameter $\delta_i$.
It is therefore not clear how to solve the natural weakly coupled LP
efficiently.  In a broader sense, the challenge in the delayed setting
is not just to balance explorations along with the exploitation, but
also to decide a schedule for the possible actions both for a single
arm (due to delays) as well as across the arms.  In
Section~\ref{sec:delay}, we show several constant factor
approximations for this problem. We show that when $\max_i \delta_i
\leq \sqrt{T}/50$ then the approximation ratio does not exceed $3$
even for playing $K$ different arms (with additive rewards) simultaneously. The approximation
ratio worsens but remains $O(1)$ for $\max_i \delta_{i} \leq T/(48 \log
T)$. Interestingly the policies obtained from these algorithm satisfy the
property for each single arm which is best summarized as: ``quit,
exploit or double the exploration'' which is a natural policy in
retrospect. Moreover the different algorithmic approaches expose the
natural ``regimes'' of the delay parameter from small to large.

\paragraph{Non-Convex Reward Functions.} Typically, MAB problems have
focused on either the reward from the played arm, or if multiple arms
are played, the sum of the rewards of these arms. However, in many
applications, the reward from playing multiple arms is some concave
function of the individual rewards. Consider for instance charging
advertisers based on tracking clicks made by a user before they make a
conversion; in this context, the charge is made to the most
influential advertiser on this path and is hence a non-linear function
of the clicks made by the user. As another example, consider an fault
tolerant network application wishing to choose send packets along $K$
independent routes~\cite{Akella}. The network only has prior
historical information about the success rate of each of the routes
and wishes to simultaneously explore the present status of the routes
as well as maximize the probability of delivery of the packet. This
problem can be modeled as a multi-armed bandit problem where multiple
arms are being played at a time step and the reward is a nonlinear
(but concave) function of the rewards of the played arms. 
As a concrete example, we define the \maxmab\ problem, where at most
$K$ arms can be played every step, and the reward at any step is the
{\em maximum} of the observed values.  This problem produces two
interesting issues: 

First, {\em if only the maximum value is obtained
  as a reward, should the budgets of the other arms be decreased?} A
case can be made for either an answer of ``yes'' (budgets indicate
opportunity) which defines a \allpay\ accounting or an answer of
``no'' (budgets indicate payouts of competing advertisers in a repeated
bidding scenario) which defines \onlythemax\ accounting.

Second, {\em if multiple
  arms are played and their rewards are non-additive, does the
  modeling of how the feedback is received (\oneatatime\ or
  \simult) affect us?}  In Section~\ref{sec:maxmab}, we
provide a $(4+\epsilon)$-approximation in the model where the budgets
of all arms are decreased and the feedback is received one-at-a-time.
We also show that these four variants (from the two issues) are
related to each other and their respective optimum solutions do not
differ by more than a $O(1)$ factor. We show this by providing
algorithms in a weaker model that achieve $O(1)$ factor reward of the
optimum in a correspondingly stronger model.  From a policy
perspective, the policies designed for this problem have the property
that restricted to a single arm, the differences between small reward
outcomes are ignored. Therefore the policy focuses on an almost
index-like computation of only the large reward outcomes.

\paragraph{Future Utilization and Budgeted Learning.} One of the more
recent applications of MAB has been the Budgeted Learning type
applications popularized by \cite{naive,madani,moore}. This
application has also been hinted at in \cite{ABG49}. The goal of a
policy here is to perform pure exploration for $T$ steps. At the
$T+1^{st}$ step, the policy switches to pure exploitation and
therefore the objective is to optimize the reward for the $T+1^{st}$
step. The simplest example of such an objective is to choose one arm
so as to maximize the {\em (conditional) expected value}; more
formally, given any policy $\pi$ we have a probability distribution
over final (joint) state space. In each such (joint) state, the policy chooses
the arm that maximizes the expected value -- observe that this is a
function $g(\pi)$ of the chosen policy.  The goal is to find the
policy that maximizes the expected value of $g(\pi)$ where the
expectation is taken over all initial conditions and the outcomes of
each play.  Note that the non-linearity arises from the fact that
different evolution paths correspond to different final choices.
Natural extensions of such correspond to Knapsack type constraints and
similar problems have been discussed in the context of power
allocation (when the arms are channels) \cite{cover} and optimizing
``TCP friendly'' network utility functions \cite{LL99}. In
Section~\ref{sec:budget} we provide a $(3+\epsilon)$-approximation 
for the basic variant\footnote{Subsuming the main result of the conference paper
  \cite{GM07}.}. 

\subsection{Related Work}
\label{related}
Multi-armed bandit problems have been extensively studied since their
introduction by Robbins in \cite{Robbins}. From that starting point
the literature has diverged into a number of (often incomparable)
directions, based on the objective and the information available. In
context of theoretical results, one typical
goal~\cite{LaiRobbins,Auer,CL,Golovin,maxkmab,experts,KalaiV03,FlaxmanKM05}
has been to assume that the agent has absolutely no knowledge of
$\rho_i$ (model free assumption) and then the task has been to
minimize the ``regret'' or the lost reward, that is comparing the
performance to an omniscient policy that plays the best arm from the
start. However, these results both require the reward rate to be large
and large time horizon $T$ compared the number of arms (that is, vanishing $n/T$). In the
application scenarios mentioned above, it will typically be the case
that the number of arms is very large and comparable to the optimization horizon and the 
reward rates are low.

Moreover almost all analysis of stochastic MAB in the literature has
relied on the {\bf exchange property}; these results depend on
``moving plays/exchange arguments'' --- if we can play arm $i$
currently, we consider waiting or moving the play of the arm to a
later time step without any loss.  The exchange properties are required
for defining submodularity as well as its extensions, such as sequence
or adaptive submodularity \cite{Golovin,golovin2,saeed}.
The exchange property is {\bf not} true in cases
of irrevocability (since we cannot have a gap in the plays, so moving
even a single play later creates a gap and is infeasible), delays
(since if the next play of an arm is contingent on the outcome of the
previous play, we cannot move the first play later arbitrarily because
then we may make the second play before the outcome of the the first
is known, which is now infeasible), or non-linear rewards (since the
reward is explicitly dependent on the subset of plays made together).
It may be appear that we can avoid the issue by appealing to
non-stochastic/adversarial MABs~\cite{nonstochastic} -- but they do
not help in the presence of budget constraints which couples the
action across various time steps.  It is known that online convex
analysis cannot be analyzed well in the presence of large number of
arms and a ``state'' that couples the time steps~\cite{mkearns}.
Budgets are one of the ways a ``state'' is enforced. Delays and
irrevocability also explicitly encode state. All future utilization
objectives trivially encode state as well.

The problem considered in \cite{maxkmab} is similar in name but is
different from the \maxmab\ problem we have posed herein --- that
paper maximizes the single maximum value seen across all the arms and
across all the $T$ steps.   The authors of \cite{GKN} show that several
natural index policies for the budgeted learning problem are constant
approximations using analysis arguments which are different from those
presented here. The specific approximation ratios proved in~\cite{GKN}
are improved upon by the results herein with better running times. The authors of \cite{Gupta}
present a constant approximation for non-martingale finite horizon
bandits; however, these problems require techniques that are
orthogonal to those in this paper. The problems considered in~\cite{GMS10} is
an infinite horizon restless bandit problem. Though that work also
uses a weakly coupled relaxation and its Lagrangian (as is standard
for many MAB problems), the techniques we use here are different.

\section{Preliminaries: State Spaces, Budgets and Decision Policies}
\label{prelim}
Recall the basic setting of the finite horizon Bayesian MAB problem.
There is a set of $n$ independent arms. Arm $i$ provides rewards in an
$i.i.d.$ fashion from a parametrized distribution $D_i(\theta_i)$. The
parameter $\theta_i$ can be arbitrary. It is unknown at the outset,
but a prior $\D_i$ is specified over possible values of $\theta_i$.
The priors of different arms are independent.  At each step, a
decision policy can play a set of arms (as allowed by the constraints
of the specific problem) and observe the outcome of {\em only} the set
of arms that have been played. Based on the outcomes, the priors
$\{\D_i\}$ of the arms that were played in the previous step are
updated using Bayes' rule to the corresponding posterior
distributions.  The objective of the decision policy (most often) is
to maximize the expected sum of the rewards obtained over the $T$ time
steps, where the expectation is taken over the priors as well as the
outcome of each play. We note that other objectives are possible; we
also consider the case where the $T$ plays are used for pure
exploration, and the goal is to optimize the expected reward for the
$T+1^{st}$ play.

\subsection{State Space of an Arm and Martingale Rewards}
Let $\sigma$ denote a sequence of observed rewards, when $k =
|\sigma|$ plays of arm $i$ are made. Given these observations, the
prior $\D_i$ uniquely updates to a posterior $\D_{i\sigma}$. Since the
rewards are $i.i.d.$, only the {\em set} (as opposed to sequence) of
observations $\sigma$ matters in uniquely defining the posterior
distribution. Suppose $D_i(\theta)$ is a $r$ valued distribution, a
sequence of $k$ plays can therefore yield ${r + k \choose k}$ sets
$\sigma$. Each of these uniquely defines a posterior distribution
$\D_{i\sigma}$.  We term each of these sets $\sigma$ as a {\em state}
of arm $i$. At any point in time, the arm $i$ is in some state, and
over a horizon of $T$ steps, the number of possible states is
$O(T^r)$.  We denote the set of these states as $\S_i$, and a generic
state as $u \in \S_i$. Let the start state corresponding to the
initial prior $\D_i$ be denoted $\rho_i$.

Given a posterior distribution $\D$ of arm $i$ (corresponding to some
state $u \in \S_i$ and set of observations $\sigma$), the next play
can be viewed as follows: A parameter $\theta_i$ is drawn from $\D$,
and a reward is observed from distribution $D_i(\theta_i)$. Therefore,
the posterior distribution $\D$ at state $u$ also defines the
distribution over observed rewards of the next play. If the observed
reward is $q$, this modifies the state from $u$ with set of
observations $\sigma$ to state $v$ corresponding to the set $\sigma
\cup \{q\}$. We define $\p_{uv}$ as the probability of this event
conditioned on being at state $u$ and executing a play.

We further define $r_u = \E[D_i(\theta) | \S_i = u]$ as the expected
value of the observed reward of this play. We term this the {\em
  reward} of state $u$.
Let $\edge(\S_i)$ be the number of {\em non-zero transition edges} in
$\S_i$, i.e., $\p_{uv} \neq 0$. We use the notation $|\S_i|$ to denote the number of states
and if the play in each state has at most $d$ outcomes then 
$\edge(\S_i) \leq d |\S_i| \leq  |\S_i|^2$. Observe that $\sum_i
\edge(\S_i)$ is a natural measure of input sparsity.

\paragraph{Compact Representation and Martingale Property.}
At this point, we can forget about Bayesian updating (for the most part), and pretend we have a state space $\S_i$ for arm $i$, where state $u$ yields reward $r_u$, and has transition probability matrix $\{\p_{uv}\}$. The goal is to design a decision policy for playing the arms (subject to problem constraints) so that the expected reward over $T$ steps is maximized. Since the prior updates satisfy Bayes' rule, it is a standard observation that $r_u = \sum_{v \in \S_i} \p_{uv}r_v$ for all states $u \in \S_i$. In other words, the state space of an arm satisfies the {\em Martingale property} on the rewards. We note that all our results except for the one in Section~\ref{sec:delay} hold for arbitrary state spaces satisfying the martingale property; however, we choose to present the results for case of Bayesian updates, since this is the canonical and most widely known application of the MAB problems.

\paragraph{Example: Bernoulli Bandits.} 
Consider a fixed but unknown distribution over two outcomes (success
or failure of treatment, click or no click, conversion or no
conversion). This is a Bernoulli$(1,\theta)$ trial where $\theta$ is
unknown.  A family of prior distribution on $\theta$ is the Beta
distribution; this is the conjugate prior of the Bernoulli
distribution, meaning that the posterior distributions also lie within
the same family.  A Beta distribution with parameters
$\alpha_0,\alpha_1 \in \mathbb{Z}^+$, which we denote
$Beta(\alpha_1,\alpha_0)$ has p.d.f. of the form $c\theta^{\alpha_1-1}
(1-\theta)^{\alpha_0-1}$, where $c$ is a normalizing constant.
$Beta(1,1)$ is the uniform distribution.  The distribution
$Beta(\alpha_1,\alpha_0)$ corresponds to the current (posterior)
distribution over the possible values of $\theta$ after having
observed $(\alpha_1-1)$ $1$'s and $(\alpha_0-1)$ $0$'s, starting from
the belief that $\theta$ was uniform, distributed as $Beta(1,1)$.

Given the distribution $u = Beta(\alpha_1,\alpha_0)$ as the prior, the
expected value of $\theta$ is $\mu_u =
\frac{\alpha_1}{\alpha_1+\alpha_0}$, and this is also the expected
value of the observed outcome of a play conditioned on this prior.
Updating the prior on a sample is straightforward. On seeing a $1$,
the posterior (of the current sample, and the prior for the next
sample) is $v = Beta(\alpha_1+1,\alpha_0)$, and this event happens
with probability $\mu_u$. On seeing a $0$, the new distribution is $w
= Beta(\alpha_1,\alpha_0+1)$ and this event happens with probability
$1-\mu_u$.

The posterior density $u = Beta(\alpha_1,\alpha_0)$ corresponds to
state $u \in \S_i$, and can be uniquely specified by the values
$(\alpha_1,\alpha_0)$. Therefore, $r_u = \mu_u$. If the arm is played
in this state, the state evolves to $v$ with probability $\p_{uv} =
\mu_u$ and to $w$ with probability $\p_{uw} = 1-\mu_u$.
As stated earlier, the hard case for our algorithms (and the case that
is typical in practice) is when the input to the problem is a set of
arms $\{i\}$ with priors $\D_i \sim Beta(\alpha_{1i},\alpha_{0i})$
where $ \alpha_{0i} \gg \alpha_{1i}$ which corresponds to a set of
poor prior expectations of the arms.
Observe that $\edge(\S_i) \leq 2 |\S_i|$ for this example.

\subsection{Modeling Budgets and Feedback}
\label{budgets}
One ingredient in our problem formulations is natural budget constraints in individual arms. As we will see later, these constraints introduce non-trivial complications in designing the decision policies. There are three types of budget constraints we can consider.
\begin{description}
\item[Play Budget Model:] This corresponds to the number of times an individual arm $i$ can be played, typically denoted by $T_i$ where $T_i \le T$. In networking, each play attempts to  add traffic to a route and we may choose to avoid congestion. In online advertising, we could limit the  number of impressions of a particular advertisement, often  done to limit advertisement-blindness.
\item[\allpay\ Model] The total reward obtained from arm $i$ should be at most its budget $B_i$. The arm can be played further, but these plays do not yield reward.
\item[\onlythemax\ Model]  In {\sc MaxMab}, the reward of only the arm that is maximum is used, and hence only this budget is depleted. In other words, there is a bound on the total  reward achievable from an arm (call this $A_i$), but we only count reward from an arm when it was the maximum value arm at any step.  
\end{description}

\noindent Observe that the play budget and \allpay\ models simply involve
truncating the state space $\S_i$, so that an arm cannot be played if
the number of plays or observations violates the budget constraint --
this obviates the need to discuss this constraint explicitly in most
sections. The \onlythemax\ model is more complicated to handle, requiring 
expanding the state space to
incorporate the depleted budget. We discuss that model only 
in Section~\ref{hardest}. Note however that the \allpay\
model is also nontrivial to encode in the context of delays in
feedback, and we discuss that model in Section~\ref{sec:delay}
further.

\paragraph{}Modeling the feedback received is an equally important
issue. We just alluded to delays in the feedback for the play of a
single arm. However if multiple plays are being made simultaneously -- 
the definition of simultaneity has interesting consequences. We can conceive two models:
\begin{description}
\item[\oneatatime\ {\sc Feedback}:] In this model even if we are allowed to play $K>1$ arms in a 
single time slot, we make only one play a time and receive the feedback before making the play of a different arm in that same time slot.
\item[\simult:] In this model we decide on a palette of at most $K$ plays in a time slot 
and make them before receiving the result of any of the plays in that time slot.
\end{description}
Clearly, the \oneatatime\ model allows for more adaptive policies.
However for additive rewards, where we sum up the rewards obtained
from all the plays made in a time slot -- this issue regarding the
order of feedback is not very relevant. However in the case where we
have subadditive rewards then the distinction between \oneatatime\ and
\simult\ is clearly significant -- specially in the \allpay\
accounting. The distinction continues to hold in the \onlythemax\ pays
model as well, because we can adapt the set of arms to play to the
outcome of the initial plays in the same time slot.

\subsection{Decision Policies and Running Times}
A decision {\bf policy} $\P$ is a mapping from the current state of
the overall system to an {\em action}, which involves playing $K$ arms
in the basic version of the problem. This action set can be richer, as
we discuss in subsequent sections. Observe that the state of the
overall system involves all the knowledge available to the policy,
including:
\begin{enumerate}
\item The joint states $\{u \in \S_i\}$ of all the arms;
\item The current time step, which also yields the remaining time horizon;
\item The remaining budget of all the arms;
\item Plays made in the past for which feedback is outstanding (in the context of delayed feedback).
\end{enumerate}

\noindent 
Therefore, the ``state'' used by a decision policy could be much more complicated than the product space of $ \S_i$, and one of our contributions in this paper is to simplify that state space and make it tractable.

\paragraph{Running Times.} Our goal will be to compute decision
policies in time that depends polynomially on $T$ and $\sum_i |\S_i|$,
which is the sum of the sizes of the state spaces of all the arms. In
the case of Bernoulli bandits discussed above, this would be
polynomial in the number of arms $n$, and the time horizon $T$.  Our
running times and the description of the final policies will be
comparable (in a favorable way) to the time required for dynamic
programming to compute the standard index
policies~\cite{book,Gittins}. This will become clear in
Section~\ref{finite}; however, fine-tuning the running time is not the
main focus of this paper. However our running times will often be
linear in the natural input sparsity, that is, $\sum_i \edge(\S_i)$.

\subsubsection{Single Arm Policies}
A single arm policy $P_i$ for arm $i$ is a policy whose execution
(actions and rewards) are only confined to the state space of arm $i$.
In other words, it has one of several possible actions at any step; in
the basic version of the problem, the available actions at any step
involve either (i) play the arm -- if the arm is in state $u \in
\S_i$, this yields reward $r_u$; or (ii) stop playing and exit.
Furthermore, the actions of $P_i$ only depend on the state variables
restricted to arm $i$ -- the current state $u \in \S_i$, remaining
budget of arm $i$, the remaining time horizon, etc. In most 
situations we will eliminate states that cannot be reached in $T$ steps
(the horizon). Formally;

\begin{definition}
\label{singlearmdef}
Let $\S_i(T)$ be the state space $\S_i$ truncated to $T$ steps.
Let $\C_i(T)$ describe all single-arm policies of arm $i$ with horizon
$T$.  Each $P_i \in \C_i(T)$ is a (randomized) mapping from the
current state of the arm to an action. The {\em state} of the system
is captured by the current posterior $u \in \S_i(T)$ and the remaining
time horizon.  Such a policy at any point in time has at two available
actions: (i) Play the arm; or (ii) stop. 
Note that the states and actions are for the
basic problem, so that variants will have more states/actions which will be 
described at the appropriate context.
\end{definition}

\begin{definition}
\label{notationdef}
  Given a single-arm policy $P_i$ let $R(P_i)$ to be the expected
  reward and $\T(P_i)$ as the expected number of plays made. The
  expectation is taken over all randomizations within the policy as
  well as all outcomes of all plays.
\end{definition}

The construction of single-arm policies will be non-trivial, as will
be described in subsequent sections. One of the contributions of this
paper is in the development of single-arm policies with succinct
descriptions. Succinct descriptions also allow us to focus on key 
aspects of the policy and analyze simple and common modifications 
of the policies.

\section{Irrevocable Policies for Finite Horizon Bayesian MAB Problems}
\label{finite}

In this section we consider the basic finite horizon Bayesian MAB
problem (as defined in Section~\ref{sec:intro}) with the additional
constraint of {\em Irrevocability}; that is, the plays for any arm are
contiguous.  Recently, Farias and Madan~\cite{farias} provided an
$8$-approximation for this problem.  Previously, constant factor
approximations were provided in \cite{GKN,GM09} for the finite horizon
MAB problem (not necessarily using irrevocable policies). The analysis herein
provides a significant strengthening of all previous results on this
problem.  The contribution of this section is the introduction of a
palette of basic techniques which are illustrative of all the
(significantly more difficult) problems in this paper.

\medskip
\noindent 
In term of specific results we show that given a solution of the
standard LP formulation for this problem, a simple scheduling
algorithm provides a $2$ approximation (Theorem~\ref{unitmab}).
Moreover we prove that this is the best possible result against the
standard LP solution. We then show that we can compute a near optimal
solution of the standard LP efficiently, that is, for any $\epsilon \in (0,1]$ we provide an algorithm that runs in $O((\sum_i \edge(\S_i) )\log (nT/\epsilon))$ time
and provides a $2+\epsilon$-approximate
solution against the optimum solution of that standard LP
(Theorem~\ref{blahtheorem} and Corollary~\ref{blah2}).

In terms of techniques we introduce a compact LP representation and its
analysis based on Lagrangians, which are used throughout the rest of
the paper. Likewise we consider the technique of {\em truncation} 
in Section~\ref{part1} -- while this technique is similar to 
\cite[Lemma 2]{farias} but our presentation uses a slightly 
different intuition that is also useful throughout this paper.

\paragraph{Roadmap:} We present the standard LP relaxation that bounds
the value of the optimum solution $OPT$ in Section~\ref{weak1}; along
with an interpretation of fractional solution of that LP in terms of
single arm policies. We show how the standard LP can be represented in
a compact form demonstrating the {\em weak coupling}.  As stated, we
discussion truncation in Section~\ref{part1}. In Section~\ref{simple}
we then present the $2$-approximation result (Theorem~\ref{unitmab})
and the (family of) examples which demonstrate that the bound of $2$
is tight. In Section~\ref{sec:compact} we then show how to achieve the
$(1+\epsilon)$-approximation (Theorem~\ref{blahtheorem} and
Corollary~\ref{blah2}) for the compact LP. We conclude in
Section~\ref{sec:gittins} with a comparison between the Lagrangian
induced solutions and the Gittins Index.
 
\subsection{Linear Programming Relaxations and Single arm Policies} 
\label{weak1}
Consider the following LP, which has two variables $w_u$ and $z_u$ for
each arm $i$ and each $u \in \S_i(T)$. Recall $\S_i(T)$ is the
statespace $\S_i$ truncated to a horizon of $T$ steps.  The variable
$w_u$ corresponds to the probability (over all randomized decisions
made by any algorithm as well as the random outcomes of all plays made
by the same algorithm) that during the execution of arm $i$ enters
state $u \in \S_i$. Variable $z_u$ corresponds to the probability of
playing arm $i$ in state $u$.

{\small
\[ \begin{array}{rcll}
\displaystyle \lpone & = & \displaystyle \mbox{Max} \sum_{i=1}^n \sum_{u \in \S_i(T)} r_u z_u & \\ 
\sum_{i=1}^n   \sum_{u \in \S_i} z_u & \le & K T &\\
 \sum_{v \in \S_i} z_v \p_{vu} & = & w_u & \forall i,  u \in \S_i(T) \setminus \{\rho_i\} \\
z_u & \le & w_u & \forall u \in \S_i(T), \forall i\\
z_u, w_u & \in & [0,1] & \forall u \in \S_i(T), \forall i\\
\end{array}\]
}

\noindent We use \lpone\ to refer to the value and (\lpone) as the LP.

\begin{claim}
\label{claim1} \label{lem:key}
Let $OPT$ be the value of the optimum policy. Then $OPT \leq \lpone$. 
\end{claim}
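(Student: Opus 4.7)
The plan is to exhibit a feasible solution $(w^*, z^*)$ to (\lpone) whose objective value equals the expected reward $R(\pi)$ of an optimal policy $\pi$; since (\lpone) is a maximization, this yields $R(\pi)=OPT \le \lpone$. The construction is the obvious one: for each arm $i$ and each $u \in \S_i(T)$ I set
\[
w_u^{*} \;=\; \Pr\bigl[\pi \text{ enters state } u \text{ on arm } i\bigr], \qquad
z_u^{*} \;=\; \Pr\bigl[\pi \text{ plays arm } i \text{ while in state } u\bigr],
\]
where both probabilities are over the internal randomness of $\pi$ together with the random outcomes of the plays. By definition $w_{\rho_i}^{*}=1$ and both values lie in $[0,1]$.

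The verification of the constraints is mostly bookkeeping. First, $z_u^{*} \le w_u^{*}$ since the event ``play at $u$'' is a subevent of ``enter $u$''. Second, for any non-root $u \in \S_i(T)\setminus\{\rho_i\}$, the event of entering $u$ must be preceded by playing at some parent state $v$ of $u$ whose outcome causes the Bayesian update $v \to u$; disjointly partitioning ``enter $u$'' over the last parent $v$ and using independence of the outcome from the history given $v$ gives
\[
w_u^{*} \;=\; \sum_{v \in \S_i} \Pr[\text{play at } v]\cdot \p_{vu} \;=\; \sum_{v \in \S_i} z_v^{*}\,\p_{vu},
\]
which is the flow constraint. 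Third, $\sum_{i,u} z_u^{*}$ is exactly the expected total number of plays made by $\pi$ (linearity of expectation over the indicator of ``play at $u$ on arm $i$''); since the policy plays at most $K$ arms in each of $T$ steps, this sum is at most $KT$. Notice that nowhere did I use irrevocability or budget feasibility: (\lpone) is a relaxation that ignores these, which is exactly what makes it an upper bound.

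Finally, for the objective, I expand $R(\pi)$ by linearity of expectation over the plays. Each play of arm $i$ in state $u$ yields a random reward drawn from the posterior $\D_{iu}$; its conditional expectation is by definition $r_u = \E[D_i(\theta)\mid \S_i=u]$, which is where the martingale / Bayesian structure of the state space enters. Hence
\[
R(\pi) \;=\; \sum_{i=1}^n \sum_{u \in \S_i(T)} r_u \cdot \Pr[\text{play at } u \text{ on arm } i] \;=\; \sum_{i=1}^n \sum_{u \in \S_i(T)} r_u\, z_u^{*},
\]
which is precisely the objective value of $(w^*,z^*)$ in (\lpone). Combining, $OPT=R(\pi)\le\lpone$.

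The only step that needs genuine care, as opposed to pure accounting, is the identification of the per-play expected reward with $r_u$: this relies on the fact that the $\{r_u\}$ defined from the state-space abstraction coincide with the expected outcomes of a play at $u$ (the martingale property $r_u = \sum_v \p_{uv} r_v$ is not itself needed for this lemma, but the underlying identification of $r_u$ with $\E[D_i(\theta_i)\mid u]$ is). Everything else is linearity of expectation and a straightforward flow argument on the state-space DAG, so I do not anticipate any real obstacle.
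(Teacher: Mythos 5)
Your proposal is correct and follows essentially the same route as the paper's own proof: interpret the LP variables as the state-reaching and state-playing probabilities of the optimal policy, verify feasibility of the flow and counting constraints by linearity of expectation, and identify the objective with the expected reward via $r_u = \E[D_i(\theta)\mid u]$. Your write-up is just a more detailed version of the same argument.
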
 
\begin{proof}
  We show that the $\{w^*_u,z^*_u\}$ as defined above, corresponding
  to a globally optimal policy $\P^*$, are feasible for the
  constraints of the LP.

  The first constraint follows by linearity of expectation: The
  expected time for which arm $i$ is played by $\P^*$ is $T_i =
  \sum_{u \in \S_i(T)} z^*_u$. Since at most $K$ arms are played every
  step, we must have $\sum_i T_i \le KT$, which is the first
  constraint. Note that $T_i \le T$, since we have truncated $\S_i$ to
  have only states corresponding to at most $T$ observations.
 
  The second constraint simply encodes that the probability of
  reaching a state $u \in \S_i(T)$ precisely the probability with which
  it is played in some state $v \in \S_i(T)$, times the probability
  $\p_{vu}$ that it reaches $u$ conditioned on that play. The
  constraint $z_u \le w_u$ simply captures that the event an arm is
  played in state $u \in \S_i(T)$ only happens if the arm reaches this
  state.

  The objective is precisely the expected reward of the optimal policy
  -- recall that the reward of playing arm $i$ in state $u$ is $r_u$.
  Hence, the LP is a relaxation of the optimal policy.
\end{proof}

\noindent A similar LP formulation was proposed for the multi-armed bandit
problem by Whittle~\cite{whittle} and Bertsimas and
Nino-Mora~\cite{nino}; however, one key difference is that {\em we
  ignore the time at which the arm is played} in defining the LP variables.

\paragraph{From Linear Programs to Single Arm Policies:}
\label{sec:single} 
The optimal solution to (\lpone) clearly does not
directly correspond to a feasible policy since the variables do not
faithfully capture the joint evolution of the states of different arms
-- in particular, the optimum solution to (\lpone) enforces the horizon
$T$ only in expectation over all decision paths, and not individually
on each decision path. In this paper, we present such relaxations for
each variant we consider, and develop techniques to solve the
relaxation and convert the solution to a feasible policy while
preserving a large fraction of the reward.  Below, we present an
interpretation of any feasible solution to (\lpone), which also helps
us compact representation subsequently.

Let $\langle w_u,x_u,z_u \rangle$ denote any feasible solution to the
(\lpone). Assume w.l.o.g. that $w_{\rho_i} = 1$ for all $i$. Ignoring
the first constraint of (\lpone) for the time being, the remaining
constraints encode a separate policy $\P_i$ for each arm as follows:
Consider any arm $i$ in isolation. The play starts at state
$\rho_i$. The arm is played with probability $z_{\rho_i}$, so that
state $u \in \S_i(T)$ is reached with probability $z_{\rho_i}
\p_{\rho_i u}$. Similarly, conditioned on reaching state $u \in \S_i(T)$,
with probability $z_u/w_u$, arm $i$ is played. This yields a
policy $\P_i$ for arm $i$ which is described in Figure~\ref{fig1}.

\begin{figure*}[htbp]
\framebox{
\begin{minipage}{6.0in}
{\bf Policy $\P_i$:}
If arm $i$ is currently in state $u$, then choose $q \in [0,w_u]$ uniformly at random:
\begin{tabbing}
1.\ \ \= If $q \in [0,z_u]$, then play the arm. \\
2.\> If $q \in (z_u,w_u]$, then stop executing $\P_i$.
\end{tabbing}
\end{minipage}
}
\caption{\label{fig1} The Policy $\P_i$.}
\end{figure*}

For policy $P_i$, it is easy to see by induction that if state $u
\in \S_i(T)$ is reached by the policy with probability $w_u$, then
state $u \in \S_i(T)$ is reached {\em and} arm $i$ is played with
probability $z_u$. Therefore, $P_i$ satisfies all the constraints
of (\lpone) except the first. The first constraint 
simply encodes that the total expected number of plays made by all
these single-arm policies $\{P_i\}$ is at most $KT$.

Note that any feasible solution of (\lpone) defines a collection of
single arm policies $\{\P_i\}$ and each $\P_i \in \C_i(T)$ as
defined in Definition~\ref{singlearmdef}. 
Using the notation in Definition~\ref{notationdef},
$ R(P_i) = \sum_{u \in \S_i(T)} r_u z_u$ and $\T(P_i) = \sum_{u \in \S_i(T)} z_u$ we can represent (\lpone) as follows:

{
\[ \lpone =  \mbox{Max}_{\{P_i \in \C_i(T)\}} \left\{  \sum_i R(P_i) \ |\   \sum_i  \T(P_i)  \leq  KT \right \} \]
}

\subsection{The Idea of Truncation}
\label{part1}
We now show that the time horizon of a single-arm policy on $T$ steps
can be reduced to $\beta T$ for constant $\beta \le 1$ while
sacrificing a constant factor in the reward. We note that though the
statement seems simple, this theorem {\em only} applies to single-arm
policies and is {\em not true} for the global policy executing over
multiple arms.  The proof of this theorem uses the martingale
structure of the rewards, and proceeds via a stopping time argument.
A similar statement is proved in \cite[Lemma 2]{farias} using an
inductive argument; however that exact proof would require
modifications in the varied setting we are applying it and therefore
we restate and prove the theorem.

\begin{definition}
Given a policy $\P$, a {\bf decision path} is a sequence of (observed reward, action) pairs encountered by the policy till stopping. Different decision paths are encountered by the policy with different probabilities, which depend on the observed rewards.
\end{definition}

\noindent Recall the notation $R(\P),\T(\P)$ introduced in Definition~\ref{notationdef}.

\begin{theorem}
\label{statetheorem}(The Truncation Theorem.)
Consider an arbitrary single arm policy $\P$  executing over state space $\S$ defined by $i.i.d.$ draws from a reward distribution $D(\theta)$, where a prior distribution $\D$ is specified over the unknown parameter $\theta$. Suppose $\E[D(\theta)] \ge 0$ for all parameter choices $\theta$. Consider a different policy $\P'$ that executes the decisions of $\P$ but stops after making (at least) $\beta$ fraction of the plays on any decision path  of $\P$. Then (i) $R(\P') \geq \beta R(\P)$ and (ii) $\T(\P') \leq \T(\P)$.
\end{theorem}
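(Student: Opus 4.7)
The plan is to specialize $\P'$ to the natural online choice of truncating $\P$ to a fixed horizon of $T^{\ast} := \lceil \beta T \rceil$ plays, where $T$ is the maximum length of any decision path of $\P$. On any decision path of length $L \le T$, this $\P'$ makes $\min(L, T^{\ast})$ plays, which is at least $\beta L$: either $L \le T^{\ast}$ so all plays are retained, or $L > T^{\ast}$ in which case $T^{\ast} \ge \beta T \ge \beta L$. Part (ii) is then immediate, since $\P'$ stops no later than $\P$ on every decision path, so $\T(\P') \le \T(\P)$.

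For part (i), let $v_j$ denote the (random) state of the arm immediately before its $j$-th play, $X_j$ the observed reward of that play, and $\tau$ the total number of plays made by $\P$. Using $\E[X_j \mid v_j] = r_{v_j}$ together with the tower property, write
\[
R(\P) \;=\; \E\!\left[\sum_{j=1}^{\tau} r_{v_j}\right] \;=\; \sum_{j=1}^{T} g(j), \qquad g(j) \;:=\; \E\!\left[r_{v_j}\,\mathbf{1}\{\tau \ge j\}\right],
\]
and by the same computation $R(\P') = \sum_{j=1}^{T^{\ast}} g(j)$. If $g(\cdot)$ can be shown to be non-increasing, then the average of its first $T^{\ast}$ terms is at least the average of all $T$ terms, giving
$R(\P') \ge (T^{\ast}/T)\, R(\P) \ge \beta R(\P)$.

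The crux of the argument, and the main technical obstacle, is therefore to prove $g(j+1) \le g(j)$; this is precisely where the martingale property of the rewards is essential. Since $r_{v_{j+1}} \ge 0$ (because $\E[D(\theta)] \ge 0$ survives conditioning on any posterior) and $\mathbf{1}\{\tau \ge j+1\} \le \mathbf{1}\{\tau \ge j\}$, we get the weakening $g(j+1) \le \E\!\left[r_{v_{j+1}}\,\mathbf{1}\{\tau \ge j\}\right]$. The key observation is that the event $\{\tau \ge j\}$ is measurable with respect to $\sigma(v_1,\ldots,v_j)$: whether the policy makes the $j$-th play depends only on the states reached before that play. Conditioning on this $\sigma$-algebra and invoking the martingale identity $\E[r_{v_{j+1}} \mid v_1,\ldots,v_j] = r_{v_j}$ yields
\[
\E\!\left[r_{v_{j+1}}\,\mathbf{1}\{\tau \ge j\}\right] \;=\; \E\!\left[\mathbf{1}\{\tau \ge j\}\, \E[r_{v_{j+1}} \mid v_1,\ldots,v_j]\right] \;=\; \E\!\left[r_{v_j}\,\mathbf{1}\{\tau \ge j\}\right] \;=\; g(j),
\]
so $g(j+1) \le g(j)$, which combined with the averaging argument above completes the proof.
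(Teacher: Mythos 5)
Your argument is correct as far as it goes, and it is a genuinely different route from the paper's: you prove the structural fact that the expected marginal reward of the $j$-th play, $g(j)=\E[r_{v_j}\mathbf{1}\{\tau\ge j\}]$, is non-increasing in $j$ (via the martingale identity $\E[r_{v_{j+1}}\mid v_1,\dots,v_j]=r_{v_j}$ together with $r_u\ge 0$ and measurability of $\{\tau\ge j\}$ with respect to the pre-play history), and then use the prefix-average inequality for non-increasing sequences. The paper instead conditions on the latent parameter $\theta$ (equivalently its mean $\mu$): conditioned on $\mu$, every play has expected reward exactly $\mu$ regardless of the posterior state, so by a Wald-type accounting the reward of any execution is $\mu$ times its expected number of plays, and the comparison reduces to counting plays path by path. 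Your monotonicity lemma is a nice standalone statement (a ``diminishing returns over plays'' property of martingale bandits), but the paper's conditioning trick is shorter and, more importantly, pathwise.

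That last point is where your proof falls short of the theorem as stated. You explicitly specialize $\P'$ to the deterministic truncation after $T^*=\lceil\beta T\rceil$ plays, whereas the theorem asserts the bound for \emph{any} $\P'$ that retains at least a $\beta$ fraction of the plays on \emph{each} decision path -- the stopping point of $\P'$ is allowed to depend on the observations, so long as the $\beta$-fraction condition holds path by path. Your averaging argument does not cover this: for an adaptive truncation with stopping time $\tau'$, the reward is $\sum_j\E[r_{v_j}\mathbf{1}\{\tau'\ge j\}]$, and monotonicity of $g$ alone does not rule out a truncation that preferentially cuts plays on decision paths where the posterior mean is high. The paper's proof handles this automatically, since conditioned on $\mu$ every retained play is worth $\mu$ no matter which path it lies on. Your special case does suffice for every invocation of the theorem in the paper (all truncations there are to a fixed remaining horizon), but to prove the statement in the generality claimed you would need to add the conditioning-on-$\theta$ step or an equivalent optional-stopping argument.
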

\begin{proof}
  We view the system as follows. The unknown reward distribution
  $D(\theta)$ has mean $\mu(\theta)$, and the initial prior $\D$ over
  $\theta$ specifies a prior distribution $f(\mu)$ over possible
  $\mu$.  Consider the execution of $\P$ conditioned on the event that
  the (unknown) mean is $\mu$, and denote this execution as $\P(\mu)$.
  Let $R(\P(\mu)), \T(\P(\mu))$ denote the expected reward and the
  number of plays of the policy $\P$ conditioned on this event. We
  have $R(\P) = \int R(\P(\mu)) f(\mu) d\mu$ and $\T(\P) = \int
  \T(\P(\mu)) f(\mu) d\mu$.

Suppose the execution $\P(\mu)$ reaches some state $v \in \S$ on
decision path $q$ and decides to make a play there. Since the rewards
are $i.i.d.$ draws, the expected reward of the next play at $v$ is
$\mu$ {\em regardless} of the state. We charge the reward to the path
$q$ as follows: Regardless of the actual observation at state $v$
corresponding to path $q$, we charge reward exactly equal to $\mu$ to
$q$ for making a play at state $v$. In other terms, we charge reward
exactly $\mu$ whenever a play is made is any state, regardless of the
actual observation. It is clear that the expected reward of any play
is preserved by this accounting scheme, since the rewards are $i.i.d.$
draws from a distribution with mean $\mu$.

By linearity of expectation, the above charging scheme means that for
any decision path $q$, the expected reward is simply $\mu$ times the
number of plays made on this path. More formally, suppose decision
path $q$ involves $l(q)$ plays and that is taken by $\P(\mu)$ with
probability $g(q)$. Then
$$R(\P(\mu)) = \sum_q \mu \times l(q) \times g(q) \qquad \mbox{and} \qquad \T(\P(\mu)) = \sum_q l(q) \times g(q) $$

For any $\mu$, the policy $\P'(\mu)$ encounters the same distribution
over decision paths as $\P(\mu)$, except that on any decision path,
the execution stops after making at least $\beta$ fraction of the
plays. This means that for any decision path $q$ of $\P(\mu)$, the
execution $\P'(\mu)$ makes at least $\beta l(q)$ plays which yield at
least $\beta \mu l(q)$ expected reward. The theorem now follows by
integrating over all $q$ and $\mu$.
\end{proof}

Given the equivalence of randomized policies and fractional solutions
to the relaxation (\lpone), the
relaxation has a compact representation: Simply find one single-arm
policy for each arm, so that the resulting ensemble of policies have
expected number of plays at most $KT$, and maximum expected reward.
This yields the following equivalent version:

{
\[ \lpone =  \mbox{Max}_{\{P_i \in \C_i(T)\}} \left\{  \sum_i R(P_i) \ |\   \sum_i  \T(P_i)  \leq  KT \right \} \]
}

In subsequent sections, wherever possible, we work with the compact LP
formulation directly. We note that the single-arm policies in these
sections have a richer set of actions, and operate on a richer
state-space; nevertheless, the derivation of the LP relaxation will be
nearly identical to that described in this section. We point out
differences as and when appropriate.

\subsection{A $2$-Approximation using Irrevocable Policies and a Tight Lower Bound}
\label{simple}
\newcommand{\finlp}{\mbox{\sc OPT}}
\newcommand{\G}{\mathcal{G}}
\renewcommand {\p}{\mathbf{p}} 

Recall that the optimum solution of (\lpone), as interpreted in
Figure~\ref{fig1}) will be collection of single arm policies
$\{\P^*_i\}$ such that $\lpone \leq \sum_i R(\P^*_i)$ and $\sum_i
\T(\P^*_i) \le KT$. Such a solution of (\lpone) can be found in
polynomial time. 
Given a collection of single arm policies $\P_i$ which satisfy $\sum_i \T(\P_i) \leq KT$ we introduce the final scheduling 
policy as shown in Figure~\ref{fh}.

\begin{figure*}[ht]
\fbox{
\begin{minipage}{6.0in}
{\small
{\bf The Combined Final Policy}
\begin{enumerate}
\item Solve (LP1) to obtain a collection of single-arm policies $\P_i$.
\item Order the arms in order of $\frac{R(\P_i)}{\T(\P_i)}$. \label{needno}
\item Start with the first $K$ policies in the order specified in Step~\ref{needno}. These policies are inspected; the remaining policies are uninspected.
\begin{enumerate} 
\item If the decision in $\P_i$ is to quit, then move to the first uninspected arm in the order (say $\P_j$) and start executing $\P_j$. This is similar to scheduling $K$ parallel machines. 
\item If the horizon $T$ is reached, the overall policy stops execution. 
Note $KT \geq \sum_i \T(\P_i)$.
\end{enumerate}
\end{enumerate} }
\end{minipage}
}
\caption{The Final Policy for the Finite Horizon MAB Problem\label{fh}}
\end{figure*}

\begin{figure*}[ht]
\begin{centering}
\includegraphics[scale=0.5]{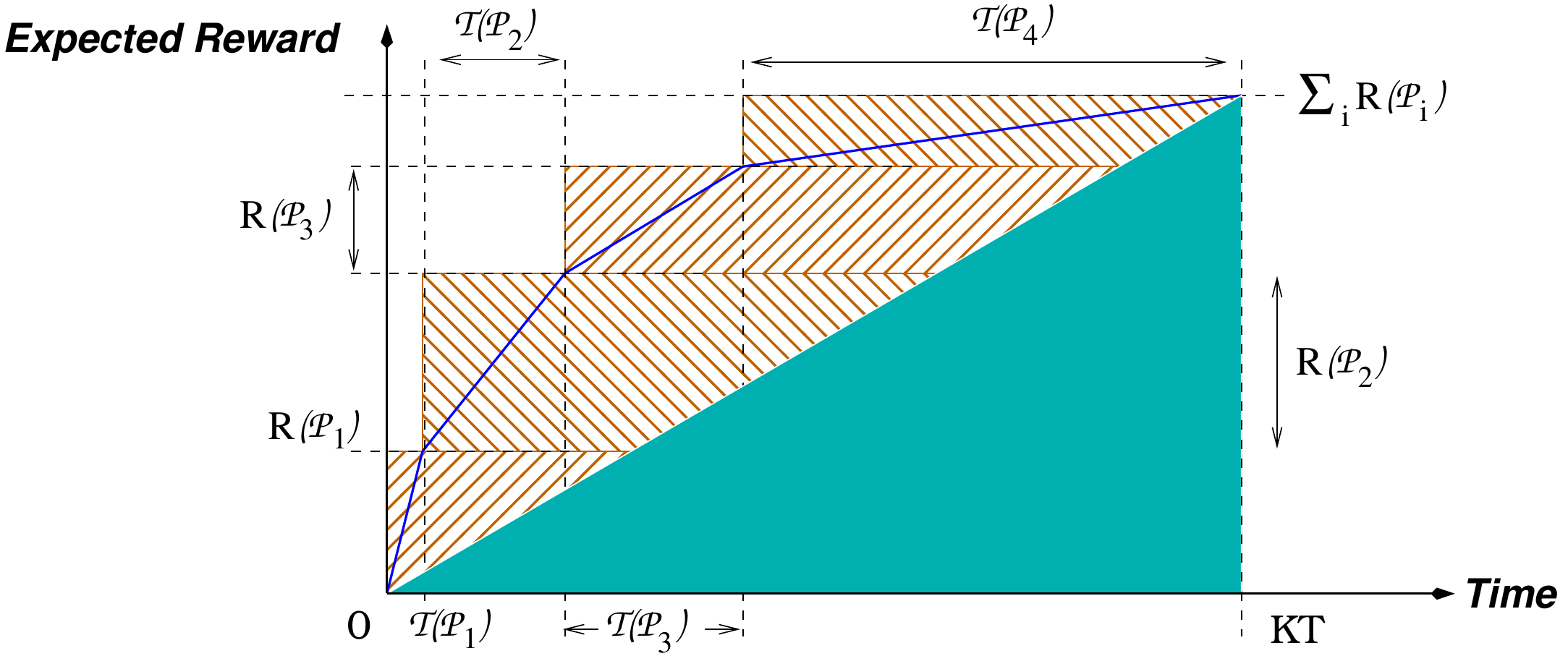}
\caption{\label{concavechainfig} The accounting process of Lemma~\ref{greedyorder} explained pictorially.}
\end{centering}
\end{figure*}

\begin{lemma}
\label{greedyorder}
The policy outlined in Figure~\ref{fh} provides an 
expected reward of at least $\frac12 \sum_i R(\P_i)$.
\end{lemma}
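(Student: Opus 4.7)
The plan is to combine the Truncation Theorem (Theorem~\ref{statetheorem}) with Markov's inequality and a greedy interchange argument to obtain the factor $\frac{1}{2}$.

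First I set up the concave chain that organizes the accounting. Sort the arms so that $v_1 \geq v_2 \geq \cdots$ where $v_i := R(\P_i)/\T(\P_i)$, and let $S_i := \sum_{j \leq i} \T(\P_j)$; by the feasibility assumption of the lemma, $S_n \leq KT$. Plotting cumulative reward $\sum_{j \leq i} R(\P_j)$ against cumulative expected work $S_i$ and interpolating linearly yields a piecewise-linear concave curve with slopes $v_1, v_2, \ldots$ (Figure~\ref{concavechainfig}); its concavity is the structural property of the greedy ordering that drives the whole argument.

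Next I lower bound the contribution of each arm. Let $\sigma_i$ be the random time at which arm $i$'s execution begins on its assigned machine, with $\sigma_i = \infty$ if arm $i$ is never started. Because the scheduler in Figure~\ref{fh} always picks the lightest of the $K$ machines, $\sigma_i$ is at most the average load, i.e.\ pathwise $\sigma_i \leq \frac{1}{K}\sum_{j<i} \tau_j$, where $\tau_j$ is the (random) length of a full run of $\P_j$; taking expectation, $E[\sigma_i] \leq S_{i-1}/K$. Conditioned on $\sigma_i$, the scheduler runs $\P_i$ for a deterministic budget $(T-\sigma_i)_+$, independent of the internal randomness of $\P_i$. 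Since every decision path of $\P_i$ has length at most $T$ (because $\S_i$ is truncated to horizon $T$), running for time $s$ completes at least an $s/T$ fraction of the plays on every decision path, and the Truncation Theorem gives $E[\text{reward from arm }i \mid \sigma_i] \geq \tfrac{(T-\sigma_i)_+}{T} R(\P_i)$. Averaging, and using $E[(T-\sigma_i)_+] \geq T - E[\sigma_i]$,
\[
  E[\text{reward from arm }i] \;\geq\; R(\P_i)\Bigl(1 - \frac{S_{i-1}}{KT}\Bigr).
\]

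Finally I close the argument by summing and exploiting the greedy order. The total expected reward is at least $\sum_i R(\P_i) - \tfrac{1}{KT}\sum_{j<i} \T(\P_j) R(\P_i)$. The ordering $v_j \geq v_i$ for $j<i$ is equivalent to $\T(\P_j) R(\P_i) \leq \T(\P_i) R(\P_j)$, so symmetrizing the ordered sum over unordered pairs gives
\[
  \sum_{j<i} \T(\P_j) R(\P_i) \;\leq\; \tfrac{1}{2}\Bigl(\textstyle\sum_j \T(\P_j)\Bigr)\Bigl(\textstyle\sum_i R(\P_i)\Bigr) \;\leq\; \tfrac{KT}{2}\sum_i R(\P_i),
\]
and substituting yields the claimed $\tfrac{1}{2}\sum_i R(\P_i)$ bound.

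The main obstacle will be applying the Truncation Theorem across a random stopping time that depends on the histories of the prior arms. The resolution is the independence between $\sigma_i$ and the internal randomness of $\P_i$: conditioning on $\sigma_i$ makes the remaining time budget $T-\sigma_i$ a deterministic quantity from $\P_i$'s viewpoint, which permits a clean per-decision-path application of Theorem~\ref{statetheorem} before averaging back. The simple bound $\sigma_i \leq \tfrac{1}{K}\sum_{j<i}\tau_j$ provided by list scheduling on the lightest machine is what makes the $K$-machine analysis work with no extra loss beyond the single-machine case.
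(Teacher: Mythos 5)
Your proposal is correct and follows essentially the same route as the paper's proof: bound each arm's start time by the average load $\frac{1}{K}\sum_{j<i}T_j$, invoke the Truncation Theorem (using independence of arm $i$ from the preceding arms) to get the per-arm factor $1 - \frac{\sum_{j<i}\T(\P_j)}{KT}$, and then use the greedy ordering by $R(\P_i)/\T(\P_i)$ to extract the factor $\tfrac12$. The only cosmetic difference is that you finish with an algebraic pairwise symmetrization, $\T(\P_j)R(\P_i) \le \T(\P_i)R(\P_j)$ for $j<i$, whereas the paper argues the same inequality pictorially via the concave staircase containing a triangle of area $\tfrac12 KT\sum_i R(\P_i)$.
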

\begin{proof}
Let the number of plays of arm $i$ be $T_i$. We know $\E[T_i] =
\T(\P_i)$ and $\sum_i \T(\P_i) \leq KT$. We start playing arm $i$ after
$\sum_{j < i} T_j$ plays (if the sum is less than $T$); which means 
the remaining horizon for $\P_i$ is $T  - \min\{T, \frac1K \sum_{j<i} T_j \}$.
We apply the Truncation Theorem~\ref{statetheorem} with $\beta=1 - \frac{1}{T}
\min \{ T,\frac1K \sum_{j<i} T_j \}$ and the expected reward of
$\P_i$ continuing from that starting point onward is $ \left (1 - \frac{1}{T}
\min \{ T,\frac1K\sum_{j<i} T_j \} \right)R(\P_i)$. Note that this is a
consequence of the independence of arm $i$ from $\sum_{j<i} T_j$.
Thus the total expected reward $R$ is
{\small
\begin{eqnarray} 
R  & = & \E \left[ \sum_i \left(1 - \frac{1}{T} \min \left \{ T, \frac1K \sum_{j<i} T_j \right\} \right)R(\P_i) \right] \geq  
\E \left[ \sum_i \left(1 - \frac{1}{KT}\sum_{j<i} T_j \right)R(\P_i) \right] \nonumber \\
& = & 
\sum_i \left (1 - \frac{1}{KT}\sum_{j<i} \E[T_j] \right)R(\P_i) =
\sum_i \left (1 - \frac{\sum_{j<i} \T(\P_j)}{KT} \right)R(\P_i) \label{mainequation} 
\end{eqnarray}
}
The bound in Equation~\ref{mainequation} is best represented
pictorially; for example consider Figure~\ref{concavechainfig}.
Equation~\ref{mainequation} indicates that $R$ is $\frac1{KT}$
fraction of the shaded area in Figure~\ref{concavechainfig}, which
contains the triangle of area $\frac12 KT \sum_i R(\P_i)$.
Therefore $R \geq \frac12 \sum_j R(\P_j)$ and the lemma follows.
\end{proof}

\begin{theorem}
\label{unitmab}
There exists a simple $2$-approximation for the finite horizon Bayesian MAB  problem (with budgets and arbitrary priors) using an irrevocable scheduling policy. Such a policy can be found in polynomial time (solving a linear program).
\end{theorem}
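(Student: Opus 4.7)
The theorem essentially packages together the ingredients already developed in Sections \ref{weak1}--\ref{simple}, so my plan is to simply verify that those pieces compose into an irrevocable, polynomial-time, factor-$2$ policy.

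First I would solve (\lpone) to obtain a collection of single-arm policies $\{\P_i\}$ with $\sum_i R(\P_i) \ge \lpone \ge OPT$ (by Claim~\ref{claim1}) and $\sum_i \T(\P_i) \le KT$. This LP is of polynomial size in $T$ and $\sum_i |\S_i|$, so it can be solved in polynomial time by standard LP algorithms; alternatively one can use the interpretation from Figure~\ref{fig1} to read off the per-arm policies directly. This handles the computational claim. (Sections following this one describe the near-linear-time Lagrangian variant, but for the basic $2$-approximation statement it suffices to invoke polynomial-time LP solvability.)

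Next I would feed these single-arm policies into the scheduler of Figure~\ref{fh}, ordered by decreasing $R(\P_i)/\T(\P_i)$, which runs exactly $K$ arms in parallel, replaces each completed arm with the next uninspected one, and halts at horizon $T$. Irrevocability is immediate from the construction: a single-arm policy $\P_i$ is only started when a slot becomes free and, once started, is executed contiguously until $\P_i$ itself decides to quit or the horizon is reached; no arm is ever resumed after being stopped. Feasibility (at most $K$ arms per step, horizon $T$) is also built into the scheduler.

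Finally I would invoke Lemma~\ref{greedyorder}, which gives that the expected reward of this schedule is at least $\tfrac12 \sum_i R(\P_i) \ge \tfrac12 \, OPT$, establishing the $2$-approximation. I do not anticipate a hard step here: all the analytical work (the LP upper bound, the truncation argument that survives mid-stream starts, and the area/triangle accounting for the greedy ordering) has already been carried out. The only subtle point worth a sentence in the write-up is justifying that Lemma~\ref{greedyorder}'s reward bound, which was derived under the assumption that each $\P_i$ can be truncated at an arbitrary random starting offset, applies equally well to the irrevocable schedule; this follows because the Truncation Theorem~\ref{statetheorem} is stated precisely for single-arm policies in isolation and the independence of the stopping time $\sum_{j<i} T_j$ from arm $i$ is what the lemma uses.
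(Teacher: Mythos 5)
Your proposal is correct and follows essentially the same route as the paper: solve (\lpone) to obtain feasible single-arm policies with $\sum_i R(\P_i) = \lpone \ge OPT$ and $\sum_i \T(\P_i) \le KT$, schedule them greedily by $R(\P_i)/\T(\P_i)$ as in Figure~\ref{fh}, and invoke Lemma~\ref{greedyorder} to retain half the LP value. The paper's own proof is exactly this two-line composition, and your added remarks on irrevocability and on why the Truncation Theorem justifies the lemma's accounting are accurate.
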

\begin{proof}
  Given a collection of single arm policies $\{\P^*_i\}$ which
  correspond to the optimum solution of (\lpone), we apply
  Lemma~\ref{greedyorder} with $\P_i = \P^*_i$; since the collection
  $\{\P^*_i\}$ is feasible, namely, $\sum_i \T(\P^*_i) \leq KT$.
  Therefore the expected reward is at least $\frac12 \sum_i R(\P^*_i)
  \geq \frac12\lpone$ and the theorem follows.
\end{proof}

\paragraph{Tight example of the analysis.}
We show that the gap of the optimum policy and LP1 is a
factor of $2 - O(\frac1n)$, even for unit length plays. Consider the
following situation: We have two ``types'' of arms. The type I arm
gives a reward $0$ with probability $a = 1/n$ and $1$ otherwise. The type
II arm always gives a reward $0$.  We have $n$ independent arms. Each
has an identical prior distribution of being type I with probability
$p = 1/n^2$ and type II otherwise. Set $T = n$.

Consider the symmetric LP solution that allocates one play to each
arm; if it observed a $1$, it plays the arm for $n$ steps. The
expected number of plays made is $n + O(1/n)$, and the expected reward
is $n \times 1/n = 1$. Therefore, $\lpone \ge 1 - O(1/n)$.

Consider the optimum policy. We first observe that if the policy ever
sees a reward $1$ then the optimum policy has found one of the type II
arms, and the policy will continue to play this arm for the rest of
the time horizon. At any point of time before the time horizon, since
$T=n$, there is always at least one arm which has not been played
yet. Suppose the policy plays an arm and observe the reward $0$, then the
posterior probability of this arm being type II increased. So the
optimum policy should not prefer this currently played arm over an
unplayed arm. Thus the optimum policy would be to order the arms
arbitrarily and make a single play on every new arm. If the outcome is
$0$, the policy quits, otherwise the policy keeps playing the arm for
the rest of the horizon. The reward of the optimum policy can thus be
bounded by $\sum_{x=0}^{T-1} ap(1 + (T-x-1)a) = pa^2T(T+1)/2 +
(1-a)/n= \frac12 + O(\frac1n)$. Thus the gap is a factor of $2 -
O(\frac1n)$.

\subsection{Weak Coupling, Efficient Algorithms and Compact Representations}
\label{sec:compact}
\label{sec:lagindex}

We outline how to solve (\lpone) efficiently using a standard
application of weak duality. Recall,

{
\[ \lpone =  \mbox{Max}_{\{P_i \in \C_i(T)\}} \left\{  \sum_i R(P_i) \ |\   \sum_i  \T(P_i)  \leq  KT \right \} \]
}

\noindent We take the Lagrangian of the coupling
constraint to obtain:

{
\[ \mbox{\sc LPLag}(\lambda) = \mbox{Max}_{\{P_i \in \C_i(T)\}} \left\{ K T \lambda +  \sum_i \left( R(P_i) - \lambda \T(P_i) \right) \right\} \]
}

Note now that there are no constraints connecting the arms, so that the optimal policy is obtained by solving {\sc LPLag}$(\lambda)$ separately for each arm.

\begin{definition}
Let $Q_i(\lambda) = \mbox{Max}_{P_i \in \C_i(T)} R(P_i) - \lambda \T(P_i)$ 
denote the optimal solution to $\lplag(\lambda)$ restricted to arm 
$i$, so that $\lplag(\lambda) = KT \lambda+   \sum_i Q_i(\lambda)$. 
Let $\L_i(\lambda)$ denote the corresponding optimal policy for arm $i$.
As a convention if $Q_i(\lambda)=0$ then we choose $\L_i(\lambda)$ to be the trivial policy which does nothing.
 \end{definition}

\begin{lemma}
\label{weakduality}
For any $\lambda\geq 0$ we have $\lplag(\lambda) = \lambda KT + \sum_i Q_i(\lambda) \geq \lpone \geq OPT$.
\end{lemma}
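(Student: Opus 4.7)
The plan is to invoke the standard weak duality reasoning: the Lagrangian is obtained by pricing out the single coupling constraint, so any feasible solution of (\lpone) gives a value no larger than $\lplag(\lambda)$, and then the inequality $\lpone \geq OPT$ is just Claim~\ref{lem:key} restated.

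More concretely, I will proceed in two short steps. First, I would fix any $\lambda \geq 0$ and take an arbitrary collection $\{P_i \in \C_i(T)\}$ that is feasible for the compact form of (\lpone), namely satisfies $\sum_i \T(P_i) \leq KT$. Because $\lambda \geq 0$ and $KT - \sum_i \T(P_i) \geq 0$, I can add the nonnegative quantity $\lambda\bigl(KT - \sum_i \T(P_i)\bigr)$ to the objective without decreasing it, obtaining
\[
\sum_i R(P_i) \;\leq\; \lambda KT + \sum_i \bigl(R(P_i) - \lambda \T(P_i)\bigr).
\]
Next I would bound each summand on the right by its arm-wise maximum: by definition of $Q_i(\lambda)$, we have $R(P_i) - \lambda \T(P_i) \leq Q_i(\lambda)$, so the right-hand side is at most $\lambda KT + \sum_i Q_i(\lambda) = \lplag(\lambda)$. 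Taking the supremum of the left-hand side over all feasible collections $\{P_i\}$ (which equals $\lpone$ in its compact form) yields $\lpone \leq \lplag(\lambda)$.

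The second inequality $\lpone \geq OPT$ is precisely Claim~\ref{lem:key}, which constructed a feasible solution to (\lpone) from the optimal global policy whose objective equals the expected reward $OPT$. Combining the two inequalities gives the claim.

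There is no real obstacle here; the only point worth being careful about is that the Lagrangian is written as an unconstrained maximum over independent per-arm policies, which is legitimate precisely because once the single coupling constraint is removed the optimization decomposes across arms, allowing the max to commute with the sum and letting us replace $R(P_i) - \lambda\T(P_i)$ by $Q_i(\lambda)$ on each arm separately.
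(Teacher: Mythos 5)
Your proposal is correct and is exactly the standard weak duality argument that the paper itself invokes (the paper simply states the lemma is ``an easy consequence of weak duality'' without spelling out the details, and also relies on Claim~\ref{lem:key} for $\lpone \geq OPT$). Your write-up just makes the implicit steps explicit --- pricing out the coupling constraint with the nonnegative multiplier, decomposing the unconstrained maximum across arms, and bounding each summand by $Q_i(\lambda)$ --- so there is nothing to add or correct.
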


\noindent The above Lemma is an easy consequence of weak duality.
We now compute $Q_i(\lambda)$.
 
\begin{lemma}
\label{mainlem}
$Q_i(\lambda)$ and the corresponding single-arm policy $\L_i(\lambda)$
(completely specifying an optimum solution of {\sc LPLag}$(\lambda)$)
can be computed in time $O(\edge(\S_i))$. 
For $\lambda \geq 0$, $R(\L_i(\lambda)), \T(\L_i(\lambda))$ 
and $Q_i(\lambda)$ are non-increasing as $\lambda$ increases.
\end{lemma}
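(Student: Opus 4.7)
The plan is to reduce $\lplag(\lambda)$ restricted to arm $i$ to a value-function dynamic program on the DAG $\S_i(T)$, and then derive monotonicity both from the monotonicity of the value function itself and from the standard two-point Lagrangian exchange inequality.

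For part (i), a single-arm policy in $\C_i(T)$ is a mapping from each reachable $u \in \S_i(T)$ to an action in $\{\mbox{play},\ \mbox{stop}\}$, and $\S_i(T)$ is a DAG with edges $\{(u,v) : \p_{uv} > 0\}$, so I would run backward induction. Define
\[ V_u(\lambda) \;=\; \max\!\left(0,\ (r_u - \lambda) + \sum_{v \in \S_i(T)} \p_{uv}\, V_v(\lambda)\right), \]
setting $V_u(\lambda) = 0$ at states on the boundary of $\S_i(T)$ (i.e., states from which another play would exceed the horizon). The first branch corresponds to stopping; the second to playing, paying Lagrangian cost $\lambda$, collecting reward $r_u$, and continuing optimally. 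Processing states in reverse topological order visits each edge exactly once, so the work is $O(\edge(\S_i))$. I take $Q_i(\lambda) = V_{\rho_i}(\lambda)$ and read $\L_i(\lambda)$ off the argmax, with the convention of stopping on ties (so the case $Q_i(\lambda) = 0$ naturally yields the trivial policy, matching the stated convention).

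For part (ii), monotonicity of $Q_i(\lambda)$ follows by induction on the DAG: $V_u \equiv 0$ at the boundary, and if every child value $V_v(\lambda)$ is non-increasing in $\lambda$, then $(r_u - \lambda) + \sum_v \p_{uv} V_v(\lambda)$ is non-increasing (strictly from the $-\lambda$ summand, non-increasing in the rest), and $\max(0,\cdot)$ preserves that, so $V_u(\lambda)$ is non-increasing; applying this at $u = \rho_i$ gives the claim for $Q_i$. For $\T(\L_i(\lambda))$ and $R(\L_i(\lambda))$ I would use the exchange argument. Abbreviating $P_j = \L_i(\lambda_j)$ for $0 \le \lambda_1 < \lambda_2$, optimality at each $\lambda_j$ gives
\[ R(P_1) - \lambda_1 \T(P_1) \;\ge\; R(P_2) - \lambda_1 \T(P_2) \qquad\text{and}\qquad R(P_2) - \lambda_2 \T(P_2) \;\ge\; R(P_1) - \lambda_2 \T(P_1); \]
adding yields $(\lambda_2 - \lambda_1)(\T(P_1) - \T(P_2)) \ge 0$, so $\T(P_1) \ge \T(P_2)$, and substituting back into the first inequality together with $\lambda_1 \ge 0$ gives $R(P_1) \ge R(P_2)$.

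The main obstacle is tie-breaking at the (finitely many) breakpoints of $\lambda$ where the single-arm Lagrangian has multiple maximizers: monotonicity of $R(\L_i(\lambda))$ and $\T(\L_i(\lambda))$ pointwise in $\lambda$ requires a consistent selection rule. I would fix the DP rule to ``stop whenever indifferent'' (equivalently, take the lower envelope of the argmax correspondence), which makes both $R(\L_i(\lambda))$ and $\T(\L_i(\lambda))$ left-continuous; the DP-based monotonicity of $V_u$ and the two-point exchange inequality both go through verbatim under this convention.
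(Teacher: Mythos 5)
Your proof is correct, and the dynamic program in part (i) (backward induction on the DAG with value $\max(0,\, r_u-\lambda+\sum_v \p_{uv}V_v)$, one pass over the edges) is exactly the paper's construction, as is the inductive argument that each $V_u(\lambda)$, hence $Q_i(\lambda)$, is non-increasing. Where you genuinely diverge is the monotonicity of $R(\L_i(\lambda))$ and $\T(\L_i(\lambda))$: the paper argues structurally that the set of states at which the optimal policy plays is nested as $\lambda$ decreases (``the decision to play at $\lambda'$ also implies a decision to play at $\lambda<\lambda'$,'' which follows from the pointwise monotonicity of the Gain values), so that the larger-$\lambda$ policy is a restriction of the smaller-$\lambda$ one and both $\T$ and $R$ can only shrink (the latter using $r_u\ge 0$). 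You instead use the two-point Lagrangian exchange inequality, which buys you something: it is agnostic to which optimal policy is selected at each $\lambda$ and does not need the nestedness of the argmax sets at all, so your worry about tie-breaking at breakpoints is actually already handled by the exchange argument itself (for $\lambda_1<\lambda_2$ strictly, any pair of maximizers satisfies $\T(P_1)\ge\T(P_2)$ and then $R(P_1)\ge R(P_2)$ using $\lambda_1\ge0$); the ``stop when indifferent'' convention is harmless but not needed. The paper's nestedness observation, on the other hand, is what makes the complementary-slackness remark after the lemma ($z_u=0$ or $z_u=w_u$) transparent, so each argument illuminates a slightly different facet of the same structure.
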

\begin{proof}
  We use a straightforward bottom up dynamic program over the DAG
  represented by $\S_i(T)$ which is the statespace $\S_i$ restricted
  to a horizon $T$.

  Let Gain$(u)$ to be the maximum of the objective of the single-arm
  policy conditioned on starting at $u \in \S_i(T)$. If $u$ has no
  children, then if we ``play'' at node $u$, then Gain$(u)= r_u
  -\lambda$ in this case.  Stopping corresponds
  to Gain$(u)=0$.  Therefore we set Gain$(u)=\max \{ 0, r _u -\lambda
  \}$ in this case.
If $u$ had children, playing corresponds to a gain of $r_u -\lambda + \sum_v \p_{uv} \mbox{Gain}(v)$.  Therefore:

{
 \[ \mbox{Gain}(u) = \mbox{Max} \left\{ 0, \ \  r_u -\lambda  + \sum_v \p_{uv} \mbox{Gain}(v) \right\} \] 
}

The policy $P_u$ constructed by the dynamic program rooted
at $u \in S_i(T)$ satisfies the invariant $R(P_u)=\mbox{Gain}(u) +
\lambda \T(P_u)$. This immediately implies Gain$(\rho_i)=Q_i(\lambda)$. 
Note that we can ensure that Gain$(u)=0$ is obtained by the trivial policy at $u$ of doing nothing.

Moreover the decision to play at $\lambda'$ also implies a decision to
play at $\lambda < \lambda'$. This trivially implies that
$\T(\L_i(\lambda)),R(\L_i(\lambda))$ is non-increasing as $\lambda$
increases.  Likewise, observe that for every $u \in \S_i(T)$, we have
$\mbox{Gain}(u)$ is nonincreasing as $\lambda$ increases, and
therefore $Q_i(\lambda)$ is nonincreasing.
\end{proof}
\noindent In terms of the variables of the original (\lpone), $Q_i(\lambda)$ is defined as follows:
\begin{equation}
\begin{array}{rcll}
\displaystyle Q_i(\lambda) & = & \displaystyle \mbox{Max} \sum_{i=1}^n \sum_{u \in \S_i(T)} (r_u - \lambda )z_u & \\ 
 \sum_{v \in \S_i} z_v \p_{vu} & = & w_u & \forall i,  u \in \S_i(T) \setminus \{\rho_i\} \\
z_u & \le & w_u & \forall u \in \S_i(T), \forall i\\
z_u, w_u & \in & [0,1] & \forall u \in \S_i(T), \forall i\\
\end{array} \label{qidef}
\end{equation}
\noindent The solution presented in Lemma~\ref{mainlem} shows that the
optimum policy $\L_i(\lambda)$ satisfies $z_u=0$ (no play, or
\mbox{Gain}$(u)=0$) or $z_u=w_u$ (play or\mbox{Gain}$(u)>0$); which
are to expected using complementary slackness \cite{book}. 
By a standard application of weak duality (see for
instance~\cite{JainV99}), a $(1+\epsilon)$ approximate solution to
$(LP1)$ can be obtained by taking a convex combination of the
solutions to {\sc LPLag}$(\lambda)$ for two values $\lambda^-$ and
$\lambda^+$; these can be computed by binary
search.  This yields the following.

\begin{theorem} 
\label{blahtheorem} 
In time $O((\sum_i \edge(\S_i))\log (nT/\epsilon))$, we can compute 
quantities $a,\lambda^-$ and
$\lambda^+$, where $|\lambda^- - \lambda^+| \leq \epsilon OPT/(KT)$
and a fraction $a \in [0,1)$ so that if $\P^*_i$ denotes the
single-arm policy that executes $\L_i(\lambda^-)$ with probability
$a$ and $\L_i(\lambda^+)$ with probability $(1-a)$, then
these policies are feasible for (\lpone) with objective at least
$OPT/(1+\epsilon)$.  
\end{theorem}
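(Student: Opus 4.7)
The plan is to use weak duality (Lemma~\ref{weakduality}) quantitatively on the Lagrangian $\lplag(\lambda) = KT\lambda + \sum_i Q_i(\lambda)$. By Lemma~\ref{mainlem} the function $g(\lambda) := \sum_i \T(\L_i(\lambda))$ is nonincreasing in $\lambda$ and vanishes once $\lambda > \max_u r_u$. If $g(0) \le KT$ then the ensemble $\{\L_i(0)\}$ is already feasible for (\lpone) with value $\sum_i Q_i(0) = \lplag(0) \ge OPT$ and we are done. Otherwise I would binary search $\lambda$ over $[0,\max_u r_u]$, evaluating $g(\lambda)$ at each probe via Lemma~\ref{mainlem} in $O(\sum_i \edge(\S_i))$ time, maintaining the invariant $g(\lambda^-) \ge KT \ge g(\lambda^+)$ while halving $\lambda^+ - \lambda^-$, and terminating once $\lambda^+ - \lambda^- \le \epsilon \, OPT/(KT)$ (using any constant-factor upper estimate of $OPT$).

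Next, I would fix the feasibility defect by a convex combination. Set $\Delta^- = g(\lambda^-) - KT \ge 0$, $\Delta^+ = KT - g(\lambda^+) \ge 0$, and $a = \Delta^+/(\Delta^- + \Delta^+) \in [0,1)$. The mixture policy $\P^*_i$ that executes $\L_i(\lambda^-)$ with probability $a$ and $\L_i(\lambda^+)$ with probability $1-a$ satisfies $\sum_i \T(\P^*_i) = a(KT + \Delta^-) + (1-a)(KT - \Delta^+) = KT$ by choice of $a$, so it is feasible for (\lpone). For its value, observe that from $R(\L_i(\lambda)) = Q_i(\lambda) + \lambda\, \T(\L_i(\lambda))$ and the definition of $\lplag$ we have $\sum_i R(\L_i(\lambda)) = \lplag(\lambda) - \lambda(KT - g(\lambda))$. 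Forming the mixture and using $a\Delta^- = (1-a)\Delta^+$, the two cross terms collapse and yield
\[
\sum_i R(\P^*_i) = a\,\lplag(\lambda^-) + (1-a)\,\lplag(\lambda^+) - (1-a)\Delta^+(\lambda^+ - \lambda^-).
\]
Both Lagrangian values exceed $OPT$ by Lemma~\ref{weakduality}, and $(1-a)\Delta^+ \le \Delta^+ \le KT$ trivially, so the error term is at most $KT \cdot \epsilon\, OPT/(KT) = \epsilon\, OPT$. Hence $\sum_i R(\P^*_i) \ge (1-\epsilon)\,OPT$, and running the procedure with $\epsilon/2$ in place of $\epsilon$ (so that $1-\epsilon/2 \ge 1/(1+\epsilon)$ for $\epsilon \in (0,1]$) converts this into the claimed $OPT/(1+\epsilon)$ bound.

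The number of binary search iterations is $O(\log(\max_u r_u \cdot KT/(\epsilon\, OPT))) = O(\log(nT/\epsilon))$ for bounded rewards, giving the stated total running time. The delicate point, and the only place where the argument is not a routine calculation, is the bound $(1-a)\Delta^+ \le KT$ in the error term. Individual $\T(\L_i(\lambda))$ can be as large as $T$, so $\Delta^-$ could in principle be $\Omega(nT)$; had $\Delta^-$ (rather than $\Delta^+$) appeared in the error, the approximation ratio would have degraded by a factor of roughly $n$. The cancellation $a\Delta^- = (1-a)\Delta^+$ built into the choice of $a$ is precisely what makes the $\Delta^+$ quantity --- trivially bounded by $KT$ from nonnegativity of $g$ --- survive in the bound, and this is what drives the proof.
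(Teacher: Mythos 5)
Your proposal is correct and follows essentially the same route as the paper: binary search on the Lagrange multiplier using the monotonicity of $\sum_i \T(\L_i(\lambda))$ from Lemma~\ref{mainlem}, a convex combination chosen so the coupling constraint holds with equality, weak duality at both endpoints, and the key observation that the surviving error term is $(1-a)\bigl(KT - \sum_i \T(\L_i(\lambda^+))\bigr)(\lambda^+-\lambda^-) \le KT(\lambda^+-\lambda^-)$ rather than anything involving the potentially large excess at $\lambda^-$ --- exactly the cancellation the paper's final display exploits. The only cosmetic difference is the binary-search bookkeeping: the paper searches $[0,2M]$ with $M=\sum_i Q_i(0)$ and uses $OPT \le M \le n\,OPT$ to get $O(\log(nT/\epsilon))$ iterations without any bounded-reward assumption, which cleanly supplies the ``estimate of $OPT$'' your termination rule presupposes.
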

\begin{proof}
  Observe that for $\lambda=0$, if we satisfy the constraint $\sum_i
  \T(\L_i(\lambda)) \leq KT$ then the theorem is immediately true
  based on Lemmas ~\ref{weak1} and \ref{mainlem} (setting
  $\epsilon=0$). So in the remainder we assume that $\sum_i
  \T(\L_i(0)) > KT$.
  Note that $OPT \leq \sum_i Q_i(0)$. Moreover, for all $i$, $OPT \geq
  Q_i(0)$ since the optimum can disregard all other arms.  Let
  $M=\sum_i Q_i(0)$ and thus $M \leq n OPT$.

\smallskip
  Now if we set $\lambda=2M$ then all
  $Q_i(\lambda)=0$ because the penalty of $\lambda$ to the root node
  is larger than the total reward of the policy.  Thus all $\L_i(M)$
  are the trivial null policy.  In this case $\sum_i \T(\L_i(M))
  = 0 < KT$.

Therefore we can maintain the interval defined by the two numbers 
$\lambda^-<\lambda^+$ such that $\sum_i \T(\L_i(\lambda^-)) > KT$ and $\sum_i \T(\L_i(\lambda^+)) \leq KT$. Initially $\lambda^-=0$ and $\lambda^+=\infty$. We can now perform a binary search and maintain the properties till we have $\lambda^+ - \lambda^- \leq \epsilon M/(2nKT) \leq \epsilon OPT/(2KT)$.
Since $\sum_i \T(\L_i(\lambda^-)) > KT \geq \sum_i \T(\L_i(\lambda^+)) $ there exists an unique $a \in [0,1)$ such that 
\[ a \left( \T(\L_i(\lambda^-)) \right) + (1-a) \left( \T(\L_i(\lambda^+)) \right) = KT \]
Note that for such an $a$, we have $ \sum_i \T(\P_i^*) = a \left( T(\L_i(\lambda^-)) \right) + (1-a) \left( \T(\L_i(\lambda^+)) \right) = KT$
thereby satisfying the main constraint in the compact representation of \lpone. Observe that for $\lambda \in \{ \lambda^-,\lambda^+\}$, 
\begin{equation}
\label{needref001}
\lambda KT + \sum_i \left\{ R(\L_i(\lambda) - \lambda \T(\L_i(\lambda)) \right \} \geq OPT
\end{equation}
 using the definition of $Q_i(\lambda),\L_i(\lambda)$ and Lemma~\ref{weak1}. As a consequence, since $R(\P_i^*)=a R(\L_i(\lambda^-) + (1-a) R(\L_i(\lambda^+)$; we have:
{
\begin{align*}
OPT & \leq a \left[ \lambda^- KT + \sum_i \left\{ R(\L_i(\lambda^-) - \lambda^- \T(\L_i(\lambda^-)) \right\} \right ] + \\
& \hspace{1.5in}  (1-a) \left[ \lambda^+ KT + \sum_i \left\{ R(\L_i(\lambda^+) - \lambda^+ \T(\L_i(\lambda^+)) \right\} \right ] \\
& = (a \lambda^- + (1-a) \lambda^+)KT + \sum_i R(\P_i^*) - \lambda^-  \sum_i \T(\P_i^*) - (1-a) (\lambda^+ - \lambda^-)  \sum_i \T(\L_i(\lambda^+)) 
\end{align*}
}
\noindent and since $\sum_i \T(\P^*_i) =KT$ the last equation rewrites to
\[ OPT \leq \sum_i R(\P_i^*) + (1-a) (\lambda^+ - \lambda^-)  \left[ KT - \sum_i \T(\L_i(\lambda^+)) \right] \]
\noindent 
But that implies $OPT \leq \sum_i R(\P_i^*) + (1-a) (\lambda^+ -
\lambda^-) KT \leq \sum_i R(\P_i^*) + \epsilon OPT/2$. Since for $x \in (0,1]$ we have $\frac1{1+x} \leq 1-x/2$ we have $OPT/(1+\epsilon) \leq \sum_i R(\P_i^*)$.

Observe that the initial size of the interval is $2M$ and the final
size is at most $\epsilon M/(nKT)$. Therefore the number of binary
searches is $\log \frac{nKT}{\epsilon} = O(\log (nT/\epsilon))$ since
$k<n$. The theorem follows.
\end{proof}

\noindent Using Theorem~\ref{blahtheorem}, Lemma~\ref{greedyorder} and a rescaling of $\epsilon$ the following is immediate:

\begin{corollary}\label{blah2}
Given any $\epsilon \in (0,1]$,
  using $O( (\sum_i \edge(\S_i) )\log (nT/\epsilon))$ time 
  we can compute a $(2+\epsilon)$-approximation to the finite horizon Bayesian MAB using irrevocable policies.
\end{corollary}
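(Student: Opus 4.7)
The plan is to combine Theorem~\ref{blahtheorem} with Lemma~\ref{greedyorder} in a straightforward compositional manner, and then rescale the error parameter. First, I would apply Theorem~\ref{blahtheorem} with an auxiliary parameter $\epsilon'$ to be chosen, obtaining in time $O((\sum_i \edge(\S_i))\log(nT/\epsilon'))$ a collection of single-arm policies $\{\P^*_i\}$, each a convex combination of $\L_i(\lambda^-)$ and $\L_i(\lambda^+)$, such that $\sum_i \T(\P^*_i) \leq KT$ (as shown in the proof of Theorem~\ref{blahtheorem}, this sum actually equals $KT$ when $\sum_i \T(\L_i(0)) > KT$, and otherwise is already at most $KT$) and $\sum_i R(\P^*_i) \geq OPT/(1+\epsilon')$.

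Next, I would feed this feasible collection of single-arm policies into the final scheduling policy of Figure~\ref{fh}. Since $\sum_i \T(\P^*_i) \leq KT$, Lemma~\ref{greedyorder} applies and guarantees that the expected reward of the scheduled policy is at least $\frac{1}{2}\sum_i R(\P^*_i)$. Chaining the two inequalities, the final expected reward is at least
\[
\frac{1}{2}\sum_i R(\P^*_i) \;\geq\; \frac{OPT}{2(1+\epsilon')}.
\]
To convert this into a $(2+\epsilon)$-approximation, I would choose $\epsilon'=\epsilon/2$, so that $2(1+\epsilon')=2+\epsilon$. The running time from Theorem~\ref{blahtheorem} is $O((\sum_i \edge(\S_i))\log(nT/\epsilon'))=O((\sum_i \edge(\S_i))\log(nT/\epsilon))$, since the $\log$ absorbs the constant factor from rescaling, and the scheduling step of Figure~\ref{fh} adds only lower-order work (sorting $n$ ratios and simulating the online schedule, which is dominated by the LP-solving time).

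There is no real obstacle here: the two pieces were designed to plug together, and the only care needed is to verify that the feasibility premise of Lemma~\ref{greedyorder}, namely $\sum_i \T(\P^*_i) \leq KT$, is delivered by the construction in Theorem~\ref{blahtheorem}, and that the rescaling $\epsilon'=\epsilon/2$ keeps us within the claimed running time bound. Both are immediate from the statements already established.
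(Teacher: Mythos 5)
Your proposal is correct and follows exactly the route the paper intends: compose Theorem~\ref{blahtheorem} (feasible single-arm policies with reward at least $OPT/(1+\epsilon')$) with Lemma~\ref{greedyorder} (factor-$2$ scheduling loss) and rescale via $\epsilon'=\epsilon/2$ so that $2(1+\epsilon')=2+\epsilon$, noting the logarithm absorbs the rescaling in the running time. The paper states this corollary as immediate from those two results plus rescaling, so your argument matches its proof.
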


\begin{figure*}[ht]
\begin{centering}
\includegraphics[scale=0.5]{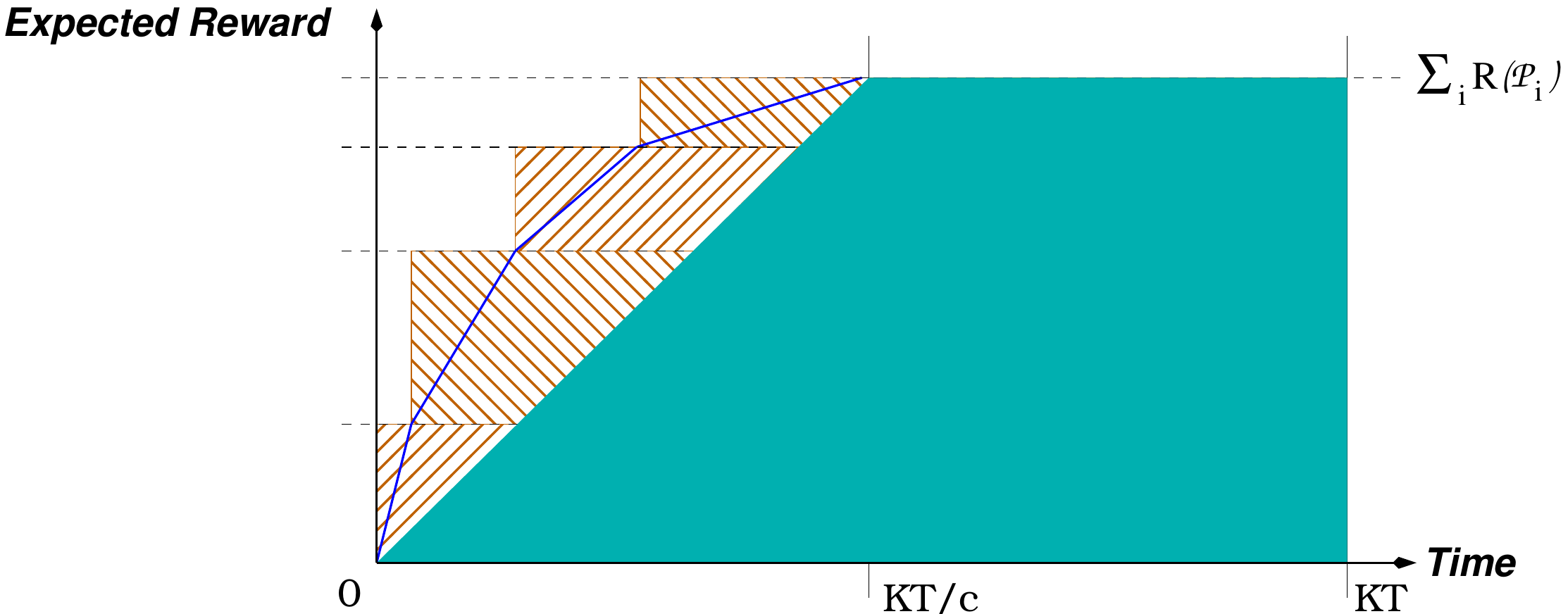}
\caption{\label{concavechainfig2} The accounting process of Corollary~\ref{approxblahtheorem} explained pictorially.}
\end{centering}
\end{figure*}

\newcommand{\CC}{{\mathfrak C}}
\paragraph{Further applications of Theorem~\ref{blahtheorem}:} We prove a corollary which will be useful later;
\begin{corollary}
\label{approxblahtheorem}
If in time $O(\tau)$ we can compute an $c$-approximation to
$\sum_i Q_i(\lambda,\CC)$ where $Q_i(\lambda,\CC)$ is
$Q_i(\lambda)$ is defined by the maximizing system~(\ref{qidef}) with the additional constraint $\CC$ over single the arm policy; 
then we can compute a
$(c+\epsilon)$-approximation for (\lpone) for any $\epsilon \in
(0,1]$, which satisfies $\sum_i \T(\P_i) = KT/c$ and the additional
restriction over the same constraints $\CC$ over the single arm
policies in time $O(\tau \log (nT/\epsilon))$. Further 
the scheduling policy in Figure~\ref{fh} now provides a $\frac{2c(c+\epsilon)}{2c-1}$ approximation to $OPT(\CC)$ which is the optimum solution which obeys the constraint
  that $\sum_i \T(\P_i) \leq KT$ along with the single arm constraints
  $\CC$.
\end{corollary}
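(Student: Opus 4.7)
The proof parallels Theorem~\ref{blahtheorem} and Lemma~\ref{greedyorder}, with two twists: the exact Lagrangian oracle for $Q_i(\lambda)$ is replaced by the $c$-approximate one (returning policies $\h\L_i(\lambda,\CC)$ that satisfy $\CC$ and obey $\sum_i [R(\h\L_i(\lambda,\CC)) - \lambda \T(\h\L_i(\lambda,\CC))] \ge \sum_i Q_i(\lambda,\CC)/c$), and the coupling constraint is rescaled from $\sum_i \T(\P_i) \le KT$ to $\sum_i \T(\P_i) = KT/c$.

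First I would observe that weak duality survives the $\CC$ restriction: an optimum policy for $OPT(\CC)$ decomposes into single-arm policies each obeying $\CC$, so $\sum_i Q_i(\lambda,\CC) + \lambda KT \ge OPT(\CC)$; combined with the $c$-approximation this yields $\sum_i [R(\h\L_i(\lambda,\CC)) - \lambda \T(\h\L_i(\lambda,\CC))] \ge (OPT(\CC) - \lambda KT)/c$ for every $\lambda \ge 0$. Mirroring Theorem~\ref{blahtheorem}, $\sum_i \T(\h\L_i(\lambda,\CC))$ is nonzero at $\lambda=0$ and vanishes once $\lambda$ exceeds $\sum_i \h{Q}_i(0,\CC)$, so binary search on $\lambda$ produces a pair $\lambda^- < \lambda^+$ bracketing the threshold $KT/c$ and a unique $a \in [0,1)$ such that the randomized combination $\P_i^* = a\,\h\L_i(\lambda^-,\CC) + (1-a)\,\h\L_i(\lambda^+,\CC)$ satisfies $\sum_i \T(\P_i^*) = KT/c$ exactly. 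Substituting the two approximate dualities into $\sum_i R(\P_i^*)$ and using $\sum_i \T(\P_i^*) = KT/c$ to cancel the $\bar\lambda = a\lambda^- + (1-a)\lambda^+$ contribution, the cross terms collapse to
\begin{equation*}
\sum_i R(\P_i^*) \;\ge\; \frac{OPT(\CC)}{c} \;-\; a(1-a)(\lambda^+-\lambda^-)(t^- - t^+),
\end{equation*}
where $t^\pm = \sum_i \T(\h\L_i(\lambda^\pm,\CC)) \in [0,nT]$. Since the starting interval $[0, 2\sum_i \h{Q}_i(0,\CC)]$ is polynomially bounded in $OPT(\CC)$ through $OPT(\CC) \le \sum_i Q_i(0,\CC) \le n\,OPT(\CC)$, $O(\log(nT/\epsilon))$ oracle calls suffice to force the slack below $\epsilon\,OPT(\CC)/(c(c+\epsilon))$, yielding $\sum_i R(\P_i^*) \ge OPT(\CC)/(c+\epsilon)$.

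For the scheduling half, I feed $\{\P_i^*\}$ into the policy of Figure~\ref{fh}. Equation~(\ref{mainequation}) still applies, so the expected reward is at least $\sum_i R(\P_i^*)(1 - s_{i-1}/(KT))$ with $s_{i-1} = \sum_{j<i}\T(\P_j^*)$ and arms sorted by $R(\P_i^*)/\T(\P_i^*)$ descending. Redrawing Figure~\ref{concavechainfig2}: this quantity is the area under a descending step function of total mass $\sum_i R(\P_i^*)$ on $[0,KT/c]$, weighted pointwise by $1 - x/(KT)$. The weighted ratio is minimized by a flat step function, where it equals $1 - \tfrac{1}{2c} = (2c-1)/(2c)$; hence the expected reward is at least $\frac{2c-1}{2c}\sum_i R(\P_i^*) \ge \frac{2c-1}{2c(c+\epsilon)}\,OPT(\CC)$, the claimed $\frac{2c(c+\epsilon)}{2c-1}$-approximation.

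The main obstacle lies in the first half: controlling the slack $a(1-a)(\lambda^+-\lambda^-)(t^- - t^+)$ using only the approximate oracle, without a direct handle on $OPT(\CC)$, and doing so with polylogarithmically many oracle calls. Once the first half is in hand, the concave-chain accounting in the second half is a clean one-parameter variant of Lemma~\ref{greedyorder}'s picture and offers no genuine new difficulty.
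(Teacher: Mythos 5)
Your proposal is correct and follows essentially the same route as the paper: weak duality for the $\CC$-constrained program, a binary search on the Lagrange multiplier to bracket the rescaled budget $KT/c$ with the slack term $a(1-a)(\lambda^+-\lambda^-)(t^--t^+)$ driven below $\epsilon\,OPT(\CC)/(c(c+\epsilon))$, and then the concave-chain accounting showing the schedule of Figure~\ref{fh} retains a $(1-\tfrac{1}{2c})$ fraction (your ``flat step function'' extremal case is exactly the triangle-plus-rectangle picture of Figure~\ref{concavechainfig2}). The ``main obstacle'' you flag at the end is already resolved by your observation that $OPT(\CC) \le \sum_i Q_i(0,\CC) \le n\,OPT(\CC)$, so nothing is missing.
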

\begin{proof} 
  A relaxation of $OPT(\CC)$ can be expressed as a mathematical
  program ($\CC$ need not be linear constraints) where we have
  (\lpone) with additional constraints $\CC$. However weak duality
  still holds and for any $\lambda \geq 0$,
\begin{equation}
\label{needref002}
 \lambda KT + \sum_i Q_i(\lambda,\CC) \geq OPT(\CC)
\end{equation}
Now the proof of the corollary follows from replicating the proof of
Theorem~\ref{blahtheorem} with the following consequence of the
approximation algorithm, instead of Equation~(\ref{needref001}), 
\begin{equation}
\label{needref003}
\lambda KT + c \left(\sum_{i: \L_i(\lambda) \mbox{ obeys } \CC} \left\{ R(\L_i(\lambda) - \lambda \T(\L_i(\lambda)) \right \} \right) \geq OPT(\CC)
\end{equation} 
which follows from equation~\ref{needref002} and the $c$-
approximation of $Q_i(\lambda,\CC)$.
Observe that
now if we choose $\lambda^-,\lambda^+$ such that $\sum_i
\T(\L_i(\lambda^-)) > KT/c \geq \sum_i \T(\L_i(\lambda^+))$ then we can ensure that $\sum_i \T(\P^*_i) = KT/c$. We still execute the policies as described in
Figure~\ref{fh}, and the expected reward is at least (following the identical logic as in Lemma~\ref{greedyorder}):
\[ R' \geq \sum_i \left (1 - \frac{\sum_{j<i} \T(\P_j)}{KT}
\right)R(\P_i) \] 
Pictorially, $R'$ is at least $\frac1{KT}$ times the
area of the shaded triangle and the rectangle as shown in Figure~\ref{concavechainfig2}, which amounts to
$(1-\frac1{2c}) \sum_i R(\P_i)$. 
Observe that $c=1$ corresponds to the statement of
Lemma~\ref{greedyorder}. 
The overall approximation is therefore $\frac{2c(c+\epsilon)}{2c-1}$ which proves the theorem.
\end{proof}

\subsection{Comparison to Gittins Index}
\label{sec:gittins}
The most widely used policy for the discounted version of the
multi-armed bandit problem is the {\em Gittins index}
policy~\cite{GJ74}.  Recall the single-arm policies constructed in
Section~\ref{sec:lagindex}. For arm $i$, consider the policy
$\L_i(\lambda)$ corresponding to the value $Q_i(\lambda)$.  We can
account for the reward of this policy as follows. Suppose any policy
is given fixed reward $\lambda$ per play (so that if the expected
number of plays is $\T(P)$, the policy earns $\lambda \T(P)$). Then,
the value $Q_i(\lambda)$ is the optimal {\em excess} reward a policy
can earn given these average values, since this is precisely $\max_{P
  \in \C_i} (R(P) - \lambda \T(P))$. To solve our relaxation (\lpone), we
find $\lambda$ so that the total expected number of plays made by the
single-arm policies for different $i$ sums to $T$.
The definition of $Q_i(\lambda)$ can be generalized to an equivalent
definition $Q_i(\lambda,u)$ for policies whose start state is $u \in
\S_i(T)$ (instead of being the root $\rho_i$). The Gittins index for $u
\in \S_i(T)$ can be defined as:
$$\mbox{Gittins Index} = \Pi_i(u) = \max \{\lambda \ |\  Q_i(\lambda,u) > 0\}$$
In other words, the Gittins index for state $u$ is the maximum value
of $R(P)/\T(P)$ over policies restricted to starting at state $u$ (and
making at least one play), {\em i.e.}, the maximum amortized per-step
long term reward obtainable by playing at $u$. The Gittins index
policy works as follows: At any time step, play the arm $i$ whose
current state $u$ has largest index $\Pi_i(u)$. For the discounted
reward version of the problem, such an index policy yields the optimal
solution. This is not true for the finite horizon version, and the
other variants we consider below. Nevertheless, the starting point for
our algorithms is the solution to (\lpone), and as shown above, this has
computational complexity similar to the computation of the Gittins
index.

In contrast to the Gittins index, our policies are based on computing
(as in Theorem~\ref{blahtheorem}) {\em one} global penalty $\lambda^*$
across all arms by solving (LP1); consider the case $\lambda^+\approx\lambda^*\approx\lambda^-$ in Theorem~\ref{blahtheorem}. For this penalty, for each arm $i$
and state $u \in \S_i(T)$, the policy $\L_i(\lambda^*)$ makes a decision
on whether to play or not play. We execute these decisions, and impose
a fixed priority over arms to break ties in case multiple arms decide
to play.

\newcommand{\Q}{\mathcal{Q}}
\section{Traversal Dependent Bayesian MAB Problems}
\label{sec:metric}

In this section we consider how the constraints on a traversal of
different bandit arms can affect the approximation algorithm. A
concrete example of such traversal related constraint is the {\bf
  Bayesian MAB Problem with Switching Costs} where there is a {\em cost}
of switching between arms. Denote the cost of switching from arm $i$
to arm $j$ as $\ell_{ij} \in \mathcal{Z}^+$. The system starts at an
arm $i_0$. The goal is to maximize the expected reward subject to
{\bf rigid} constraints that (i) the total number of plays is at most
$T$ and (ii) the total switching cost is at most $L$ on all decision
paths. This problem has received significant attention, see the discussion in 
Section~\ref{sec:intro} and in \cite{BS94} - however efficient solutions 
with provable bounds in the Bayesian setting has been elusive.

A classic example of such a switching cost problem can be when the
costs $\ell_{ij}$ define a distance {\bf metric}, which is natural in
most navigational settings and was considered earlier in \cite{GM09}.
Here we will provide a $X$ approximation for that problem improving
the $12$-approximation provided in \cite{GM09}. 

However a strong motivation of this section is to continue developing
general techniques for Bayesian MAB problems and therefore we take a
slightly indirect route. We first consider a different problem: {\bf
  Finite Horizon Bayesian MAB using Arbitrary Order Irrevocable
  Policies}. In this problem, once we have decided upon the {\em set}
of arms to play then an adversary provides us with a specific order
such that if we start playing an arm $i$ then we cannot visit/revisit
any arm before arm $i$ in that said order\footnote{This admits $K>1$ cases. The adversary does not have the knowledge of the true reward values.}.  We will
provide an efficient (again near linear time in the input sparsity)
$Y$-approximation for this problem. Since switching costs are
often used to model economic interactions in the Bandit setting, as in
\cite{BS94,GGW,labor1,labor2}, the adversarially ordered traversal
problem is an interesting subproblem in its own right. In
addition, there are two key benefits of this approach.
\begin{itemize}
\item First, the analysis
technique for arbitrary (or adversarial) order will disentangle the
decisions between the constraint associated with the traversal and the constraint associated with the finite horizon. Note that all these traversal problems encode a natural
combinatorial optimization problem and are often MAX SNP
Hard --- and this disentanglement
and isolation of the combinatorial difficulty produces natural policies.
\item Second, because we use irrevocable policies, the switching cost is only relevant for the algorithm for the first transition from $i$ to $j$. The proof presented herein will
remain exactly the same if the second transition of $i$ to $j$ costs
more or less than the first transition, as long as the costs are
positive! 
\end{itemize}

\noindent{\bf Roadmap:}
We first discuss the Arbitrary Order Irrevocable
Policy problem in Section~\ref{arbitrary}. We then show in
Section~\ref{metric:app} how the analysis applies to the Bayesian MAB
problem with Metric Switching costs; the analysis will extend beyond
the metric assumption as long as a certain traversal type problem
({\em Orienteering Problem}) can be approximated to small factors in
polynomial time.

\subsection{Arbitrary Order Irrevocable Policies for the Bayesian MAB Problem}
\label{arbitrary}
The set up of this problem is similar to the Finite Horizon MAB
problem using irrevocable policies discussed in Section~\ref{finite}.
The only difference is that we cannot use the ordering of
$R(\P_i)/T(P_i)$ as described in the scheduling policy in
Figure~\ref{fh} -- instead we will have to use an arbitrary order {\em
  after the arms and the corresponding policies $\P_i$ have been
  chosen}, that is, the Step~\ref{needno} is not performed.  Note
that the bound of \lpone\ remains a valid upper bound of this problem
-- but Lemma~\ref{greedyorder} does not apply explicitly and
Theorem~\ref{blahtheorem} is not useful implicitly. In what follows,
we prove Theorem~\ref{arbitraryorder} which
replaces them.  The notation used will be the same as in
Section~\ref{finite}.

\begin{figure*}[htbp]
\fbox{
\begin{minipage}{6.0in}
{\small
{\bf The Final Adversarial Order Irrevocable Policy}
\begin{enumerate}
\item Solve (LP1) to obtain a collection of single-arm policies $\P_i$.
\label{firststep2}
\item For each $i$, create policy $\P'_i$ which chooses to plays $\P_i$ with probability $\alpha \in (0,1]$ and with probability $(1-\alpha)$ is the null policy.
\item An adversary orders the arms determining the order in they have to be played. 
\label{needno2}
\item Start with the first $K$ policies in the order specified in Step~\ref{needno2}. 
These policies are inspected; the remaining policies are uninspected.
\begin{enumerate} 
\item If the decision in $\P'_i$ is to quit, then move to the first uninspected arm in the order (say $\P'_j$) and start executing $\P'_j$. This is similar to scheduling $K$ parallel machines. 
\item If the horizon $T$ is reached, the overall policy stops execution. 
Note $KT \geq \sum_i \T(\P'_i)$.
\end{enumerate}
\end{enumerate} }
\end{minipage}
}
\caption{Scheduling where an adversary to decides the traversal order\label{fh2}}
\end{figure*}

\begin{theorem}
\label{arbitraryorder}
The Finite Horizon Bayesian MAB Problem using arbitrary order
irrevocable policies has a $4$ approximation that can be found in
polynomial time and a $(4+\epsilon)$-approximation in $O( (\sum_i
\edge(\S_i) )\log (nT/\epsilon))$ time using the scheduling policy in
Figure~\ref{fh2}.
\end{theorem}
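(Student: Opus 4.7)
My plan is to mimic the analysis of Lemma~\ref{greedyorder}, but replace the greedy-ratio ordering (which we are now denied) with the randomized sub-sampling introduced in Figure~\ref{fh2}. Concretely, start from the collection $\{\P_i\}$ obtained from (\lpone), so $\sum_i R(\P_i)\ge \lpone\ge OPT$ and $\sum_i \T(\P_i)\le KT$. The policies $\P_i'$ then satisfy $R(\P_i')=\alpha R(\P_i)$ and $\T(\P_i')=\alpha \T(\P_i)$. Let $t_j$ denote the (random) number of plays actually consumed by $\P_j'$; note that $\E[t_j]=\alpha \T(\P_j)$, and crucially the realizations of distinct $t_j$ are independent across arms.

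Next I would copy the stopping-time argument of Lemma~\ref{greedyorder} verbatim. When the adversary's order is processed on the $K$ parallel machines, the remaining horizon seen by $\P_i'$ at the moment it is started is $T-\min\{T,\tfrac1K\sum_{j<i} t_j\}$. The Truncation Theorem~\ref{statetheorem} applies to $\P_i'$ as a single-arm policy, so conditioning on $\{t_j\}_{j<i}$ the expected reward harvested from $\P_i'$ is at least $\bigl(1-\tfrac{1}{KT}\sum_{j<i}t_j\bigr)R(\P_i')$. Taking expectations and using independence of $t_j$ ($j<i$) from the reward of $\P_i'$,
\[
 R \;\ge\; \sum_i\Bigl(1-\tfrac{1}{KT}\sum_{j<i}\E[t_j]\Bigr)R(\P_i')
      \;=\;\sum_i\Bigl(1-\tfrac{\alpha}{KT}\sum_{j<i}\T(\P_j)\Bigr)\alpha R(\P_i).
\]
Because $\sum_{j<i}\T(\P_j)\le KT$, the parenthesized factor is at least $1-\alpha$ for every $i$, which yields
\[
 R \;\ge\; \alpha(1-\alpha)\sum_i R(\P_i) \;\ge\; \alpha(1-\alpha)\,OPT.
\]
Optimizing at $\alpha=1/2$ gives a $4$-approximation; this is the content of the first half of the theorem.

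For the $(4+\epsilon)$ running-time bound, I would simply replace the exact LP solve in Step~\ref{firststep2} of Figure~\ref{fh2} with the Lagrangian construction of Theorem~\ref{blahtheorem}: in $O\bigl((\sum_i \edge(\S_i))\log(nT/\epsilon)\bigr)$ time it produces single-arm policies $\{\P_i^*\}$ that are feasible for (\lpone) and satisfy $\sum_i R(\P_i^*)\ge OPT/(1+\epsilon)$. Feeding these $\P_i^*$ into the argument above yields expected reward at least $\tfrac14 \cdot \tfrac{1}{1+\epsilon}\,OPT$; rescaling $\epsilon$ gives the claimed $(4+\epsilon)$-approximation with the stated running time.

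The only conceptual obstacle is why the randomization by $\alpha$ is necessary at all: without it, an adversary who orders a very long, low-reward arm first (using $\T(\P_i)$ close to $KT$) could exhaust the horizon before any profitable arm starts, and the factor $1-\tfrac{1}{KT}\sum_{j<i}\T(\P_j)$ could collapse to $0$ for the good arms. The sub-sampling trick converts the worst-case loss per arm from $\T(\P_i)$ to $\alpha\T(\P_i)$, so in expectation the prefix can consume at most an $\alpha$-fraction of the horizon uniformly; the $(1-\alpha)$ factor we surrender per arm is then recouped by this uniform budget bound. That is the new ingredient that substitutes for the greedy reordering of Lemma~\ref{greedyorder}, and everything else is a mechanical transplant of the earlier argument.
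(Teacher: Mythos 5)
Your proposal is correct and follows essentially the same route as the paper: subsample each single-arm policy with probability $\alpha$, run the truncation/stopping-time accounting of Lemma~\ref{greedyorder} under the adversary's order, bound the prefix load by $\alpha KT$ to get a uniform $(1-\alpha)$ survival factor, and optimize at $\alpha=\tfrac12$, invoking Theorem~\ref{blahtheorem} for the $(4+\epsilon)$ running-time version. No substantive differences from the paper's argument.
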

\begin{proof}
  Using the exact same initial arguments as in Lemma~\ref{greedyorder}
  we arrive at the inequality~\ref{mainequation} which states that the
  expected reward $R'$ in this case satisfies:
  \[ R' \geq \sum_i \left( 1 -\frac{\sum_{j<i} \T(\P'_j)}{KT} \right) R(\P'_i) \geq
\left( 1 -\frac{\sum_{j} \T(\P'_j)}{KT} \right) \sum_i R(\P'_i) \]
  Now observe that $R(\P'_i)=\alpha R(\P_i)$ and $\T(\P'_i)=\alpha \T(\P'_i)$ and therefore $R' \geq \alpha(1-\alpha) \sum_i R(\P_i)$. Using an exact solution or Theorem~\ref{blahtheorem}, and setting $\alpha=\frac12$ the statement in this theorem follows.
\end{proof}

\noindent However note that there is a slack in the above analysis
because in Step~\ref{firststep2} in the policy in Figure~\ref{fh2}, we
could have found weakly coupled policies that take a combined horizon
of $2KT$. Balancing this slack will provide us with an alternate
optimization which is useful if we cannot compute $Q_i(\lambda)$ or
(\lpone) in a near optimally fashion. One such example is the Finite
Horizon Bayesian MAB Problem with Metric Switching Costs, which we
discuss next.

\newcommand{\setS}{{\mathfrak S}}

\subsection{Bayesian MAB Problem with Metric Switching Costs}
\label{metric:app}
For simplicity we also assume $K=1$ in this section, that is, the
system is allowed to play only one arm at a time and observe only the
outcome of that played arm. We discuss the $K>1$ at the end of the subsection.
The system starts at an arm $i_0$. A policy, given the outcomes of the
actions so far (which decides the current states of all the arms),
makes one of the following decisions (i) play the arm it is currently
on; (ii) play a different arm (paying the distance cost to switch to
that arm); or (iii) stop. Just as before, a policy obtains reward
$r_u$ if it plays arm $i$ in state $u \in \S_i$. Any policy is also
subject to {\bf rigid} constraints that the total number of plays is
at most $T$ and the total distance cost is at most $L$ on all decision
paths. To begin with, we delete all arms $j$ such that $\ell_{i_0j} > L$. No feasible policy can reach such an arm without exceeding the distance cost budget. 
Let $OPT$ denote both the optimal solution as well as its
expected reward.

\subsubsection{A (Strongly Coupled) Relaxation and its Lagrangian}
We describe a sequence of relaxations to the optimal policy,
culminating with a weakly coupled relaxation. A priori, it is not
clear how to construct such a relaxation, since the switching cost
constraint couples the arms together in an intricate fashion. We
achieve weak coupling via the Lagrangian of a natural LP relaxation,
which we show can be solved as a combinatorial problem called
orienteering over single-arm policies.  
\eat{This section also introduces a
new technique of accounting for the reward via the Lagrange
multiplier, which significantly improves the constant factor; we
present another application in Section~\ref{sec:budget}.}

\begin{definition}
  Let $\C(L,T)$ denote the set of policies on all the remaining arms,
  over a time horizon $T$, that can perform one of two actions: (1)
  Play current arm; or (2) Switch to different arm.  Such policies
  have no constraints on the total number of plays, but are required
  to have distance cost $L$ on all decision paths. Observe that if the
  constraint corresponding to the distance constraint is removed then
  $P \in \C(L,T)$ will decompose to $\{P_i \in \C_i(T)\}$, that is,
  the single arm policies $P_i$ (which are the projections of $P$)
  have at most a horizon of $T$ (See Definition~\ref{singlearmdef} for the definition of $\C_i(T)$).
\end{definition}

Given a policy $P \in \C(L,T)$ define the following quantities in
expectation over the decision paths: Let $R(P)$ be the expected reward
obtained by the policy and let $\T(P)$ denote the expected number of
plays made.  Note that any policy $P \in \C(L,T)$ needs to have distance
cost at most $L$ on {\em all} decision paths. Consider the following
optimization problem, which is still strongly coupled since $\C(L,T)$ is
the space of policies over all the arms, and not the space of
single-arm policies:

 $$  (M1): \qquad \mbox{Max}_{P \in \C(L,T)} \left\{ R(P) \left| \ \T(P)  \le T \right. \right\}   $$ 

\begin{proposition}
$OPT$ is feasible for $(M1)$.
\end{proposition}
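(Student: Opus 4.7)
The plan is to verify that $OPT$ satisfies each constraint imposed by $(M1)$ directly from the hypotheses of the original metric switching cost problem; the claim is essentially a tautology of relaxation, and the proof amounts to unwinding the definitions.

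First, I would check membership in $\C(L,T)$. The class $\C(L,T)$ is defined by a single hard requirement: on every decision path, the total distance cost incurred through switches is at most $L$. This is exactly one of the rigid constraints that $OPT$ must obey as a feasible policy for the metric switching MAB problem, so it holds automatically. The fact that $\C(L,T)$ nominally lists only the actions \emph{play} and \emph{switch} (while $OPT$ may also choose \emph{stop}) is harmless: a stopping action of $OPT$ can be viewed as an absorbing/terminal state in $\C(L,T)$, which affects neither the switching cost on any path nor the expected reward.

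Second, I would verify the explicit optimization constraint $\T(P) \le T$ of $(M1)$. In the original problem, the total number of plays is bounded by $T$ on \emph{every} decision path of $OPT$. Taking expectation over decision paths (weighted by their realization probabilities) preserves the inequality, so $\T(OPT) \le T$ holds in expectation as well; this is the easy direction of a per-path-to-expected relaxation.

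Finally, the objective of $(M1)$ is precisely $R(P)$, which by definition of $R(\cdot)$ equals the expected reward of $OPT$. Hence $OPT$ is a feasible policy for $(M1)$ with the same reward as in the original problem. The main (and only mild) subtlety is the compatibility of action sets between the original problem and $\C(L,T)$, which I expect to resolve with a one-line remark about stopping being representable without cost or reward change; no real obstacle arises.
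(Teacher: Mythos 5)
Your proposal is correct and follows the same route as the paper's (one-line) proof: $OPT$ respects the distance budget $L$ on every decision path, hence lies in $\C(L,T)$, and since it makes at most $T$ plays on every decision path, its expected number of plays satisfies $\T(OPT)\le T$. The extra remark about representing the stop action inside $\C(L,T)$ is a harmless elaboration the paper leaves implicit.
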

\begin{proof}
We have $\T(OPT) \le T$; since $OPT \in \C(L,T)$,  this shows it is feasible for $(M1)$.
\end{proof}

Let the optimum solution of $(M1)$ be $OPT'$ and the corresponding policy be $P^*$ such that $R(P^*)=OPT' \ge OPT$.  Note that $P^*$ need not be feasible for the original problem, since  it enforces the time horizon $T$ only in expectation over the decision paths. We now consider the Lagrangian of the above for $\lambda>0$, and define the problem $M2(\lambda)$:
\begin{definition} 
Let
$V(\lambda) = \mbox{\rm Max}_{P \in \C(L,T)} \left(R(P) - \lambda \T(P) \right)$. 
Let $M2(\lambda)$ be $\mbox{\rm Max}_{P \in \C(L,T)} f_{\lambda}(P)$. 
For a policy 
$P \in \C(T)$ let $f_{\lambda}(P) = \lambda T +  R(P) - \lambda \T(P)$. 
Then,
\[ M2(\lambda): \qquad  \mbox{\rm Max}_{P \in \C(L,T)} \ \  f_{\lambda}(P) = \lambda T + \mbox{\rm Max}_{P \in \C(L,T)} \left(R(P) - \lambda \T(P)  \right) = \lambda T + V(\lambda) \]
\end{definition}

We first relate $OPT$ to the optimal value $\lambda T + V(\lambda)$ of the problem  $M2(\lambda)$.

\begin{lemma}
\label{lem:opti}
For any $\lambda \ge 0$, we have $M2(\lambda) =  \lambda T + V(\lambda) \ge  OPT$.
\end{lemma}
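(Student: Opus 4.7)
The plan is to invoke standard weak duality. The constraint $\T(P) \leq T$ that is binding in $(M1)$ has been moved into the objective with penalty $\lambda \geq 0$ in $M2(\lambda)$, so any feasible solution to $(M1)$ should still achieve at least its $(M1)$-value in $M2(\lambda)$.

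Concretely, I would proceed as follows. First, recall that $OPT$ itself is feasible for $(M1)$ (by the proposition immediately preceding the lemma), so letting $P^\star$ denote the optimal policy for $(M1)$, we have $R(P^\star) \geq OPT$ and $\T(P^\star) \leq T$. Second, plug $P^\star$ into the Lagrangian objective:
\[
f_\lambda(P^\star) \;=\; \lambda T + R(P^\star) - \lambda\, \T(P^\star) \;=\; R(P^\star) + \lambda\bigl(T - \T(P^\star)\bigr).
\]
Since $\lambda \geq 0$ and $T - \T(P^\star) \geq 0$, the second term is nonnegative, so $f_\lambda(P^\star) \geq R(P^\star) \geq OPT$. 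Third, observe $P^\star \in \C(L,T)$, so by definition
\[
M2(\lambda) \;=\; \max_{P \in \C(L,T)} f_\lambda(P) \;\geq\; f_\lambda(P^\star) \;\geq\; OPT,
\]
which is the desired inequality. The equality $M2(\lambda) = \lambda T + V(\lambda)$ is just the definition of $V(\lambda)$ being pulled out of the $\max$, since $\lambda T$ is a constant with respect to $P$.

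There is no real obstacle here: the proof is a one-line weak duality argument, and the only subtlety worth flagging explicitly is that feasibility of $OPT$ for $(M1)$ (not only for the original problem) is what lets us use $\T(P^\star) \leq T$ in bounding the slack term. I would keep the writeup to just a few lines.
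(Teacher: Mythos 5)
Your proof is correct and is essentially the same weak-duality argument the paper gives: plug a policy feasible for $(M1)$ (with $\T \le T$) into $f_\lambda$ and observe the slack term $\lambda(T - \T)$ is nonnegative. The paper states this in one line; your version just spells out the arithmetic.
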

\begin{proof}
This is simply weak duality: For the optimal policy $P^*$ to $(M1)$, we have $\T(P^*) \le T$. Since this policy is feasible for $M2(\lambda)$ for any $\lambda \ge 0$, the claim follows.
\end{proof}

In the Lagrangian formulation, if arm $i$ is played in state $u \in \S_i$, the expected reward obtained is $r_u - \lambda$. We re-iterate that the only constraint on the set of policies $\C$ is that the distance cost is at most $L$ on all decision paths. 

\subsubsection{Structure of $M2(\lambda)$}
The critical insight, which explicitly uses the fact that in the MAB the state of an inactive arm does not change and which allows weak coupling, is the following:

\begin{lemma}
\label{prop:cool}
For any $\lambda \ge 0$, given any $P \in \C$, there exists a $P' \in \C$ that never revisits an arm that it has already played and switched out of, such that $f_{\lambda}(P') \ge f_{\lambda}(P)$.
\end{lemma}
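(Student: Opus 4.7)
The plan is to prove the lemma by induction on the maximum number of revisits that $P$ makes on any decision path, where a ``revisit'' means $P$ switches onto an arm it has previously played and abandoned. The base case (zero revisits) is immediate with $P'=P$. For the inductive step, I would perform a single surgical modification that strictly reduces the revisit count on some decision path while preserving $f_\lambda$.

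Two structural facts are at the heart of the argument. First, the idling bandit property tells us that if $P$ abandons arm $i$ at some state $u$, then whenever $P$ (or any modification of it) returns to $i$ later, it finds $i$ still at state $u$. Second, the martingale property ensures that the expected reward of any play on arm $i$ at state $u$ is $r_u$, independent of when that play is scheduled. These two facts together justify ``moving'' a block of plays on arm $i$ earlier in the schedule without changing expected per-arm reward or per-arm number of plays.

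Concretely, I would locate the earliest node $N$ in the decision tree of $P$ at which $P$ switches onto an already-abandoned arm $i$, say the arm was abandoned at state $u$ at an earlier node $N_0$. Let $\sigma$ be the sub-policy that $P$ executes starting at $N$. I would build $\tilde P$ by splicing the arm-$i$ portion of $\sigma$ into $P$ at the position of $N_0$ (which is permissible because arm $i$ is still in state $u$ there), and then continuing with $P$'s original schedule, except that when $P$ would have taken the switch back to $i$ at $N$, $\tilde P$ instead short-circuits directly to whatever $\sigma$ does after finishing its arm-$i$ plays. On every decision path, the multiset of plays per arm is identical in $P$ and $\tilde P$, so $R(\tilde P) = R(P)$ and $\T(\tilde P) = \T(P)$. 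For the switching cost, $\tilde P$ replaces a round trip ``$k \to i \to i'$'' (entering $i$ at $N$ and leaving $i$ after $\sigma$'s arm-$i$ portion) with a direct edge ``$k \to i'$,'' and by the metric triangle inequality this does not increase the total distance cost along any decision path, so $\tilde P \in \C(L,T)$. Iterating reduces the revisit count to zero and yields the claimed $P'$.

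The main obstacle I expect is the adaptivity issue: the arm-$i$ decisions inside $\sigma$ in $P$ may depend on observations that $P$ has already collected from \emph{other} arms between $N_0$ and $N$, whereas at $N_0$ in $\tilde P$ those observations are not yet available. The way I would handle this is to exploit independence of the priors $\D_i$: conditioned on arm $i$'s state $u$, the distribution of $P$'s arm-$i$ decisions in $\sigma$ is independent of all other arms' observations, so the arm-$i$ sub-tree of $\sigma$ can be realized in $\tilde P$ using an independent external randomizer (or, equivalently, the arm-$i$ observations of the extended first visit). The non-arm-$i$ branchings of $\sigma$ are only committed to later in $\tilde P$'s execution, by which time the relevant observations of other arms have indeed arrived, so they can be reproduced faithfully via a coupling. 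Writing this coupling precisely so that the modified object is still a well-defined adaptive policy in $\C(L,T)$ is the delicate technical step.
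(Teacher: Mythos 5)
Your plan founders on the step you yourself flag as ``the delicate technical step,'' and the specific repair you propose rests on a false premise. You claim that, conditioned on arm $i$ being at state $u$, the distribution of $P$'s arm-$i$ decisions in $\sigma$ is independent of the other arms' observations. Independence of the priors makes arm $i$'s \emph{reward process} independent of the other arms, but not the policy's \emph{decisions}: whether $P$ revisits arm $i$ at all, and how many plays it then makes, can be an arbitrary function of what it observed on arms $j\neq i$ between $N_0$ and $N$ (e.g., ``return to $i$ and play it ten more times only if $j$ looked bad''). So at $N_0$ there is no well-defined arm-$i$ block to splice in, and an ``independent external randomizer'' draws the block from the wrong joint distribution: the block actually executed and the continuation after $N$ both depend on the real other-arm outcomes, so the coupling you need to preserve $R$ and $\T$ path-by-path does not exist. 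A telling symptom is that your argument never uses $\lambda$: if it worked, it would prove the same statement for the constrained problem $(M1)$ (it claims to preserve $R$ and $\T$ exactly), whereas the paper explicitly remarks after its proof that the claim fails for policies required to be feasible for $(M1)$.

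The paper resolves exactly this adaptivity obstruction with a de-adaptivization step that is only sound for the Lagrangian objective. It inducts from the \emph{last} revisit time $\alpha$ backwards, so the suffix $P_\alpha$ is already revisit-free and every arm's plays in it are contiguous; it then replaces the continuation after exiting an arm $j$ --- which a priori depends on $j$'s observed outcomes --- by the single continuation $\tilde{P}^*$ maximizing $f_\lambda$, executed \emph{regardless} of $j$'s outcomes. This can only increase $f_\lambda$ because the Lagrangian objective is additive over arms and the only remaining constraint (distance cost at most $L$ on every decision path) is preserved under taking the best branch; it would not preserve $\T(P)\le T$. Only after the suffix is a fixed ``path'' over arms does the paper slide the contiguous arm-$i$ block back to time $\alpha$, using the idling property and the triangle inequality exactly as you do. Your triangle-inequality accounting for the distance cost is fine; what is missing is the mechanism that makes the moved block well defined, and that mechanism is where the Lagrangian is essential.
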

\begin{proof}
  We will use the fact that $\S_i$ is finite in our proof. Suppose $P  \in \C$ revisits an arm. Consider the last point in time, denote this $\alpha$, where the policy is at some arm $i$, makes a  decision to switch to arm $j$ and at some subsequent point in time on some decision path, revisits arm $i$.  Note that after time $\alpha$, the subsequent decision policy $P_{\alpha}$ satisfies the condition that no arm is revisited. Therefore, all plays of arm $j$ are contiguous in time within the policy $P_{\alpha}$. Let $\tilde{P}$ denote the decision policy subsequent to exiting arm $j$; this policy can be different depending on the outcomes of the plays for arm $j$. However, since any such $\tilde{P}$ never revisits arm $j$, we can simply choose that policy $\tilde{P^*}$ with maximum $f_{\lambda}(\tilde{P})$ and execute it regardless of the outcomes of the plays of arm $j$. This yields a new policy $P'$ so that $f_{\lambda}(P') \ge f_{\lambda}(P)$, and furthermore, the distance cost of $P'$ is at most $L$ in all decision paths, so that $P'$ is feasible. By the repeated application of this procedure, the decision policy $P_{\alpha}$ can be changed to the following form without decreasing $f_{\lambda}(P)$: The new policy makes an adaptive set of plays arm $j$; regardless of the outcomes of these plays, the policy switches to the same next arm $k$, again makes an adaptive set of plays; transitions to the same next arm $l$ regardless of the outcome of the plays, and so on, without ever revisiting an arm. We will refer to such a policy as a ``path'' over arms.
   
Recall that the overall decision policy is at arm $i$ just before time $\alpha$. Suppose this arm is revisited in the path $P_{\alpha}$, so that an adaptive set of plays is made for this arm. We modify the policy to make these plays at time $\alpha$, and then switch to arm $j$ regardless of the outcome of these plays. Suppose the original path $P_{\alpha}$ visited arm $i$ from arm $k$ and subsequently switched from $i$ to $l$ regardless of the outcomes of the plays of $i$, the new policy switches from arm $k$ directly to arm $l$.  Since the states of arms that are not played never changes, this movement preserves the states of all the arms, and the moved plays for arm $i$ are stochastically identical to the original plays. Further, the distance cost of the new policy is only smaller, since the cost of switching into and out of arm $i$ is removed. This eliminates $\alpha$ as the last time when the policy is at some arm which is subsequently revisited. By repeated application of the above procedure, the lemma follows.
\end{proof}

Note that the above is {\em not} true for policies restricted to be
feasible for $(M1)$. This is because the step where we use policy
$\tilde{P}$ regardless of the outcome of the plays for arm $j$ need
not preserve the constraint $\T(P) \le T$, since this depends on the
number of plays made for arm $j$.  The Lagrangian $M2(\lambda)$ makes
the overall objective additive in the (new) objective values for each
arm, with the only constraint being that the distance cost is
preserved. Since this cost is preserved in each decision branch, it is
preserved by using the best policy $\tilde{P}$ regardless of the
outcome of the plays for $j$.

\subsubsection{Orienteering and the Final Algorithm}

We now show that the optimal solution to M2$(\lambda)$ is a collection
of single-arm policies connected via a combinatorial optimization
problem termed {\em orienteering}.

\begin{definition}
\label{def:orient}
In the {\em orienteering} problem~\cite{BlumCKLMM03,BansalBCM04,CKP},
we are given a metric space $G(V,E)$, where each node $v \in V$ has a
reward $o_v$. There is a start node $s \in V$, and a distance bound
$L$. The goal is to find that tour $P$ starting at $s$, such that
$\sum_{v \in P} o_v$ is maximized, subject to the length of the tour
being at most $L$. Observe that for any given any $\varepsilon \in (0,1]$
we can discretize $\{o_v| o_v>0\}$ such that within a $(1+\varepsilon)$ approximation we can assume that $o_v$ are integers which are at most $O(n/\varepsilon)$.
\end{definition}

\noindent An immediate consequence of Lemma~\ref{prop:cool} is the
following:

\begin{corollary}
\label{prop:next}
Define a graph $G(V,E)$, where node $i \in V$ corresponds to arm $i$.
The distance between nodes $i$ and $j$ is $\ell_{ij}$, and the reward
of node $i$ is $o_i = Q_i(\lambda)$.  The optimum solution
$V(\lambda)$ of $M2(\lambda)$ is the optimal solution to the
orienteering problem on $G$ starting at node $i_0$ and respecting
rigid distance budget $L$.
\end{corollary}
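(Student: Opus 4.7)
The plan is to prove two matching inequalities: $V(\lambda)$ is bounded above by, and achievable by, the orienteering optimum on $G$ with node rewards $o_i = Q_i(\lambda)$.

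For the upper bound, I would start with any $P \in \C(L,T)$ and apply Lemma~\ref{prop:cool} to assume without loss that $P$ never revisits an arm. As noted in the proof of that lemma, such a $P$ has the form of a ``path over arms'': a fixed (deterministic) sequence of arms $i_0, i_1, \ldots, i_m$, on each of which an adaptive single-arm sub-policy $P_{i_k}$ is executed, and the total distance $\sum_k \ell_{i_{k-1} i_k} \le L$ on every decision path (and hence deterministically). Because plays on distinct arms are disjoint in time and the state of an inactive arm never changes, both reward and plays decompose additively: $R(P) = \sum_k R(P_{i_k})$ and $\T(P) = \sum_k \T(P_{i_k})$, so that
\[
R(P) - \lambda \T(P) \;=\; \sum_k \bigl(R(P_{i_k}) - \lambda \T(P_{i_k})\bigr) \;\le\; \sum_k Q_{i_k}(\lambda) \;=\; \sum_k o_{i_k},
\]
since $Q_i(\lambda)$ is by definition the maximum of $R(P_i) - \lambda\T(P_i)$ over $P_i \in \C_i(T)$. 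The right side is the total node-reward of a tour starting at $i_0$ with distance at most $L$, hence at most the orienteering optimum. Taking the maximum over $P$, $V(\lambda) \le \mathrm{Orient}(G, i_0, L)$.

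For the lower bound, take an optimal orienteering tour visiting arms $i_0, i_1, \ldots, i_m$ with $\sum_k \ell_{i_{k-1} i_k} \le L$ and total reward $\sum_k o_{i_k} = \sum_k Q_{i_k}(\lambda)$. Construct a policy $\hat P$ that traverses the arms in this order and on arm $i_k$ executes the optimal Lagrangian single-arm policy $\L_{i_k}(\lambda)$ (which belongs to $\C_{i_k}(T)$). By construction, the distance cost of $\hat P$ is at most $L$ on every decision path, so $\hat P \in \C(L,T)$ (recall the definition has no hard constraint on total plays, only on distance). Using the same additive decomposition,
\[
R(\hat P) - \lambda \T(\hat P) \;=\; \sum_k \bigl(R(\L_{i_k}(\lambda)) - \lambda \T(\L_{i_k}(\lambda))\bigr) \;=\; \sum_k Q_{i_k}(\lambda),
\]
and this feasible policy witnesses $V(\lambda) \ge \mathrm{Orient}(G, i_0, L)$.

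Combining the two directions gives $V(\lambda) = \mathrm{Orient}(G, i_0, L)$, proving the corollary. The conceptual hinge is Lemma~\ref{prop:cool}: without it, an optimal $P$ might interleave arms arbitrarily and the Lagrangian would not decouple across arms, so the combinatorial reduction to orienteering would fail. The remaining steps are essentially bookkeeping — verifying that when plays on each arm are contiguous, reward and play-count are additive (which uses the idling property that inactive arms do not change state), and that the single-arm optimizer $\L_{i_k}(\lambda)$ achieves $Q_{i_k}(\lambda)$ independently on each visited arm.
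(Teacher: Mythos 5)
Your proof is correct and follows essentially the same route as the paper's: both arguments hinge on Lemma~\ref{prop:cool} to reduce an optimal policy in $\C(L,T)$ to a deterministic ``path over arms'' with independent single-arm sub-policies, after which the Lagrangian objective decouples into $\sum_k Q_{i_k}(\lambda)$ and the problem is identified with orienteering. You merely make explicit the two matching inequalities that the paper's one-paragraph proof states in a single breath.
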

\begin{proof}
  Consider any $n$-arm policy $P \in \C$. By Lemma~\ref{prop:cool},
  the decision tree of the policy can be morphed into a sequence of
  ``super-nodes'', one for playing each arm, such that the decision
  about which arm to play next is independent of the outcomes for the
  current arm. The policy maximizing $f_{\lambda}(P)$ will therefore
  choose the best policy in $\C_i$ for each single arm as the
  ``super-node''(obtaining objective value precisely $Q_i(\lambda)$),
  and visit these subject to the constraint that the distance cost is
  at most $L$. This is precisely the orienteering problem on the graph
  defined above.
\end{proof}

\begin{theorem}
\label{quote}
For any $\epsilon \in (0,1]$ the orienteering problem has a
$(2+\epsilon)$-approximation that can be found in polynomial time
\cite{CKP}. The authors of \cite{BlumCKLMM03} 
showed that any $c$-approximation for $K=1$ case (where we
choose a single tour) extends to an $(c+1)$-approximation for the 
$K>1$ case where we choose $K$ tours\footnote{They also provided a $4$ approximation for $K=1$, along with bounds for a variety of other traversal problems.}.
\end{theorem}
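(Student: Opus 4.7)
The statement of Theorem~\ref{quote} is a compilation of two results from the prior literature (\cite{CKP} for part one and \cite{BlumCKLMM03} for part two) rather than a theorem whose proof is internal to this paper. Accordingly, the plan is not to reprove these from scratch but to indicate the structure of each argument so a reader can see why the bounds quoted are the right ones to use as a black box in the reductions of Section~\ref{metric:app}.

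For the first half, the $(2+\epsilon)$-approximation for single-path orienteering in general metrics, I would simply invoke the result of Chekuri, Korula, and P\'al~\cite{CKP}. If I had to sketch their argument, the plan would be: (i) reduce arbitrary orienteering to the \emph{min-excess path} problem, where instead of maximizing reward under a length budget, one seeks the minimum extra length beyond the shortest $s$--$t$ distance needed to collect a target reward; (ii) apply a dynamic program that decomposes an optimal walk into $O(1/\epsilon)$ approximately shortest-path segments and $O(1/\epsilon)$ ``detour'' segments; (iii) recursively invoke the min-excess approximation inside each detour. The main obstacle in that line of work is turning the additive $(1+\epsilon)$ slack on excess into a multiplicative $(2+\epsilon)$ bound on reward, which is handled through a careful guessing of the two endpoints of each detour. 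For our purposes we only need the stated black-box guarantee.

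For the second half, the plan for the reduction from a $c$-approximation for $K=1$ to a $(c+1)$-approximation for $K>1$ follows the greedy paradigm of~\cite{BlumCKLMM03}. Given the single-tour algorithm $\mathcal{A}$, iterate $K$ times: on iteration $t$, run $\mathcal{A}$ starting from $i_0$ under the distance bound $L$, but with the node rewards of all previously collected nodes set to zero; call the resulting tour $\tau_t$ with collected reward $v_t$. Output the union $\bigcup_t \tau_t$. To analyze, let $\mathsf{OPT}_K$ be the optimal $K$-tour value and let $\mathsf{OPT}^{(t)}$ be the optimal single-tour value on the residual instance at iteration $t$. The key inequality is $v_t \ge \mathsf{OPT}^{(t)}/c$, and the residual optimum satisfies $\mathsf{OPT}^{(t)} \ge \mathsf{OPT}_K - \sum_{s<t} v_s - $ (value of one tour in the original optimum that has been fully covered), so a telescoping/averaging argument over the $K$ rounds yields $\sum_t v_t \ge \mathsf{OPT}_K/(c+1)$.

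The main conceptual obstacle in the second half is isolating the ``lost'' optimum tour in each round of the analysis: one must argue that zeroing out the rewards collected by $\tau_t$ removes at most one of the $K$ tours in $\mathsf{OPT}_K$ in a charging sense, so that after $K$ greedy tours the remaining uncovered optimal reward is zero. This is precisely where the additive $+1$ in the approximation ratio comes from. With both pieces in hand, substituting $c = 2+\epsilon$ from \cite{CKP} into the reduction gives a $(3+\epsilon)$-approximation for $K>1$, matching the constants quoted in the footnote of Section~\ref{sec:intro}.
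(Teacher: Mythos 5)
Your proposal is correct and takes essentially the same approach as the paper, which offers no internal proof of Theorem~\ref{quote} at all: both halves are imported as black boxes from \cite{CKP} and \cite{BlumCKLMM03}, exactly as you do, and your substitution $c=2+\epsilon$ giving $3+\epsilon$ for $K>1$ matches the constant $c(K)$ used in Theorem~\ref{thm:future2}. One small caveat on your sketch of the greedy reduction: the clean way to get the $+1$ is not that each greedy tour ``kills'' one optimal tour, but an averaging argument --- at every round the $K$ optimal tours still cover at least $\mathsf{OPT}_K - G$ of the residual prize (where $G$ is the total greedy value), so some single tour covers at least a $1/K$ fraction of it, giving $v_t \ge (\mathsf{OPT}_K - G)/(cK)$ and, summing over the $K$ rounds, $G \ge \mathsf{OPT}_K/(c+1)$; this is immaterial here since the result is cited rather than reproved.
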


\noindent We are now ready to present the main theorem for this application,

\begin{theorem}
\label{thm:future2}
For the finite-horizon multi-armed bandit problem with metric
switching costs, for $K=1$ play at a time step 
there exists a polynomial time computable ordering of
the arms and a policy for each arm, such that a solution which plays
the arms using those fixed policies, in that fixed order without
revisiting any arm, has reward at least $1/(4+\varepsilon)$
times that of the best adaptive policy, for any $\varepsilon \in (0,1]$.
For $K>1$ the the reward is at least $1/(4.5 + \varepsilon)$.
\end{theorem}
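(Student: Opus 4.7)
The plan is to upper-bound $OPT$ by the Lagrangian $M2(\lambda) = \lambda T + V(\lambda)$ from Lemma~\ref{lem:opti}, approximately solve $V(\lambda)$ as orienteering via Corollary~\ref{prop:next} and Theorem~\ref{quote}, then randomize between two Lagrangian values using the binary-search template of Theorem~\ref{blahtheorem} and Corollary~\ref{approxblahtheorem}, and finally schedule the resulting single-arm policies in the tour's order while appealing to the truncation theorem (Theorem~\ref{statetheorem}) exactly as in the adversarial-order analysis of Theorem~\ref{arbitraryorder}.

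For any $\lambda \geq 0$, first compute the per-arm rewards $Q_i(\lambda)$ and corresponding single-arm policies $\L_i(\lambda)$ via the dynamic program of Lemma~\ref{mainlem}, and treat the $Q_i(\lambda)$ as node rewards of an orienteering instance on the metric $\{\ell_{ij}\}$ with source $i_0$ and rigid distance budget $L$. By Corollary~\ref{prop:next}, the optimal orienteering value is exactly $V(\lambda)$, and by Theorem~\ref{quote} it admits a $c$-approximation, with $c = 2+\varepsilon'$ when $K=1$ and $c = 3+\varepsilon'$ when $K>1$. Each call of this oracle returns a tour (or, for $K>1$, $K$ disjoint tours) together with the policies $\L_i(\lambda)$ on the visited arms, and the output satisfies $\lambda T + c\sum_i \bigl(R(\L_i(\lambda)) - \lambda\,\T(\L_i(\lambda))\bigr) \geq OPT$ by weak duality plus the orienteering guarantee.

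Next, replicate the binary search in the proof of Corollary~\ref{approxblahtheorem}, now driven by this orienteering oracle instead of a per-arm oracle, to locate $\lambda^- \leq \lambda^+$ differing by $O(\varepsilon\,OPT/(KT))$ and a probability $a \in [0,1)$ such that the $a/(1-a)$ randomization between the two orienteering outputs produces a collection $\{\P^*_i\}$ with $\sum_i \T(\P^*_i) = KT/c$ and $\sum_i R(\P^*_i) \geq OPT/(c+\varepsilon)$; randomization preserves the rigid distance-$L$ budget on every decision path because each component does. Then execute the $\P^*_i$ in the order prescribed by the selected orienteering tour (one tour per machine when $K>1$), truncating when the horizon $T$ is exhausted. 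Since this order is effectively adversarial, the truncation theorem applied arm by arm yields
\[
R \;\geq\; \sum_i \Bigl(1 - \tfrac{1}{KT}\sum_{j<i}\T(\P^*_j)\Bigr) R(\P^*_i) \;\geq\; \Bigl(1-\tfrac{1}{c}\Bigr)\sum_i R(\P^*_i),
\]
so the total approximation ratio is at most $c(c+\varepsilon)/(c-1)$, which is $4+O(\varepsilon)$ for $c=2+\varepsilon'$ and $4.5+O(\varepsilon)$ for $c=3+\varepsilon'$; rescaling $\varepsilon$ gives the claimed constants.

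The main obstacle I anticipate is in the $K>1$ regime: Blum et al.'s reduction outputs $K$ tours each of distance at most $L$, but one tour can a priori absorb most of the $KT/c$ total expected plays, oversubscribing that machine's horizon $T$ and sending truncation losses beyond the clean $(1-1/c)$ bound. I would address this either by strengthening each orienteering call with a rigid per-tour play-time budget (still $c$-approximable since it is an additional packing constraint on a polynomially bounded quantity, cf.\ the discretization in Definition~\ref{def:orient}), or by carrying out the concavity argument of Figure~\ref{concavechainfig} per machine using the fact that $\T(\L_i(\lambda)) \leq T$ by virtue of the $\S_i(T)$ truncation, and verifying that in aggregate the loss stays within the $(1-1/c)$ factor. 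Checking that either modification preserves both Lemma~\ref{lem:opti} and the orienteering approximation is the technical crux that remains.
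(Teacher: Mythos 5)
Your proposal follows essentially the same route as the paper's proof of Theorem~\ref{thm:future2}: Lagrangify the play-count constraint, reduce $V(\lambda)$ to orienteering via Corollary~\ref{prop:next}, run the binary search of Corollary~\ref{approxblahtheorem} with the orienteering oracle playing the role of the $c$-approximation to $\sum_i Q_i(\lambda,\CC)$ (with $\CC$ null), and schedule in tour order using the adversarial-order truncation bound, arriving at the same ratio $c(c+\varepsilon)/(c-1)$ and hence the same constants $4+\varepsilon$ and $4.5+\varepsilon$. The $K>1$ obstacle you flag (one of the $K$ tours absorbing most of the $KT/c$ expected plays and oversubscribing its machine's horizon) is a genuine subtlety, but note that the paper's own proof does not address it either --- it silently applies the single-list parallel-machine bound $R'' \geq \bigl(1 - \frac{1}{KT}\sum_j \T(\P_j)\bigr)\sum_i R(\P_i)$ --- so your proposed per-tour budget or per-machine accounting would, if carried out, actually be more careful than the published argument.
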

\begin{proof}
  We will use Corollary~\ref{approxblahtheorem} and use the scheduling
  policy in Figure~\ref{fh2} setting $\alpha=1$. Note that using
  Lemma~\ref{prop:next}, the constraints $\CC$ are null! The traversal
  constraint does not imply any constraints on the internals of a
  single arm policy. However $OPT(\CC)$ still has a rigid constraint of   a traversal cost of at most $L$. 

  Based on the $c(K)$ approximation algorithm and Corollary~\ref{approxblahtheorem} we have a collection of policies $\{\P_i\}$
  such that (i) $\sum_i \T(\P_i) = KT/c(K)$ and (ii) 
  $c(K) \sum_i R(\P_i) \geq
  OPT(K)$ where $c(K)$ is the approximation factor determined by
  Theorem~\ref{quote} and $OPT(K)$ is the corresponding optimum
  solution (again under the rigid traversal cost $L$). Now observe that the reward $R''$ we obtain satisfies:
  \[ R'' \geq \left( 1 -\frac{\sum_{j} \T(\P'_j)}{KT} \right) \sum_i
  R(\P'_i) \geq \left( 1 -\frac{1}{c(K)} \right) \frac1{c(K)} OPT(K)
  \] 
  The theorem follows using $c(1)=2+\epsilon$ and $c(K)=3+\epsilon$
  where $\epsilon=\varepsilon/12$. 
  Note that if $c(K) \leq 2$ then we should have used 
$\alpha=\min\{1,c(K)/2\}$.
\end{proof}

\paragraph{Remark:} Note that the above technique is powerful enough
to approximate any switching cost problem as long as the basic
combinatorial problem is approximable. We note that there exists
approximations for asymmetric distances (directed graphs with
triangle inequality) and other traversal problems in
\cite{BlumCKLMM03}. Modifications to Theorem~\ref{thm:future2} 
will provide a solution for all such problems.

\section{Multi-armed Bandits with Delayed Feedback}
\label{sec:delay}
In this variant of the MAB problem, if an arm $i$ is played, the
feedback about the reward outcome is available after $\delta_i$ time
steps. Once again the budgets of the arms is encoded in the respective
state spaces.  In this section we assume that there are no switching
or traversal dependent costs.  Given the algorithms and analysis we
propose for handling delayed feedback, such additional constraints can be handled using the ideas of the
preceding sections.

From an analysis standpoint, the idea of truncation is not
(immediately) useful. This is because a policy can (and should) be
``back-loaded'' in the sense that if we consider a single arm, most of
the plays are made towards the end of the horizon (possibly because
the policy is confident that the reward of this arm is large and well
separated from the alternatives). Therefore truncating the horizon (by
any factor) may cause these good plays to be eliminated and as a
consequence we would not have any guarantee on the expected reward.
In what follows we will introduce two techniques that avoid the
problem mentioned:

\begin{enumerate}[(a)]\parskip=0in
\item A {\em Delay Free Simulation}; where at some point of time in the policy, the delays become 
irrelevant.
\item A {\em Block Compaction Strategy}; where the plays of a policy are moved earlier in time.
\end{enumerate}

Of course, both of these ideas lose optimality, but interestingly that
loss of optimality can be bounded. These two techniques are similar,
they both increase the number of plays, and yet are useful in very
different regimes. The delay free simulation idea, by itself gives us
a $(2(1+\epsilon) + 32(y+y^2))$-approximation for any $\epsilon \in
(0,1]$ where $y=\max_i \delta_i/\sqrt{T}$. This implies that when the
delay is small, the result is very close to the finite horizon MAB
result discussed in Section~\ref{finite} except that we are not using
irrevocable policies. For $\max_i \delta_i \leq \sqrt{T}/50$ this 
approximation ratio is less than $3$.
However this idea ceases to be less useful by itself as $\max_i
\delta_i$ increases. 
The block compaction strategy (in conjunction with
the delay free simulation) allows an $O(1)$ approximation even when
$\delta_i$ are large as long as $21(2\delta_i+1) (1+\log \delta_i) \leq T$ for all $i$.

Note that it is not immediate how to construct good policies for
$\max_i \delta_i$ is a non-trivial constant (say $100$) or even the
larger regime without losing factors proportional to $\max_i
\delta_i$. Thus it is doubly interesting that these two regimes
naturally emerge from the analysis of approximation factors -- these
two regimes expose two different scheduling ideas. We come a full
circle from the discussion in Ehrenfeld~\cite{sched}, where the
explicit connections between stopping rules for delayed feedback in
the two bandit setting and scheduling policies were considered.

\paragraph{The Overall Plan and Roadmap:} We will bound the optimum
using single arm policies. In both regimes of $\max_i \delta_i$,
instead of solving the relaxations, we first show that there exists a
subclass of policies which (i) have much more structure, (ii) preserve
the approximation bound with respect to the globally optimum policies
and (iii) are easy to find. We first present the basic notation and
definitions in Section~\ref{basicdelay}. We then discuss the regime
where $\max_i \delta_i \leq \sqrt{T}/10$ in Section~\ref{smalldelta}
and prove the relevant approximation bound in
Lemma~\ref{smalldeltalemma}, under the assumption that we can find the
suitable policies.  In the interest of the presentation we do not
immediately show how to compute the policies but we discuss the regime
where $\delta_i$ are larger (but $21(2 \delta_i +1) (1+\log \delta_i)
\leq T$ for all $i$) next in Section~\ref{largedelta} and prove the
approximation bound in Lemma~\ref{bigdeltalemma}, under the same
assumption that we can find the suitable policies. Finally we show how to compute the
policies in polynomial time, using a linear program in
Section~\ref{proof:blocklemma}. That linear program can also be solved
to a $(1+\epsilon)$-approximation efficiently using ideas in the
previous sections.  As a consequence,

\begin{theorem}
\label{thmone}
We can approximate the finite horizon Bayesian MAB problem with
delayed feedback to a $2(1+\epsilon) + 32(y+y^2)$ approximation where
$y=\max_i \delta_i/\sqrt{T}$ for any $\epsilon>0$ in polynomial time.
If the delays are small constants or $o(\sqrt{T})$ then this result
implies a $(2+ o(1))$-approximation which seamlessly extends
Theorem~\ref{unitmab}.
For $\max_i \delta_i \leq T/(48 \log T)$ we provide a 
$O(1)$ approximation in polynomial time. 
\end{theorem}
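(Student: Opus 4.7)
\textbf{Proof proposal for Theorem~\ref{thmone}.}

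The plan is to apply the three-step recipe from Section~\ref{sec:intro}: formulate a weakly-coupled LP relaxation over single-arm policies, solve it via a Lagrangian as in Theorem~\ref{blahtheorem}, and schedule the resulting single-arm policies as in Figure~\ref{fh}. The first conceptual difficulty is that the natural state space of a single-arm policy with delay $\delta_i$ has size exponential in $\delta_i$, since it must track all outstanding unresolved plays; so I would restrict attention to a structured subclass of single-arm policies with state space polynomial in $|\S_i|$ and $T$, and separately argue that this subclass loses only a constant factor relative to the global optimum.

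For the small-delay regime ($y = \max_i \delta_i/\sqrt{T} \leq 1/50$), I would use delay-free simulation: a single-arm policy for arm $i$ executes an inner delay-free policy $\hat P_i$, but inserts $\delta_i$ extra initial ``blind'' plays so that by the time $\hat P_i$ needs feedback from play $t$, that feedback has arrived. The compact LP coupling constraint then reads $\sum_i (\T(\hat P_i) + \delta_i\cdot \mathbb{1}[\T(\hat P_i)>0]) \leq KT$, and the optimum of this LP still upper-bounds $OPT$ up to an additive loss controlled by how many arms are actually started. Applying the scheduling of Figure~\ref{fh} and Lemma~\ref{greedyorder} in a rescaled form (analogous to the $\alpha$-rescaling of Figure~\ref{fh2}) yields a $2(1+\epsilon)$-factor from the core argument plus an additive penalty from the initial $\delta_i$ overhead per started arm; bounding the number of ``active'' arms by roughly $\sqrt{T}/\max_i\delta_i$ and summing gives the claimed $32(y+y^2)$ term.

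For the larger-delay regime ($\max_i\delta_i \leq T/(48\log T)$), I would use block compaction: a single-arm policy commits to playing arm $i$ in consecutive blocks of length $\delta_i$, so that at the start of each block the feedback from the previous block is fully available, and only the information within a block is lost. This converts the delayed problem on arm $i$ into an effectively delay-free sequential decision problem with at most $T/\delta_i$ atomic decisions. To handle heterogeneous $\delta_i$ cleanly I would group arms geometrically by delay (this is where the $\log T$ factor enters the feasibility condition). The resulting weakly-coupled LP is then solved by the Lagrangian of Theorem~\ref{blahtheorem} with the outer scheduling of Figure~\ref{fh2}, and the loss compared to the finite-horizon case is an extra constant factor coming from (i) block rounding and (ii) the $\alpha$-rescaling needed to absorb the block overhead.

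The main obstacle is the structural lemma: showing that the optimum over the restricted classes (delay-free-simulated or block-compacted single-arm policies) is within a constant factor of the true $OPT$. This is the analogue, for delayed feedback, of the Truncation Theorem~\ref{statetheorem}, and it must contend with the fact that delayed feedback genuinely breaks the exchange property of Section~\ref{sec:intro} --- plays of an arm with outstanding feedback cannot be moved around freely. I would argue this via a coupling in which the structured policy is simulated ``in lockstep'' with the optimum on each decision path, using the $i.i.d.$-based charging scheme in the proof of Theorem~\ref{statetheorem} to argue that expected reward per play is preserved regardless of the specific observation pattern; the loss then comes only from the fraction of plays that occur within the first $\delta_i$ steps of an arm (small-delay case) or within a block before its feedback is needed (large-delay case). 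Once this structural lemma is in hand, combining it with Theorem~\ref{blahtheorem}, Corollary~\ref{approxblahtheorem}, and Lemma~\ref{greedyorder} gives Theorem~\ref{thmone}.
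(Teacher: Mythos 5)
Your high-level architecture matches the paper's: restrict to a structured subclass of single-arm policies with polynomial state space, prove a structural lemma bounding the loss against $OPT$, solve the weakly coupled LP via the Lagrangian machinery of Theorem~\ref{blahtheorem}, and schedule greedily with an $\alpha$-rescaling. You also correctly identify the two regimes and name the two mechanisms. However, both regimes as you have set them up contain genuine gaps.

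In the small-delay regime, inserting $\delta_i$ blind plays \emph{at the start} of every arm that is touched does not work: your modified coupling constraint, which charges an overhead of $\delta_i$ plays to every started arm, is not a relaxation of $OPT$, since $OPT$'s single-arm projections pay no such overhead, and there is no bound of the form ``$\approx \sqrt{T}/\max_i\delta_i$ active arms'' --- the LP mass can be spread over arbitrarily many arms each played an expected $o(\delta_i)$ times, making the total overhead far exceed $KT$. The paper's fix is a hybrid policy: every single-arm policy is first made \emph{block structured} (bursts of at most $\delta_i+1$ plays followed by $\delta_i$ idle steps, Lemma~\ref{lemone}, which doubles the horizon but adds no plays), and the $\delta_i$ buffer plays that enable delay-free mode are inserted only \emph{after} the policy has already made $r\delta_i$ plays on that decision path (Lemma~\ref{lemtwo}). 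This amortizes the overhead path by path, giving $\T(\P')\le(1+1/r)\T(\P)$ independently of the number of arms; arms with few plays never leave block-structured mode and pay nothing. The $32(y+y^2)$ term then arises multiplicatively from optimizing $r$ against the $\gamma T$ block-structured prefix that must complete before any reward is credited --- not as an additive per-arm penalty. In the large-delay regime, what you call ``block compaction'' is only the block-structuring step above; the paper's actual compaction (Lemma~\ref{lemthree}) is a further, and essential, transformation that compresses the \emph{number of blocks} by a constant factor plus an additive $1+\log\delta_i$, at the cost of multiplying the plays by $\rho=13$: each block of $x$ plays in size class $s$ is over-played to $13x$ plays, and the banked extra outcomes simulate the next six blocks of that size class at zero wall-clock cost. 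Without this, a block-structured policy with many sparse blocks occupies wall-clock horizon up to $2T$ and cannot be fit into the $\gamma T$ prefix that the scheduling analysis requires; the $\log$ in the condition $\max_i \delta_i\le T/(48\log T)$ comes from the $1+\log\delta_i$ size classes in this argument, not from geometric grouping of arms by delay. Your concluding plan for the structural lemma (lockstep coupling plus the $i.i.d.$ charging scheme of Theorem~\ref{statetheorem}) is indeed the proof technique used for both lemmas, but the two constructions above are the missing content without which the constants, and indeed the feasibility of the relaxation, cannot be established.
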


\subsection{Weakly Coupled Relaxation and Block Structured Policies}
\label{basicdelay}

\paragraph{Single arm policies:} In the case of delayed
feedback, describing a single-arm policy is more complicated. This
policy is now a (randomized) mapping from the current state of the arm
to one of the following actions: (i) make a play; (ii) wait some
number of steps (less or equal to $T$), so that when the result of a
previous play is known, the policy changes state; (iii) wait a few
steps and make a play (without extra information); or (iv) quit.

\begin{definition} 
Let $\C_i(T,\delta_i)$ be the set of all single-arm
policies over a horizon of $T$ steps.
\end{definition}

As in Section~\ref{finite}, we will use the LP system 
(\lpdelay) to bound of the
reward of the best collection of single-arm policies - the goal will
be to find one policy per arm so that the total expected number of
plays is at most $T$ and the expected reward is maximized. This is
expressed by the following relaxation:

\[ \lpdelay =  \mbox{Max}_{\{P_i \in \C_i(T,\delta_i)\}} \left\{  \sum_i R(P_i) \ |\   \sum_i  \T(P_i)  \leq  T \right \} \]

 The
{\em state} of the system is now captured by not only the current
posterior $u \in \S_i$, but also the plays with outstanding feedback
and the remaining time horizon. Note that the state encodes plays with
outstanding feedback, and this has size $2^{\delta_i}$, which is
exponential in the input. Therefore, it is not even clear how to solve
(\lpdelay) in polynomial time, and we use (\lpdelay) as an initial 
formulation for the purposes of an upper bound.

\begin{definition}
  A single-arm policy is said to be {\em Block Structured} if the
  policy executes in phases of size $(2\delta_{i}+1)$. At the start of
  each phase (or block), the policy makes at most $\delta_{i}+1$
  consecutive plays. The policy then waits for the rest of the block
  in order to obtain feedback on these plays, and then moves to the
  next block. A block is defined to be {\em full} if exactly
  $\delta_i+1$ plays are made in it. Let the class of Block Structured policies for arm $i$ over a horizon $T$ be $\C^b_i(T,\delta_i)$.
\end{definition}

\noindent 
We first show that all single-arm policies can be replaced with block
structured policies while violating the time horizon by a constant
factor. The idea behind this proof is simple -- we simply insert
delays of length $\delta_i$ after every chunk of plays of length
$\delta_i$.

\begin{lemma}
\label{lemone}
Any policy $\P \in \C_i(T,\delta_i)$ can be converted it to a Block Structured policy  $\P' \in \C^b_i(2T,\delta_i)$ such that $R(\P) \leq R(\P')$  and $\T(\P') \leq \T(\P) $ (note that the horizon increases).
\end{lemma}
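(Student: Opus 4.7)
My plan is to construct $\P'$ from $\P$ by partitioning $\P$'s plays into bursts and devoting one block of $\P'$ to each burst. Concretely, on every decision path of $\P$ let the plays occur chronologically at times $t_1 < t_2 < \cdots < t_N$ and partition them into maximal \emph{bursts}: a burst is a maximal contiguous run of plays none of whose feedback is available to any later play in the same run. Because the play at $t_j$ releases its feedback only at $t_j + \delta_i$, each burst spans strictly fewer than $\delta_i$ time steps of $\P$ and so contains at most $\delta_i \le \delta_i+1$ plays; moreover, two consecutive bursts must satisfy $t^{b+1}_{\text{start}} \ge t^{b}_{\text{start}} + \delta_i$, since otherwise the first play of the later burst would merge into the earlier one. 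The policy $\P'$ dedicates one block of length $2\delta_i+1$ per burst, placing the burst's plays consecutively at the block's start and waiting out the remaining slots, which is more than enough time for all feedback from the block to arrive before the next block begins.

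To establish $R(\P) \le R(\P')$ and $\T(\P') \le \T(\P)$ I couple the two executions. At the start of the $b$-th block of $\P'$, feedback from every play in the first $b{-}1$ bursts is available. By the burst property, the feedback $\P$ uses at any play inside the $b$-th burst is a subset of exactly this same information, and within a burst $\P$ uses no feedback originating from inside the burst, so $\P$'s entire sequence of plays throughout the burst is already determined as a function of the feedback at the burst's start. Therefore $\P'$ can commit upfront at the block's start to exactly the same plays, and, coupling outcomes state-by-state, reproduce $\P$'s decision path pathwise, yielding in fact equality of both reward and play count.

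For the horizon, on a decision path with $m$ bursts $\P'$ uses time $m(2\delta_i+1)$. Combining the first-play-to-first-play separation $\ge \delta_i$ with $t^m_{\text{start}} \le T$ gives $m \le T/\delta_i + O(1)$, which substitutes to bound $\P'$'s total time by $2T$ up to lower-order corrections that can be absorbed into the final block. Since this bound is pathwise, it carries over to $\P'$'s randomized execution. The main subtlety I expect is in this last step: the naive estimate yields $2T + O(T/\delta_i + \delta_i)$ rather than exactly $2T$, so matching the stated bound requires either a sharper pathwise accounting that exploits $\P$'s plays-versus-waits structure tightly, or a mild modification in which short bursts are merged with the trailing waits of neighboring blocks. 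The coupling in the reward step is also delicate because the burst decomposition is itself random, so it has to be constructed over $\P$'s full decision tree rather than a single trajectory.
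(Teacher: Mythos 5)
Your burst decomposition is a genuinely different route from the paper's, and the difference is precisely where your argument fails. The paper does not partition the \emph{plays} of $\P$; it partitions the \emph{time axis} of $\P$ into consecutive windows of $\delta_i+1$ steps each (whether or not plays occur in them), observes that no feedback generated inside a window is usable inside that same window, and maps each window to one block of $2\delta_i+1$ steps. The horizon therefore dilates by the fixed factor $\frac{2\delta_i+1}{\delta_i+1}<2$, with nothing left over to absorb. Your play-based bursts, under your own convention that a play at time $t$ releases usable feedback at $t+\delta_i$, have consecutive starts separated by only $\delta_i$, so a decision path can contain up to roughly $T/\delta_i$ bursts, each consuming a full block of $2\delta_i+1$ steps of $\P'$. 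The multiplicative factor is then $\frac{2\delta_i+1}{\delta_i}=2+\frac{1}{\delta_i}>2$, and the resulting excess of order $T/\delta_i$ is \emph{not} a boundary effect that can be ``absorbed into the final block'': consider a policy that makes exactly one play every $\delta_i$ steps, so that every burst is a singleton and $\P'$ spends $2\delta_i+1$ steps where $\P$ spent $\delta_i$. You flag the problem yourself, but neither of your proposed repairs is carried out, so the containment $\P'\in\C^b_i(2T,\delta_i)$ is not established; this is a genuine gap, not a bookkeeping nuisance. (The gap is repairable within your framework, but only by first adopting the paper's off-by-one convention --- feedback from a play at step $t$ is usable from step $t+\delta_i+1$, which is what makes the paper's windows of $\delta_i+1$ steps feedback-free --- so that bursts hold up to $\delta_i+1$ plays and are separated by at least $\delta_i+1$ steps, and then accounting for the last burst against $\P$'s remaining horizon rather than charging it a full block. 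At that point you have essentially re-derived the fixed-window argument with extra case analysis.)

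The coupling half of your argument is essentially sound but needs one correction of wording: $\P$'s decisions inside burst $b$ may depend on feedback from plays in \emph{earlier} bursts that arrives in the middle of burst $b$, so the burst's play sequence is not determined by ``the feedback at the burst's start'' as seen by $\P$. What is true, and what you actually need, is that the sequence is determined by the outcomes of all plays preceding the burst, and all of those outcomes are in $\P'$'s possession at the start of block $b$ because every earlier block ends with a $\delta_i$-step wait. With that restatement the pathwise coupling gives $R(\P')=R(\P)$ and $\T(\P')=\T(\P)$, matching the paper; your worry that the randomness of the burst decomposition is an obstacle is unfounded, since the decomposition is measurable with respect to information $\P'$ holds at each block boundary.
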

\begin{proof}
  We can assume that the policy makes a play at the very first time
  step, because we can eliminate any wait without any change of
  behavior of the policy.

  Consider the actions of $\P$ for the first $\delta_{i}+1$
  steps, the result of any play in these steps is not known before all
  these plays are made. An equivalent policy $\P'$ simulates $\P$
  for the first $\delta_{i}+1$ steps, and then waits for $\delta_{i}$
  steps, for a total of $2\delta_{i}+1$ steps.  This ensures that
  $\P'$ knows the outcome of the plays before the next block begins.

Now consider the steps from $\delta_{i}+2$ to $2\delta_{i}+3$ of
  $\P$. As $\P$ is executed, it makes some plays
  possibly in an adaptive fashion based on the outcome of the plays in
  the previous $\delta_{i} + 1$ steps, but not on the current
  $\delta_{i}+1$ steps.  $\P'$ however knows the outcome of the
  previous plays, and can simulate $\P$ for these $\delta_{i}+1$
  steps and then wait for $\delta_{i}$ steps again.  It is immediate
  to observe that $\P'$ can simulate $\P$ at the cost of increasing
  the horizon by a factor of $\frac{2\delta_{i}+1}{\delta_{i}+1} < 2$.
  Observe that in each block of $2\delta_{i}+1$, $\P'$ can
  make all the plays consecutively at the start of the block without
  any change in behavior. The budgets are also respected in this process. 
  This proves the lemma.
\end{proof}

We now show how to find a set of block structured policies $\{\P_i\}$
such that each has horizon at most $2T$ and together satisfy the
following properties $\sum_i R(\P_i) \geq \lpdelay$ and $\sum_i
\T(\P_i) \leq KT$. Observe that by Lemma~\ref{lemone}, such policies
exist.

\begin{lemma}
\label{blocklemma}
  We find a collection of policies $\{\P_i | \P_i \in
  \C^b_i(2T,\delta_i)\}$ such that $\sum_i R(\P_i) \geq
  \lpdelay/(1+\epsilon)$ and $\sum_i \T(\P_i) \leq KT$ for any
  $\epsilon \in (0,1]$ in time polynomial in $\sum_i \S_i,T$ (using a
  linear program or a fast approximate solution for such).
\end{lemma}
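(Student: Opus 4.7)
The plan is to build a compact LP relaxation that optimizes over block-structured single arm policies directly, and then to solve it by the Lagrangian/binary-search machinery of Theorem~\ref{blahtheorem} and Corollary~\ref{blah2}. The key observation that makes this tractable is that within a single block of length $2\delta_i+1$, no feedback arrives, so a policy in $\C^b_i(2T,\delta_i)$ can only make a {\em non-adaptive} choice of how many of the (at most $\delta_i+1$) possible plays to make at the start of the block; all adaptivity is deferred to the next block, at which point the posterior state in $\S_i$ has been updated via Bayes' rule on those $k$ i.i.d.\ observations. Consequently the relevant state of a block-structured single-arm policy is compactly encoded by the pair (current posterior $u \in \S_i$, current block index $t$), a space of size $O(|\S_i| \cdot T/\delta_i)$.

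First I would write the analogue of (\lpone) on this compact state space. For each arm $i$, block index $t$, state $u \in \S_i$ and integer $k \in \{0,1,\ldots,\delta_i+1\}$, introduce a variable $z^i_{u,t,k}$ for the probability that at block $t$ the policy for arm $i$ is in state $u$ and decides to make exactly $k$ plays in that block, with $w^i_{u,t} = \sum_k z^i_{u,t,k}$. The flow constraints use the $k$-step transition probabilities $p^{(k)}_{uv}$ in $\S_i$, which are well defined in the martingale state space. By the martingale property, the expected reward of $k$ plays from state $u$ is exactly $k\cdot r_u$, so the objective is $\sum_{i,t,u,k} k\, r_u\, z^i_{u,t,k}$ and the single coupling constraint is $\sum_{i,t,u,k} k\, z^i_{u,t,k} \le KT$. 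By Lemma~\ref{lemone}, for every feasible solution to (\lpdelay) there is a block-structured simulation with horizon $2T$ of at least the same reward and no more plays, so the optimum of this compact LP (taken over horizon $2T$) is at least $\lpdelay$.

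Next I would apply exactly the weak-duality recipe of Section~\ref{sec:compact}: take the Lagrangian of the coupling constraint with multiplier $\lambda \ge 0$, which decouples the problem arm by arm. The per-arm problem $Q_i(\lambda)$ becomes $\max_{P_i\in\C^b_i(2T,\delta_i)} R(P_i)-\lambda\T(P_i)$, and it is solved by a straightforward bottom-up DP on the compact state space: at $(u,t)$, set
\[
\mathrm{Gain}(u,t)=\max\Bigl\{0,\ \max_{1\le k\le \delta_i+1}\Bigl(k(r_u-\lambda)+\sum_{v} p^{(k)}_{uv}\,\mathrm{Gain}(v,t+2\delta_i+1)\Bigr)\Bigr\},
\]
with $\mathrm{Gain}(\cdot,t)=0$ once $t$ exceeds the horizon. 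This DP is polynomial in $|\S_i|\cdot T$ (the $k$-step transitions $p^{(k)}_{uv}$ are obtained by iterating the one-step transitions $\delta_i+1$ times on $\S_i$), and Gain is monotonically nonincreasing in $\lambda$, so the structural lemmas that underlie Theorem~\ref{blahtheorem} carry over verbatim. Binary search on $\lambda$ then produces $\lambda^-<\lambda^+$ and a convex combination $a\L_i(\lambda^-)+(1-a)\L_i(\lambda^+)$ whose total expected plays equal $KT$ and whose total expected reward is at least $\lpdelay/(1+\epsilon)$.

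The main obstacle I anticipate is the bookkeeping required to keep the compact state space genuinely polynomial while still faithfully representing block-structured policies: one has to verify that the number of reachable posteriors $v$ after $k \le \delta_i+1$ plays from $u$ is polynomial in $|\S_i|$ (so that both the DP transitions and the LP are of polynomial size), and one has to check that the martingale accounting of reward $k\cdot r_u$ inside a block is consistent with the non-adaptivity constraint within the block. Once these two points are in place, the rest of the argument is exactly the Lagrangian/binary-search template already established in Section~\ref{sec:compact}, yielding the $(1+\epsilon)$-approximate collection of block-structured policies asserted by the lemma.
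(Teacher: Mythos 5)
Your proposal is correct and matches the paper's proof in all essentials: the paper's system (LPD2) is exactly your compact LP, with variables $y_{i\sigma\ell}$ for the probability of making $\ell$ consecutive plays from block-start state $\sigma$, multi-step transition probabilities $p_i(\sigma,\sigma',\ell)$ for flow conservation, the single coupling constraint $\sum_{i,\sigma,\ell}\ell\, y_{i\sigma\ell}\le KT$, and Lemma~\ref{lemone} supplying $LPD2\ge\lpdelay$. The only cosmetic differences are that the paper omits the explicit block index $t$ (the posterior state already suffices for the flow constraints) and leaves the Lagrangian/binary-search solution implicit as the ``fast approximate solution,'' which you spell out.
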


\noindent We relegate the proof of Lemma~\ref{blocklemma} to
Section~\ref{proof:blocklemma} and continue with the algorithmic
development.

\subsection{Delay Free Simulations and Small Delays}
\label{smalldelta}
In this section we discuss the case when the delays are small in comparison to the horizon, i.e., $\max_i \delta_i \leq \sqrt{T}/50$. We next introduce the delay 
free simulation idea.
Let $r=\sqrt{T}/(2 \max_i \delta_i)$ and $\gamma=4r(\max_i
\delta_i)^2/T$.  The choice of these parameters will become clear
shortly.

\begin{definition}
Define a policy $\P_i$ to be in a delay-free mode at time $t$ 
if at time $t$ the policy $\P_i$ makes a play, but 
uses the outcome of a play made at time $t-\delta_i$ (or earlier). The outcome of the current play is available at time $t+\delta_i$ and can 
be used subsequently. 
\end{definition}

\noindent Observe that the truncation idea of
Theorem~\ref{statetheorem} applies to a policy once it is in a delay
free mode.

\begin{definition}
Define a policy $\P_i$ to be a $(\gamma T,T)$-horizon policy if the policy is block structured on all decision paths till a horizon of $\gamma T$ and subsequently makes at most $T$ plays in a delay free mode.
A pictorial depiction of a $(\gamma T,T)$-horizon policy is shown if Figure~\ref{twohorizonfig}.
\end{definition}

\begin{figure*}[htbp]
\begin{center}
\subfigure[]{
\includegraphics[scale=0.5]{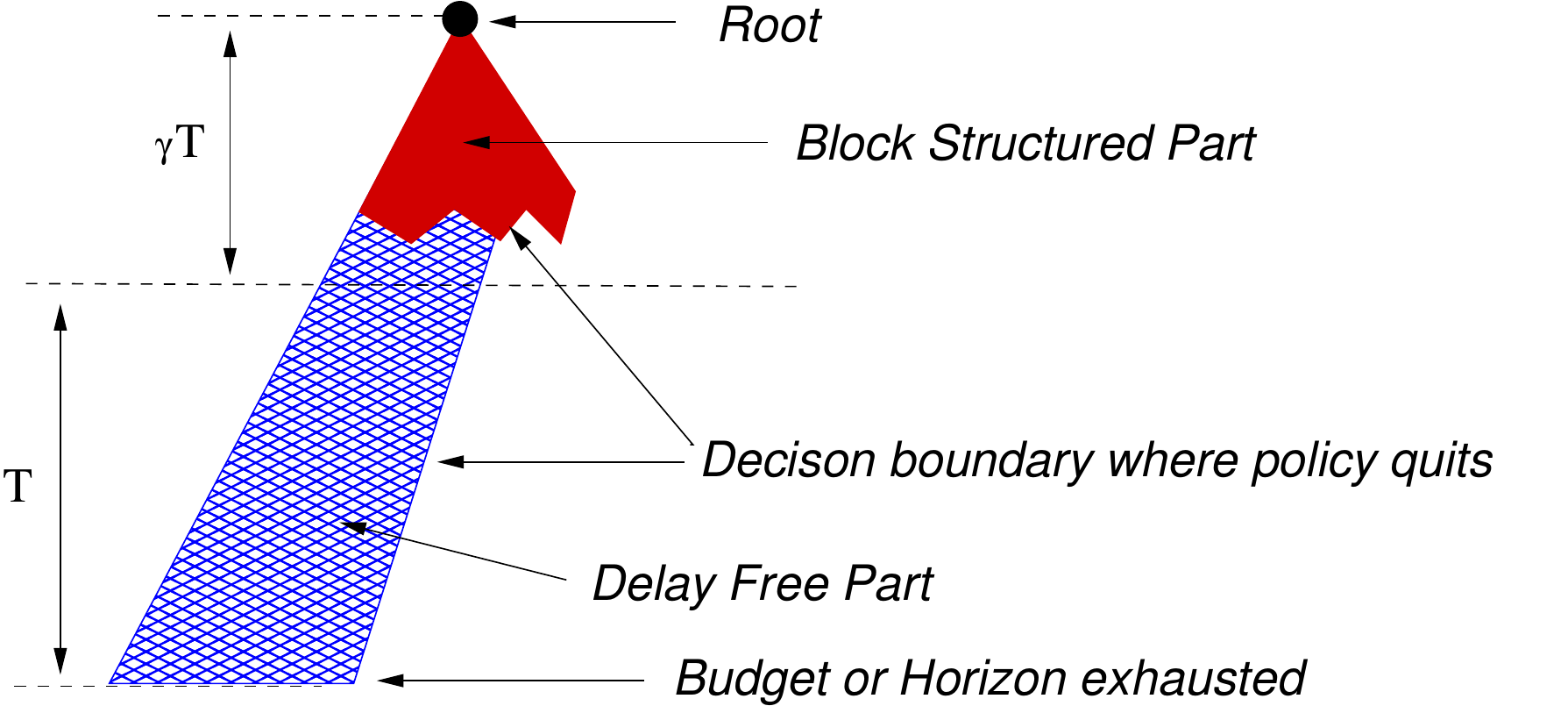}
\label{twohorizonfig-a}
}
\subfigure[]{
\includegraphics[scale=0.5]{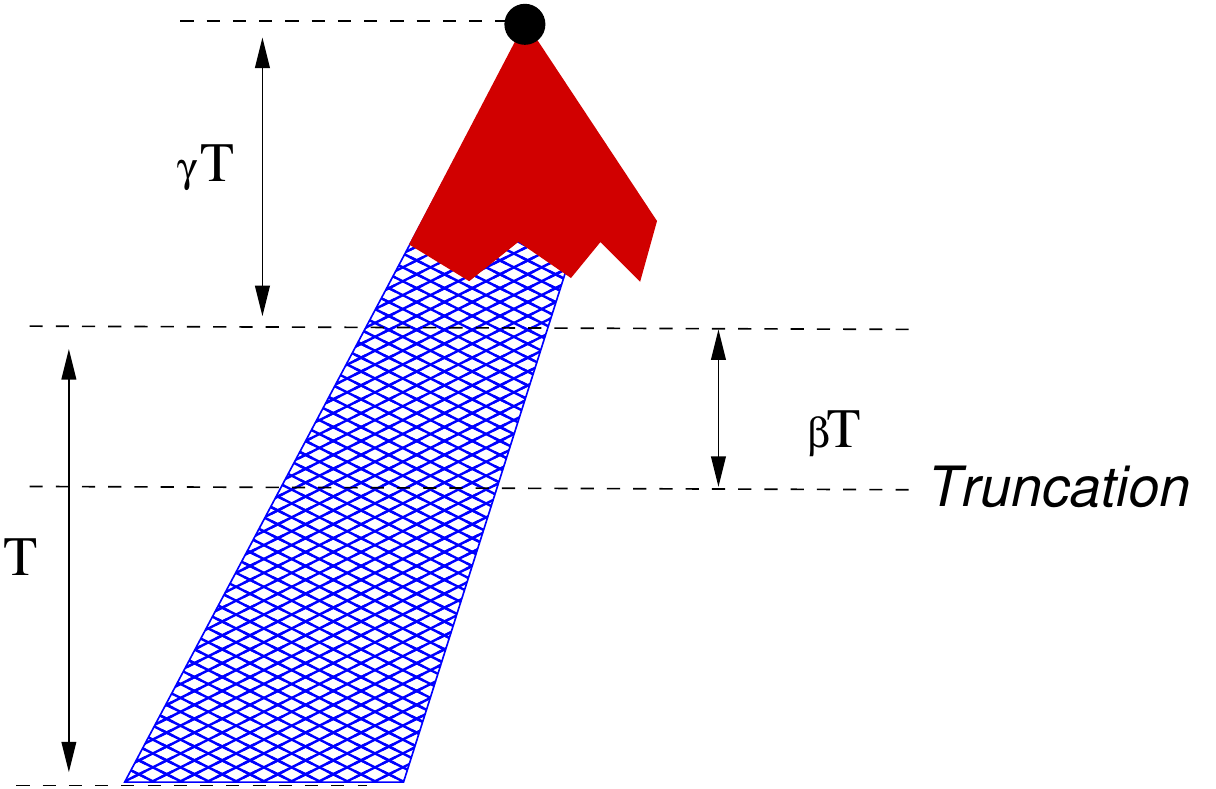}
\label{twohorizonfig-b}
}
\caption{\label{twohorizonfig} Part (a) shows an example $(\gamma
  T,T)$-horizon policy; the part marked in solid red is block
  structured and the patterned blue part is in delay free mode.  Part
  (b) shows the same policy when it is truncated at horizon $(\gamma
  +\beta)T$ where $\beta \in (0,1)$. 
  In the accounting step we 
  ensure that to receive any reward from the policy, the entire block
  structured part has to be played, i.e., $\beta>0$. Subsequently 
  we can argue that the reward of the policy in (b) is at least $\beta$ times that of the policy in (a) -- note that the argument is made on a path-by-path basis and we do not need a strong characterization of the decision boundary.}
\end{center}
\end{figure*}

\begin{lemma}
\label{lemtwo} 
Given any block structured policy $\P \in \C^b_i(2T,\delta_i)$ we can  construct a $(\gamma T,T)$-horizon policy $\P'$ such that  $R(\P) \leq R(\P')$ and $\T(\P') \leq (1+ \frac{1}{r}) \T(\P)$. 
\end{lemma}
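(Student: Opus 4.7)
The plan is to construct $\P'$ as the concatenation of two phases that mirror the structure of a $(\gamma T, T)$-horizon policy. In phase~1 (time steps $1$ through $\gamma T$), $\P'$ simply copies $\P$: since $\P$ is block-structured, so is $\P'$ here, and at time $\gamma T$ the two policies have identical state distributions and identical observed outcomes from their first $\gamma T/(2\delta_i+1)$ blocks. In phase~2, $\P'$ switches to delay-free mode, making a play at every time step for at most $T$ further steps, and uses those plays to emulate the remainder of $\P$. The key observation driving phase~2 is that within each block of a block-structured policy the $\delta_i+1$ plays are \emph{non-adaptive}, being determined by the outcomes of the preceding blocks; hence $\P'$ can issue these plays back-to-back in delay-free mode without any intra-block wait. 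Between two successive block emulations, $\P'$ still has to wait for the necessary outcomes to propagate through the $\delta_i$-step feedback pipeline, and during that window the delay-free mode forces $\P'$ to issue a bounded number of ``bridge'' plays.

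For the reward comparison $R(\P)\leq R(\P')$, I plan to argue play-by-play via the martingale property: every play issued by $\P'$ that corresponds to a play of $\P$ is conditioned on the same set of observed outcomes and therefore has the same expected reward as in $\P$, while each additional bridge play contributes nonnegative expected reward because $\E[D(\theta)]\geq 0$ together with the martingale property of Section~\ref{prelim} makes the posterior mean at every reachable state nonnegative. For the count bound $\T(\P')\leq (1+1/r)\T(\P)$, I will sum the bridge plays across the block transitions in phase~2. The parameters $\gamma=2\max_i\delta_i/\sqrt{T}$ and $r=\sqrt{T}/(2\max_i\delta_i)$, which satisfy $\gamma r=1$, are precisely chosen so that the per-block bridge overhead, aggregated over all block transitions in phase~2, amounts to at most $\T(\P)/r$; the remaining plays of $\P'$ are in one-to-one correspondence with plays of $\P$.

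The hardest step will be verifying that the pipelined emulation actually fits inside the $T$-play phase-2 budget with bridge overhead at most $\T(\P)/r$, and that the scheduling is feasible in the sense that at every time step in phase~2 all outcomes required by the corresponding decision of $\P$ have already propagated through the $\delta_i$-delay pipeline. This is a careful scheduling and counting argument: the role of $\gamma$ is to make phase~1 long enough that the ``early'' block transitions (which are most costly to pipeline from a cold start) are absorbed into the genuinely block-structured prefix, leaving only a small number of transitions to be bridged in phase~2, and the calibration $\gamma r=1$ is what makes the resulting overhead match the stated bound $(1+1/r)\T(\P)$.
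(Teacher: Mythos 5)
There is a genuine gap in the play-count bound, and it stems from a missing idea rather than a fixable detail. Your phase-2 accounting charges ``bridge'' plays at \emph{every} transition between consecutive block emulations: after emulating a block you must wait $\delta_i$ steps for its outcomes to propagate, and if delay-free mode forces a play per step you incur up to $\delta_i$ extra plays per block transition. Summed over the remaining blocks this is not $\T(\P)/r$; in the worst case (one play per block) it is a multiplicative overhead of order $\delta_i$, which destroys the bound $\T(\P') \leq (1+\frac{1}{r})\T(\P)$. Making phase~1 longer does not help, because the number of block transitions remaining after time $\gamma T$ can still be $\Theta(T/\delta_i)$. Conversely, if you let $\P'$ idle between block emulations instead of issuing bridge plays, the play count is fine but the horizon is not compressed at all, so you do not obtain a $(\gamma T, T)$-horizon policy. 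Your construction as described cannot satisfy both conclusions simultaneously.

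The mechanism the paper uses, and which your proposal is missing, is a \emph{one-time} buffer combined with a pipeline indexed by play number rather than by block. The switch to delay-free mode is triggered not at a fixed time but at the end of the block containing the $r\delta_i$-th play of $\P$ (which, since every block contains a play, occurs within horizon $r\delta_i(2\delta_i+1) \leq \gamma T$ on every decision path). At that moment $\P'$ makes $\delta_i$ extra plays $z(1),\ldots,z(\delta_i)$ during the idle tail of that block. Thereafter, the $t$-th subsequent play of $\P$ is realized by a physical play of $\P'$ whose \emph{decision} is driven by the outcome of the physical play made $\delta_i$ positions earlier in $\P'$'s own sequence (starting with the $z(t)$'s); since plays are now issued at consecutive time steps, that outcome has always just arrived, so there is never any stall and never any further bridge play, and since all outcomes are i.i.d.\ draws from $D_i(\theta)$ the two executions can be coupled play-by-play, giving $R(\P) \leq R(\P')$ directly. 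The total overhead is exactly $\delta_i$ plays on each decision path, amortized against the at least $r\delta_i$ plays already made before the switch, which is precisely where the factor $(1+\frac{1}{r})$ comes from. Your time-based trigger and per-transition bridging would need to be replaced by this play-count trigger and one-time pipeline fill for the argument to go through.
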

\begin{proof}
  Consider the first time the policy $\P$ makes $r\delta_i$ plays on
  some decision path. Define a new policy $\P'$ as follows: It is
  identical to $\P$ till the end of the block that contains the $r
  \delta_i$-th play and then it makes $\delta_i$ additional plays. Let
  these plays be $z(1),z(2),\ldots,z(\delta_i)$.  Note that $\P$ was
  not making any plays and waiting for the end of the block. After
  this point, consider the $t$-th play made by $\P$ where
  $t\leq\delta_i$ -- the policy $\P'$ makes the play but simply uses
  the outcome of $z(t)$ which is known by now. The outcome of the
  $t$-th play is not known (or used) and would only be used for the
  $t+\delta_i$-th play. Note that since the outcome of two plays are
  from the same underlying distribution we can couple the outcomes to
  be identical as well. If $\P$ decides to stop execution, then $\P'$
  stops execution as well.
Clearly, $\P'$ makes $\delta_i$ additional plays than $\Pp$ on any
decision path. Since $\P$ made at least $r \delta_i$ plays by
then, this shows that $\T(\P') \leq (1+ \frac{1}{r}) \T(\P)$. Since
the execution of $\P'$ is coupled play-by-play with the execution of
$\P$, it is clear that $R(\P) \leq R(\P')$.

Observe that once $\P'$ has switched to this ``delay free mode'' then
$\P'$ can now simply ignore the blocks, but would make at most $T$
plays. This switch to a delay free mode must occur within the first $r
\delta_i$ blocks since each block must have one play -- and given that
the blocks account for $2\delta_i + 1$ units of time, the switch
occurs within an horizon of $r\delta_i(2\delta_i + 1) \leq
4r\delta^2_i \leq \gamma T$.
\end{proof}

\paragraph{Using the $(\gamma T,T)$-horizon policy:} 
Using Lemma~\ref{blocklemma} we can find a collection of block
structured policies $\{\P_i\}$ such that $\sum_i R(\P_i) \geq
\lpdelay/(1+\epsilon)$ and $\sum_i \T(\P_i) \leq KT$ for any
$\epsilon>0$. Using Lemma~\ref{lemtwo} we now have a collection of
$(\gamma T,T)$-horizon single arm policies $\{\P'_i\}$ such that
$\sum_i R(\P'_i) \geq \lpdelay/(1+\epsilon)$ and $\sum_i \T(\P'_i)
\leq (1+\frac1r) KT$. But before such, consider the following
scheduling policy given in Figure~\ref{fh3}.

\begin{figure*}[ht]
\fbox{
\begin{minipage}{6.0in}
{\small
{\bf The Scheduling Policy for Delayed Feedback when $\max_i \delta_i/\sqrt{T}$ is small}
\begin{enumerate}
\item Obtain a collection of $(\gamma T,T)$-horizon (as described by Lemma~\ref{lemtwo}) single-arm policies $\P'_i$ such that 
$\sum_i R(\P'_i) \geq \lpdelay/(1+\epsilon)$ for some $\epsilon \in (0,1]$ and $\sum_i \T(\P'_i) \leq (1+\frac1r) KT$ where $r>0$. Note $\gamma \leq 1/4$.
\item Order the arms in order of $\frac{R(\P'_i)}{\T(\P'_i)}$, a lower order indicates a higher priority. \label{needno5}
\item Let $\alpha=\frac{1-\gamma}{1+\frac1r}$.
For each $i$, create policy $\P''_i$ which chooses to plays $\P'_i$ 
with probability $\alpha$ and with probability $(1-\alpha)$ is the null policy. \label{alphastep}
\item A policy can be {\bf active} or {\bf passive}. 
Initially all policies are active.
\item A policy can be {\bf ready} or {\bf waiting}. Initially all non-null 
policies are ready. Note that some $\P''_i$ will not be ready 
since they are the null policy.
\item Inspect the {\bf active} 
policies in the order specified in Step~\ref{needno5}
and find the first $K$ policies (the highest priority ones) which are {\bf ready}, and make the 
corresponding plays.
\item Based on feedback received for different arms (these could be 
arms that were played long ago) update the state of all the arms.
\begin{enumerate}
\item If the policy for an arm quits then that policy is marked {\bf passive}.
\item If a policy has no plays for which it is waiting for feedback 
(and has not decided to quit) the policy is marked {\bf ready}, otherwise the policy remains {\bf waiting}.
\end{enumerate}
\item If the horizon $T$ is reached, the overall policy stops execution. \label{horizonstep}
\end{enumerate} }
\end{minipage}
}
\caption{A Scheduling Policy for Finite Horizon MAB with delayed feedback. Note that this policy combines elements of the policies in Figure~\ref{fh} and Figure~\ref{fh2}.\label{fh3}}
\end{figure*}

\begin{lemma}
\label{smalldeltalemma}
The expected reward $R$ of the scheduling policy in Figure~\ref{fh3} is at least $\lpdelay/(2(1+\epsilon) + 32(y+y^2))$ where $y=\max_i \delta_i/\sqrt{T}$. For $\max_i \delta_i \leq \sqrt{T}/50$ the approximation ratio does not exceed $3$ for suitably small $\epsilon \in (0,0.1]$. 
\end{lemma}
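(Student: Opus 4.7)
The plan is to adapt the analysis of Lemma~\ref{greedyorder} to the scaled $(\gamma T, T)$-horizon single-arm policies $\P''_i$, using the path-by-path accounting depicted in Figure~\ref{twohorizonfig}(b) to handle the mandatory block-structured prefix. First I will record the consequences of the $\alpha$-scaling in Step~\ref{alphastep}: Lemma~\ref{blocklemma} gives $\sum_i R(\P_i) \geq \lpdelay/(1+\epsilon)$ and $\sum_i \T(\P_i) \leq KT$; Lemma~\ref{lemtwo} upgrades these to $R(\P'_i) \geq R(\P_i)$ and $\T(\P'_i) \leq (1+1/r)\T(\P_i)$; and with $\alpha = (1-\gamma)/(1+1/r)$ the scaled policies obey $\sum_i \T(\P''_i) \leq (1-\gamma)KT$ and $\sum_i R(\P''_i) \geq \alpha\,\lpdelay/(1+\epsilon)$. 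Under the parameter choices, $\gamma = 2y$ and $1/r = 2y$, so $\alpha = (1-2y)/(1+2y)$.

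Next, I will bound the per-arm expected reward. For each arm $i$ in the priority order of Step~\ref{needno5}, let $H_i$ denote the remaining wall-clock horizon when $\P''_i$ begins execution. As in Lemma~\ref{greedyorder}, and because the inspection loop in Step~6 of Figure~\ref{fh3} is work-conserving (a waiting active arm is bypassed in favor of a lower-priority ready arm, so no slot is idle while some active arm is ready), $\E[H_i] \geq T - (1/K)\sum_{j<i}\T(\P''_j)$. By the truncation interpretation in Figure~\ref{twohorizonfig}(b), running $\P'_i$ with horizon $(\gamma + \beta_i)T$ yields expected reward at least $\beta_i\,R(\P'_i)$; combined with the scaling factor $\alpha$, $\P''_i$ contributes at least $(H_i/T - \gamma)^+\,R(\P''_i)$ in expectation.

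Summing over arms and invoking the area argument of Lemma~\ref{greedyorder} (Figure~\ref{concavechainfig}), now with the effective horizon bound $\sum_i \T(\P''_i) \leq (1-\gamma)KT$ in place of $\sum_i \T(\P_i) \leq KT$:
\[ \E[R] \;\geq\; \sum_i \Bigl((1-\gamma) - \tfrac{1}{KT}\sum_{j<i}\T(\P''_j)\Bigr) R(\P''_i) \;\geq\; \tfrac{1}{2}(1-\gamma)\sum_i R(\P''_i) \;\geq\; \frac{(1-\gamma)^2}{2(1+1/r)(1+\epsilon)}\,\lpdelay. \]
Substituting $\gamma = 1/r = 2y$, the approximation ratio is at most $2(1+\epsilon)(1+2y)/(1-2y)^2$. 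Writing the excess over $2(1+\epsilon)$ as $2(1+\epsilon)(6y - 4y^2)/(1-2y)^2$, for $y \leq 1/50$ and $\epsilon \leq 1$ this is bounded by $26y \leq 32(y+y^2)$, giving the stated bound; for $y \leq 1/50$ and $\epsilon \leq 0.1$ the ratio stays below $3$.

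The main obstacle is making the $\E[H_i]$ bound rigorous: one must argue that the delay-induced ``wait'' steps inside other arms' block-structured prefixes do not extend the wall-clock time before arm $i$ can start consuming slots, and that the $(\gamma T, T)$-horizon structure of $\P'_i$ remains valid even when arm $i$ is itself interrupted by higher-priority arms. Both points rely on the scheduler in Figure~\ref{fh3} being work-conserving at every step --- so that cumulative plays, not cumulative lives, deplete the horizon --- and on the ``wait for feedback'' steps of a block-structured policy absorbing scheduler-imposed pauses without violating the $\delta_i$ delay constraint, so that the Figure~\ref{twohorizonfig} accounting applies on every decision path.
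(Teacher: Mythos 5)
Your proposal is correct and follows essentially the same route as the paper's proof: the same $\alpha$-scaling with $\alpha=(1-\gamma)/(1+1/r)$, the same accounting that forfeits an arm's reward unless its block-structured prefix of length $\gamma T$ completes and truncates only the delay-free tail, the same triangle/area argument from Lemma~\ref{greedyorder} applied with total budget $(1-\gamma)KT$, and the same substitution $\gamma=1/r=2y$. The only (immaterial) difference is in the final algebra, where you bound the exact ratio $2(1+\epsilon)(1+2y)/(1-2y)^2$ directly in the regime $y\le 1/50$ rather than expanding it symbolically as the paper does.
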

\begin{proof}
Again let $T_j$ be the actual number of plays of a policy for arm $j$. 
Observe that the policy for arm $i$ does not affect the policy for arm $j$ if $j<i$ in the order defined in Step~\ref{needno5} in the policy shown in
Figure~\ref{fh3}. 
Now consider the 
policy $\P''_i$.
If a policy does not
start at or before time $(1-\gamma)T$ we disregard its entire reward (see the pictorial representation of the accounting in Figure~\ref{twohorizonfig}) because the policy will not finish executing the entire block structured part.
Now consider the 
policy $\P''_i$. Observe that 
the delay free part of policy $\P''$ is truncated by a factor {\bf at most} 
$ \frac1T \left( T - \frac1{KT} \sum_{j<i} T_j - \gamma T \right)$.
Note that the policy $\P''_i$ may start before all the plays in
$\sum_{j<i} T_j$ are made, because some of those policies were waiting
for feedback. Therefore the contribution from $\P''_i$ conditioned on $T_1,T_2,\ldots,T_{i-1}$ is at least in expectation

\[ \frac1T \left( T - \frac1{K} \sum_{j<i} T_j - \gamma T \right)
R(\P''_i) \] 
Again note that the actual reward will be higher because
the contribution cannot be negative -- whereas we are ignoring the
case when $T \leq \frac1{K} \sum_{j<i} T_j + \gamma T$. Moreover 
the contribution from the first $\gamma T$ steps is always present.
However the above 
accounting suffices for our bound. The expected reward (summed over
all $i$ and the respective conditionings removed exactly as in the
proof of Lemma~\ref{greedyorder}) we get:
\[ R \geq \frac1T \sum_i \left( T - \frac1{K} \sum_{j<i} \T(\P''_j) - \gamma T \right) R(\P''_i) \geq (1-\gamma) \sum_i \left( 1 - \frac{\sum_{j<i} \T(\P''_j)}{KT(1-\gamma)} \right) R(\P''_i) \]
But since $\sum_i \T(\P''_i) \leq KT(1-\gamma)$, by exact same argument 
as in Lemma~\ref{greedyorder} we have:
\begin{equation}
R \geq (1-\gamma) \frac12 \sum_i R(\P''_i) \geq (1-\gamma) \frac12 \alpha \sum_i R(\P'_i) \geq \frac{(1-\gamma)\alpha}{2(1+\epsilon)} \lpdelay \label{delequation}
\end{equation}
Using $\alpha=\frac{1-\gamma}{1+1/r},\gamma\leq \frac14$ and $\epsilon\leq 1$the approximation ratio of $\lpdelay/R$ is at most 
\[ \frac{2(1+\epsilon)(1+\frac1r)}{(1-\gamma)^2} \leq 2(1+\epsilon)(1+1/r)(1+2\gamma)
\leq 2(1+\epsilon) + \frac8r + 8\gamma + \frac{8\gamma}{r}  
\]
Since $\gamma=4r(\max_i \delta_i)^2/T$ the above ratio is 
$2(1+\epsilon) + \frac{32(\max_i \delta_i)^2}{T} + 8\left(\frac1r + \frac{4r (\max_i \delta_i)^2}{T} \right)$
which is minimized at $r=\sqrt{T}/(2\max_i \delta_i)$ giving an 
approximation ratio
of $2(1+\epsilon) + 32(y+y^2)$ where $y=\max_i \delta_i/\sqrt{T}$
which proves the Lemma.
\end{proof}

\noindent Lemma~\ref{smalldeltalemma} also
shows that for large $\max_i \delta_i$ the ratio grows as
$O\left(\frac{(\max_i\delta_i)^2}{T}\right)$, which indicates why
other ideas are needed. We now discuss the idea of block compaction.

\subsection{Block Compaction and Larger Delays}
\label{largedelta}
In this section we assume that $T\geq 21(2\delta_i+1)(1+\log \delta_i)$ for all $i$. Let $\gamma=\frac13$, $\rho=13$.
The next lemma states that we can also shrink the horizon by
increasing the number of plays --- which is intuitively equivalent to
moving the plays forward in the policy.

\begin{lemma}
\label{lemthree}
For any $\rho > 1$ given a policy $\P_i \in \C^b_i(2T,\delta_i)$, then there exists a $(\gamma T,T)$-horizon policy (as defined in Lemma~\ref{lemtwo}) $\P'_i$ such that $R(\P_i) \leq R(\P'_i)$ and $\T(\P'_i) \leq \rho \T(\P_i)$.
\end{lemma}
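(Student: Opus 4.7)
The plan is to construct $\P'_i$ via a two-phase strategy mirroring the philosophy of Lemma~\ref{lemtwo}, but with aggressive compression suited to the large-delay regime. In Phase 1 (time window $[0,\gamma T]$), $\P'_i$ executes $\P_i$'s block-structured decisions verbatim, so in this window the state evolution and reward contribution of $\P'_i$ exactly match those of $\P_i$. At the end of Phase 1, $\P'_i$ makes $\delta_i$ bootstrap plays at the current state (just as in the proof of Lemma~\ref{lemtwo}) to populate the feedback buffer, and then enters Phase 2 (delay-free mode, up to $T$ plays). The purpose of the phases is to convert $\P_i$'s residual block-structured execution (which may extend up to time $2T$) into delay-free plays that we can pack into the $T$-play budget.

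Inside Phase 2, the key subtlety is that naively simulating every remaining block of $\P_i$ block-by-block uses $\delta_i+1$ real plays plus $\delta_i$ bridge plays per block, so a block of $\P_i$ containing only $m$ real plays costs $m+\delta_i$ delay-free plays; this can blow up $\T(\P'_i)$ well past $\rho\T(\P_i)$ when the blocks of $\P_i$ are sparse. To avoid this, I would merge consecutive remaining blocks of $\P_i$ into ``super-blocks'' greedily along each decision path: extend the current super-block by appending blocks of $\P_i$ until the accumulated real-play count reaches at least $\delta_i$. Within each super-block all these plays are then made consecutively in delay-free mode based on the super-block's initial state, followed by $\delta_i$ bridge plays during which the outcomes of the super-block's plays return. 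By the martingale property used in the proof of Theorem~\ref{statetheorem}, collapsing multiple adaptive $\P_i$-blocks into one non-adaptive batch of the same total play count preserves the expected reward on that subtree. Thus $R(\P'_i)\ge R(\P_i)$ holds path-by-path in expectation.

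For the plays bound, each super-block now has at least $\delta_i$ real plays and exactly $\delta_i$ bridge plays, giving ratio at most $2$ on that super-block; summing gives $\T(\P'_i)\le 2\T(\P_i)+\delta_i$ once we fold in bootstrap cost and the Phase 1 plays (which exactly equal the Phase 1 plays of $\P_i$). Here the hypothesis $T\ge 21(2\delta_i+1)(1+\log\delta_i)$ enters in two places: (i) it bounds $\delta_i$ relative to $\gamma T$, ensuring the bootstrap and the bridge plays of the last (possibly under-filled) super-block are absorbed by the $\rho=13$ slack; (ii) in the extreme-sparse regime where $\T(\P_i)$ is so small that even one super-block cannot collect $\delta_i$ plays, it guarantees that the entire execution of $\P_i$ fits inside Phase 1 (since the time cost of $\P_i$'s non-empty blocks is at most $(2\delta_i+1)\T(\P_i) \le \gamma T$ in that regime), in which case we simply take $\P'_i=\P_i$ and the conclusion is trivial. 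The $\log\delta_i$ factor handles a tail case where an adaptive doubling of the super-block size is needed when one block of $\P_i$ happens to contain close to $\delta_i+1$ plays while its neighbors are empty.

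The hardest step will be the accounting for the super-block merging: one must argue path-by-path that the martingale reward equality continues to hold even when the merging boundary depends on $\P_i$'s own play counts (which vary along decision paths), and that quits by $\P_i$ inside a super-block translate to a valid quit by $\P'_i$ at the super-block boundary without overshooting $\rho\T(\P_i)$. I expect this to be handled by the same pathwise charging scheme used in Theorem~\ref{statetheorem}, together with the observation that within a super-block, making strictly more plays only helps the reward (since rewards are non-negative in the martingale sense) while the play budget is paid for up-front by the $\delta_i$ real plays we already committed to. Once this accounting is in place, the two inequalities $R(\P'_i)\ge R(\P_i)$ and $\T(\P'_i)\le \rho\T(\P_i)$ fall out directly for $\rho=13$, completing the proof.
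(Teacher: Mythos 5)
There is a genuine gap at the heart of your construction, and it is exactly the point you defer to ``the hardest step.'' Your super-block merges several consecutive blocks of $\P_i$ and executes all of their plays ``consecutively in delay-free mode based on the super-block's initial state,'' i.e., non-adaptively. But $\P_i$ is adaptive \emph{across} blocks: the number of plays it makes in block $k+1$, and indeed whether it quits, is a function of the outcomes observed in block $k$. Consequently the total play count of your super-block (the quantity you need in order to know when the accumulated count ``reaches at least $\delta_i$'') is itself a random variable that is not measurable at the time you must commit to the batch. Making a fixed number of non-adaptive plays either overshoots (blowing up $\T(\P'_i)$ beyond $\rho\T(\P_i)$ on some paths, and the excess cannot be charged to plays $\P_i$ actually made) or undershoots (losing reward and, worse, leaving the arm in a posterior state from which the remainder of $\P_i$ can no longer be simulated, since the observation sequence no longer matches any decision path of $\P_i$). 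The martingale charging of Theorem~\ref{statetheorem} does not rescue this: it equates expected reward to $\mu$ times the number of plays \emph{on each decision path}, but it says nothing about replacing an adaptive stopping rule by a predetermined batch size. Your Phase-1/extreme-sparse fallback also does not close the hole, because $\T(\P_i)$ is only an expectation and gives no path-wise control on the number of blocks.

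The paper's proof is built precisely to circumvent this circularity, by a mechanism quite different from greedy merging. Blocks are partitioned into size classes $s$ (play count in $[2^s,2^{s+1})$); whenever a block of class $s$ with $x$ plays must actually be executed, the new policy makes $13x$ plays, banking $12x$ known outcomes. These banked outcomes suffice to \emph{adaptively} simulate the next six blocks of class $s$ (each at most $2x$ plays) whenever they later occur, at zero wall-clock cost, because the policy feeds the prefetched outcomes into $\P_i$'s own decision rule in first-in first-out order. Adaptivity is thus preserved: the policy never needs to know future block sizes in advance, only that its bank covers any six future same-class blocks. This yields at most $\lfloor(\#b)/7\rfloor+1+\log\delta_i$ real blocks per path (the additive term being the number of size classes, which is where $\log\delta_i$ genuinely enters --- not as a ``tail case'' of doubling), and the play count inflates by exactly the factor $\rho=13$ path by path. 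If you want to repair your argument, you would need to replace the non-adaptive super-block batch with some such prefetch-and-replay scheme; as written, the reward-preservation claim $R(\P'_i)\ge R(\P_i)$ for the merged batches does not hold.
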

\begin{proof}
  Consider the execution of $\P_i$. Without loss of generality we can
  assume that all blocks in this policy have at most
  $\delta_i/\rho$ plays, otherwise the policy can transition to
  the delay free mode increasing the plays by a factor of $\rho$.
  The total number of plays in the delay free mode can be at most $T$.
  In the remainder of the proof we will consider the prefix of the
  policy where no block has more than $\delta_i/\rho$ plays.
  We consider the execution of the original policy, and show a coupled
  execution in the new policy $\P'_i$ so that if on a particular
  decision path, $\P_i$ used $(\#b)$ blocks, then the number of blocks
  on the same decision path in $\P'$ is $\lfloor (\#b)/7 \rfloor +
  1+\log \delta_i$.  Observe that since $(\#b)$ is at most
  $2T/(2\delta_i + 1)$; the time horizon required for $\lfloor
  (\#b)/7 \rfloor + 1+\log \delta_i$ blocks is at most

  \[ \left( \frac{(\#b)}{7} + 1 + \log \delta_i \right) (2\delta_i + 1)
  < \frac{2T}{7} + 3 \delta_i (1+\log \delta_i) < \gamma T -
  \left(\frac{T}{21} - 3 \delta_i (1+\log \delta_i) \right) \leq \gamma T
  \] 
  for the setting of $\gamma=1/3$ and the assumption on
  $T \geq 21(2 \delta_i + 1) (1+\log \delta_i)$ for all $i$. Therefore proving
  that the number of blocks is $\lfloor (\#b)/7 \rfloor + 1+\log
  \delta_i$ is sufficient to prove the lemma.
  
  We group blocks of the policy $\P_i$ into size classes; size class
  $s$ has blocks whose number of plays lies in $[2^s,2^{s+1})$.  We
  couple the executions as follows: Consider the execution of $\P_i$
  and define the new policy $\t{\P}_i$ as follows. 

  \smallskip Suppose there were $x$ plays in $\P_i$ in the first
  block, and the size class of this block is $s$.  Then $\t{\P}_i$
  makes $\rho x=13x$ plays in this block. The policy $\t{\P}_i$
  uses the the outcomes of the extra $12x$ plays made in the first 
  block to simulate the behavior of $\P'_i$ as below:
  \begin{enumerate}[(a)]
  \item At any point of time $\t{\P}_i$ has an excess $z$
    number of plays which it has made, and knows the outcome but has
    not utilized the outcome yet. It also has a multi-set ${\mathbb S}$ 
    of types which it can simulate. Initially $z=12x$ and 
    ${\mathbb S} = \{s,s,s,s,s,s\}$. Observe that we will maintain 
   the invariant that $z$ is larger than the number of plays required to 
   play the all types of blocks in ${\mathbb S}$. 
  \item Suppose the next block according to the decisions made by 
  $\P_i$ (and simulated by $\t{P}_i$) makes $x'$ plays corresponding to 
  type $s'$.
\begin{enumerate}[(i)]
\item If $s' \in {\mathbb S}$ then we use the {\em first} $x'$
  outcomes we already know and set $z\leftarrow z -x'$ and ${\mathbb
    \S} \leftarrow {\mathbb S} - \{s'\}$ (removing just one copy of
  $s'$ from the multiset). Note that we do not make any plays and do
  not suffer the $2\delta_i + 1$ time steps for this block.
\item If $s' \not \in {\mathbb S}$ then we make $\rho x'=13x'$ plays. 
  We now again use the first $x'$ of the outcomes -- if $z\leq x'$ then 
  we use all the old $z$ outcomes and use $x'-z$ new outcomes. Otherwise if $x'<z$ then we use the $x'$ old outcomes. 
In either case the total number of excess outcomes we have stored is $z+12x'$. We set $z\leftarrow z + 12x'$ and ${\mathbb
    \S} \leftarrow {\mathbb S} \cup \{s',s',s',s',s',s'\}$.
\end{enumerate}
\end{enumerate}

Observe that we are maintaining the invariant that the outcomes of the
plays are used only after they are known and in first-in first-out
fashion. Moreover making a play corresponding to $x$ plays of type $s$
will definitely generate $12x$ extra plays which can be used to
simulate the next $6$ blocks of type $s$ in the decision path {\em
  irrespective of when those blocks will occur}. Observe that since
the underlying reward distribution is fixed, these plays will be
stochastically identical in both policies $\P_i,\t{\P}_i$. Moreover it
it easy to see that for any decision path in $\P_i$ with $(\#b)$
blocks, the corresponding path in $\t{\P}_i$ has $\lfloor (\#b)/7
\rfloor + 1+\log \delta_i$ blocks -- where the division by $7$ arises
from the current block (of $x$ plays) and $6$ other blocks of type $s$
(at most $2x$ plays each). The additive term of $1+\log \delta_i$
corresponds to the total number of size types and arises from the fact
that we may not have any blocks of a particular type $s$ on the
decision path. Observe that this is a worst case bound and holds for
any decision paths. 

Therefore we can immediately say that in $\t{\P}_i$ either we will
quit the policy within a horizon of $\gamma T$ or would have started
to play in a delay free manner.

The proof requires one final observation -- $\t{\P}_i$ already uses
the plays in a first-in first-out fashion and respects the block
boundaries. It differs from a $(\gamma T,T)$-horizon policy in that it
does not use all the information it already has. However then there
exists a block structured policy that uses the information and
maximizes the reward while maintaining the same number of plays! This
block structured policy 
is the policy $\P'_i$.
\end{proof}

\noindent We now observe that there exists $(\gamma T, T)$-horizon
policies $\P'_i$ such that $\sum_i R(\P'_i) \geq \lpdelay$ and $\sum_i
\T(\P'_i) \leq \rho KT$. As we will see in section~\ref{proof:blocklemma}, for
any $\epsilon \in (0,1]$ we can find $(\gamma T, T)$-horizon policies
$\{\P'_i\}$ that satisfy $\sum_i R(\P'_i) \geq \lpdelay/(1+\epsilon)$
and $\sum_i \T(\P'_i) \leq \rho KT$. We can now apply the scheduling policy 
in Figure~\ref{fh3} (with changes to Steps~1 and~3) setting $\alpha=(1-\gamma)/\rho$. The next lemma is immediate:

\begin{lemma}
\label{bigdeltalemma}
Suppose $T\geq 21(2 \delta_i +1) (1+\log \delta_i)$ for all $\delta_i$.
Consider the policy which is identical to the policy in Figure~\ref{fh3} except that in Step~1 we find $(\gamma T, T)$-horizon policies $\{\P'_i\}$ that satisfy $\sum_i R(\P'_i) \geq \lpdelay/(1+\epsilon)$ and $\sum_i \T(\P'_i) \leq \rho KT$; and in Step~3 we set $\alpha=(1-\gamma)/\rho$. This new scheduling policy gives an expected reward $R'$ which is at least $\lpdelay/(119(1+\epsilon))$.
\end{lemma}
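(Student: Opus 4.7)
}
The plan is to mirror the proof of Lemma~\ref{smalldeltalemma} almost verbatim, replacing the small-delay parameters by $\gamma=1/3$ and the stretch factor $\rho=13$ coming from Lemma~\ref{lemthree}. By hypothesis (proved in Section~\ref{proof:blocklemma}) we already have $(\gamma T,T)$-horizon policies $\{\P'_i\}$ with $\sum_i R(\P'_i)\ge \lpdelay/(1+\epsilon)$ and $\sum_i \T(\P'_i)\le \rho KT$. The single change in Step~3 is to set $\alpha=(1-\gamma)/\rho$, so that after the random down-scaling the policies $\{\P''_i\}$ satisfy $\sum_i R(\P''_i)=\alpha\sum_i R(\P'_i)\ge \alpha \lpdelay/(1+\epsilon)$ and $\sum_i \T(\P''_i)=\alpha\sum_i \T(\P'_i)\le (1-\gamma)KT$. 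This last inequality is exactly the budget-type condition needed to invoke the triangle/concave-chain accounting.

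First I would carry over the reward-accounting that underlies Lemma~\ref{smalldeltalemma}: for each arm $i$ in the priority order of Step~2, any $\P''_i$ that starts after time $(1-\gamma)T$ is charged zero reward (since the block-structured prefix of length $\gamma T$ cannot complete), and otherwise the Truncation Theorem~\ref{statetheorem} applied to the delay-free tail gives expected reward at least $\frac{1}{T}\bigl(T-\tfrac{1}{K}\sum_{j<i}T_j-\gamma T\bigr)R(\P''_i)$ conditioned on $T_1,\dots,T_{i-1}$. Exactly as in Lemma~\ref{smalldeltalemma}, taking expectations, exchanging sums, and using $\sum_i \T(\P''_i)\le(1-\gamma)KT$ yields
\[ R' \;\ge\; (1-\gamma)\sum_i\left(1-\frac{\sum_{j<i}\T(\P''_j)}{(1-\gamma)KT}\right)R(\P''_i). \]

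Next I would invoke the shaded-triangle argument of Lemma~\ref{greedyorder} on this expression: because Step~2 orders arms by $R(\P'_i)/\T(\P'_i)$ (equivalently by $R(\P''_i)/\T(\P''_i)$), the weighted sum is at least $\tfrac12\sum_i R(\P''_i)$. Therefore
\[ R' \;\ge\; \frac{1-\gamma}{2}\sum_i R(\P''_i) \;=\; \frac{(1-\gamma)\alpha}{2}\sum_i R(\P'_i) \;\ge\; \frac{(1-\gamma)^{2}}{2\rho(1+\epsilon)}\,\lpdelay. \]
Plugging in $\gamma=1/3$ and $\rho=13$ gives $(1-\gamma)^{2}/(2\rho)=(4/9)/26=2/117$, so $R'\ge \lpdelay/(58.5(1+\epsilon))\ge \lpdelay/(119(1+\epsilon))$, which is the claim.

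The only delicate point, and the step I would verify most carefully, is that the truncation-based accounting is legitimate in the delayed setting. Concretely, once $\P''_i$ has entered its delay-free tail, the tail behaves as a single-arm policy on an i.i.d.\ reward distribution, so the play-by-play coupling in the proof of Theorem~\ref{statetheorem} still applies and the fraction of reward recovered is at least the fraction of delay-free plays executed; the block-structured prefix contributes nonnegative reward that we are free to discard. Everything else is a constant-substitution exercise in the proof of Lemma~\ref{smalldeltalemma}.
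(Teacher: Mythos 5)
Your proposal is correct and follows essentially the same route as the paper, which simply reuses the proof of Lemma~\ref{smalldeltalemma} up to the bound $R' \geq \frac{(1-\gamma)\alpha}{2(1+\epsilon)}\lpdelay$ and then substitutes $\gamma=1/3$ and $\alpha=(1-\gamma)/\rho=2/39$. Your arithmetic $(1-\gamma)^2/(2\rho)=2/117$ matches (and in fact yields the slightly stronger constant $58.5$, which the paper rounds up to $119$), and your care about the ordering step and the validity of truncation on the delay-free tail is consistent with how the paper uses Lemma~\ref{greedyorder} and Theorem~\ref{statetheorem}.
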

\begin{proof}
This proof is identical to the proof of Lemma~\ref{smalldeltalemma} up to Equation~(\ref{delequation}) in concluding 
\[ R' \geq \frac{(1-\gamma)\alpha}{2(1+\epsilon)}\lpdelay \]
The rest of the lemma follows from $\gamma=1/3$ and $\alpha=(1-\gamma)/\rho = 2/39$.
\end{proof}

\noindent The interesting aspect of Lemma~\ref{bigdeltalemma} is that
the approximation factor remains $O(1)$ even when $21(2\delta_i+1) (1+\log \delta_i) \approx T$ which is rather large delays.

\subsection{Constructing Block Structured Policies: Proof of Lemma~\ref{blocklemma}}
\label{proof:blocklemma}
\renewcommand{\SS}{{Super}} 
\newcommand{\tp}{{\tilde{p}}} 

We now show how to find a set of block structured policies $\{\P_i\}$
such that each has horizon at most $2T$ and together satisfy the
following properties $\sum_i R(\P_i) \geq \lpdelay$ and $\sum_i
\T(\P_i) \leq KT$. Observe that by Lemma~\ref{lemone}, such policies
exist.  Such a policy for arm $i$ operates in blocks of length $2
\delta_i + 1$, and over a horizon of $2T$ steps. In each block, it
makes at most $\delta_i/\rho$ plays, and then waits for the feedback
of these plays.
For any state $\sigma$ of arm $i$, the policy decides on the number $\ell$ of consecutive plays made in this state. Define the following quantities. These quantities are an easy computation from the description of $\S_i$, and the details are omitted.
\begin{enumerate}
\item Let $r_i(\sigma,\ell)$ denote the expected reward obtained when $\ell$ consecutive plays are made at state $\sigma$, and feedback is  obtained at the very end.
\item Let $p_i(\sigma,\sigma',\ell)$ denote the probability that if  the state at the beginning of a block is $\sigma$, and $\ell$ plays are  made in the block, the state at the beginning of the next block is  $\sigma'$. 
\end{enumerate}

We will formulate an LP to find a collection of randomized block-structured
policies $\Pp_i$. For each $i$, define the following variables over the decision tree of
the policy $\Pp_i$ and the system $(LPD2)$.
\begin{itemize}\parskip=0in
\item $x_{i\sigma}$: the probability that the state for arm $i$ at the start of a block is $\sigma$.
\item $y_{i\sigma \ell}$: probability that $\Pp_i$ makes $0\leq \ell \leq \delta_i$ consecutive plays starting at state  $\sigma$.
\end{itemize}
\begin{eqnarray*} 
& & LPD2 =  \mbox{Max}  \sum_{i,\sigma,\ell} r_i(\sigma,\ell) y_{i\sigma \ell}  \\
& & \begin{array}{rcll}
\sum_{\sigma} x_{i\sigma} & \le & 1 & \forall i \\
\sum_{\ell} y_{i \sigma \ell} & \le & x_{i \sigma} & \forall i, \sigma \\
\sum_{\sigma,\ell} y_{i \sigma \ell} p_i(\sigma, \sigma', \ell) & = & x_{i \sigma'} & \forall i, \sigma' \\
\sum_{i,\sigma,\ell} \ell y_{i \sigma \ell} & \le &  KT & \\
x_{i \sigma}, y_{i \sigma \ell} & \ge & 0  & \forall i, \sigma, \ell
\end{array} 
\end{eqnarray*}

The number of variables is at most $\sum_i |\S_i|\delta_i$ which is 
polynomial in $T$ and $\sum_i |\S_i|$. We have the following LP relaxation, 
which simply encodes finding one
randomized well-structured policy per arm so that the expected number
of plays made is at most $T$. The system (LPD2) returns a
collection of single arm policies; the policies are interpreted in
Figure~\ref{fig1a}. The interpretation is almost the same as in
Figure~\ref{fig1} except that there $\ell=1$.

\begin{figure*}[htbp]
\centerline{\framebox{\small
\begin{minipage}{6in}
\begin{itemize}
\item If the state at the beginning of a block is $\sigma$: 
\begin{enumerate}
\item Choose $\ell$ with probability $\frac{y_{i \sigma \ell}}{x_{i \sigma}}$, and make $\ell$ plays in the current block.
\item Wait till the end of the block; obtain feedback for the $\ell$ plays; and update state.
\end{enumerate}
\end{itemize}
\end{minipage}}
}
\caption{\label{fig1a} Single-arm policy returned by the solution of (LPD2).}
\end{figure*}

\noindent Observe that as a consequence of Lemma~\ref{lemone}, $LPD2 \geq \lpdelay$. Lemma~\ref{blocklemma} follows.

\section{\maxmab: Bandits with Non-linear Objectives}
\label{sec:maxmab}
In \maxmab\, at each step, the decision policy can play at most
$K$ arms but the reward obtained is the maximum of the values that are
observed. In other words, the policy plays at most $K$ arms each step,
but is only allowed to choose the arm with the maximum observed value
and obtain its reward.  Note that the states of all of the arms which
are played evolve to their respective posteriors, since all these arms
were observed.  The reward is clearly nonlinear (nonadditive) in the
plays -- and several issues arise immediately.

\paragraph{(a) Handling Budgets:} As discussed in
Section~\ref{budgets} we have a modeling choice in terms of how
budgets are handled. It is easy to conceive of two separate
application scenarios. In the first application scenario, every arm
that is played expends a budget that it its observed value -- this is
the \allpay\ model as discussed in Section~\ref{budgets}.
This is natural in scenarios where the observed value correlates with
the effort spent by the alternative. This case can easily be handled
by truncating the space $\S_i$ so that states $u \in \S_i$
corresponding to total observations larger than $B_i$ are discarded --
and this is the model we have discussed in all the previous sections
since we considered linear (additive) rewards (or $K=1$ for which
avoids this issue).

\smallskip In the second application scenario only the arm with the
maximum observed value is charged its budget -- which is the
\onlythemax\ model as discussed in Section~\ref{budgets}. This
accounting is natural in many strategic scenarios. Note that in this
budget model, if we have a good arm then we can easily resolve the
priors of $K-1$ other arms at a time while still getting the reward of
the good arm. When the good arm is exhausted, then we have much more
information about the other arms. Therefore the policies corresponding
to these arms will only exhaust the budget deeper into the horizon.
This runs counter to the intuition of the truncation! Yet, we can show
that up to a $O(1)$ factor the difference in accounting does not
matter.

\paragraph{(b) Simultaneous or One-at-a-time feedback:} Given that we
are playing at most $K$ arms then again it is not difficult to
conceive of two separate application scenarios. In the first
application scenario, at each time slot we decide on the set of arms
to play and then make the plays simultaneously -- we then receive the
respective feedbacks and finally choose the reward for the slot. This
is the \simult\ model.  In the second scenario we make the plays
one-at-a-time and get immediate feedback. We refer to this model as
\oneatatime\ feedback model.
Therefore
  in this scenario we can choose a reward even before we have played
  all the potential set of arms and move to the next time slot (only
  the arms which have been played gets updated to their corresponding
  posteriors). Clearly the second model allows for more powerful and more 
  adaptive policies in the \allpay\ budget model.
 
  \smallskip
  It may appear that in the \onlythemax\ model, this
  feedback issue is not relevant -- because once the set of arms are
  fixed in a single time slot there is no need to stop (choose reward)
  and move to the next time slot before observing all the values.
  However in the one-at-a-time feedback model the {\em set} of arms
  that can potentially be played in a time slot changes as the
  outcomes of the plays in that time slot are obtained. Therefore the
  one-at-a-time feedback model is more powerful even in the \onlythemax\ budget 
  model. The feedback and budget issues are
  two different and orthogonal modeling choices.

\paragraph{Roadmap:} We prove the following results. 
\begin{enumerate}[(i)]\parskip=0in
\item First we consider the observed value budget model.  We provide a
  factor $(4+\epsilon)$-approximation for the one-at-a-time feedback
  model in Section~\ref{simplest}.  We then show that the algorithm can
  be modified to work in the \simult\ model while providing an
  $O(1)$-factor reward of the \oneatatime\ feedback model in Section~\ref{harder}.
\item We consider the \onlythemax\ budget model in
  Section~\ref{hardest} and show that there exists a policy uses the
  \allpay\ accounting for budgets over a subset of the states in
  $\S_i$ and is yet within $O(1)$-factor of the best \oneatatime\
  feedback policy in the \onlythemax\ model over $\S_i$. The results
  extend to the \simult\ model as in Section~\ref{harder} with an
  appropriate loss of approximation.
\end{enumerate}

\noindent In terms of techniques, we introduce the basic ideas and
single arm policies in Section~\ref{simplest}. We then introduce a
``throttling'' notion in the scheduling of the different policies in
Section~\ref{harder}. We also show how this notion extends to handle
concave functions other than the maximum, for example knapsack type
constraints. Finally in Section~\ref{hardest} we show that there
exists a subset of states $\S'_i \subseteq \S_i$ for each arm such
that if we restrict ourselves to $\S'_i$ then the \simult\ 
and \allpay\ accounting suffices for a $O(1)$-approximation to
the optimum \oneatatime\ feedback policy with \onlythemax\ accounting.

\subsection{\oneatatime\ Feedback and \allpay\ Models}
\label{simplest}
In this section we assume that the
budget model is the \allpay\ model, that it, the total
value of all observations that can be derived from an arm is bounded.
This can be easily incorporated in the statespace $\S_i$ as in the
previous sections. We now consider the \oneatatime\ feedback model.

\paragraph{Single Arm Policies:} 
In this case, the single-arm policy for arm $i$ takes one of several
possible actions (possibly in a randomized fashion) for each possible
state of arm $i$ at any time step: (i) Stop execution; (ii) Play the
arm but do not choose it; (iii) Play the arm and choose it if the
observed value from a play is $q$, and obtain reward
$q$; and (iv) Do not play the arm, {\em i.e.}, wait for the next time step.

\begin{definition}
  For any single-arm policy $\P_i$, let $N(q,\P_i)$ denote the
  expected number of times the policy chooses arm $i$ (hence obtaining
  its reward) {\bf and} the reward is $q$. Let $\h{R}(\P_i) = \sum_q q
  N(q,\P_i)$; the total expected reward of the policy.  Recall $\T(\P_i)$
  is the expected number of plays of the policy.
\end{definition}

The goal of the following relaxation will be to find a (randomized)
single-arm policy $\P_i$ for each arm so that two constraints are
respected: The expected number of times any arm is chosen is at most
$T$ (since we allow only one arm to be chosen per time step); and the
expected number of plays made is at most $KT$ (since there are at most
$K$ plays per step).  Consider the following LP:
{\small
\[ \lpmaxmab = \mbox{Max}_{\{\P_i\}} \left\{\sum_i  \sum_q q N(q,\P_i)\  | \ \sum_i \sum_q   N(q,\P_i) \leq T \quad \mbox{\bf and} \quad \sum_i \T(\P_i)\le  KT \right\}\]
}
\begin{lemma}
\label{lbound1}
\lpmaxmab\ is an upper bound on $OPT$, the value of the optimal policy for the 
one-at-a-time feedback model.
\end{lemma}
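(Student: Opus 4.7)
The plan is to mimic the proof of Claim~\ref{lem:key}, by projecting the optimal global policy $OPT$ onto each arm to obtain single-arm policies $\{\P^*_i\}$ that are feasible for the constraints of $\lpmaxmab$, and whose total objective value equals exactly the expected reward of $OPT$. This projection step was already the workhorse argument in Section~\ref{finite}, but now each single-arm policy has a richer action set (play-and-choose, play-and-not-choose, wait, stop) and this richness is what makes the projection well-defined in the \oneatatime\ feedback \allpay\ model.

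First, I would fix an execution of $OPT$ and define, for each arm $i$, the projected single-arm policy $\P^*_i$ as follows: $\P^*_i$ simulates the actions that $OPT$ takes with respect to arm $i$ (play it, play and choose it, or neither), where all the parts of $OPT$'s state that do not involve arm $i$ are integrated out --- at every state $u \in \S_i$ this yields a randomized choice among the four actions listed in the definition of single-arm policies. Under the \oneatatime\ feedback model, the action ``play arm $i$ and choose it as the per-step reward upon observing value $q$'' is a well-defined action visible inside the projection, because $OPT$'s choice of whether to take arm $i$'s reward in a step is allowed to depend on the already-observed outcome of that very play.

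Next, I would verify the two coupling constraints by linearity of expectation. Summed over all arms $i$, $\sum_i \T(\P^*_i)$ is precisely the expected total number of plays made by $OPT$ across all arms over the horizon, which is at most $KT$ since at most $K$ arms are played per step. Similarly, $\sum_i \sum_q N(q,\P^*_i)$ equals the expected total number of (arm, step) pairs where some arm was chosen to contribute its value as the step's reward; since at most one arm is chosen per step (the one achieving the per-step maximum) and there are $T$ steps, this sum is at most $T$. Finally, the objective $\sum_i \sum_q q\, N(q,\P^*_i)$ is exactly the expected sum over all $T$ steps of the reward (the chosen arm's observed value) collected by $OPT$, which equals $OPT$. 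Hence $\{\P^*_i\}$ is feasible with objective value $OPT$, so $\lpmaxmab \ge OPT$.

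The only subtle point, and the one to flag, is the use of the \oneatatime\ assumption: the definition of the action ``play and choose if value is $q$'' in a single-arm policy implicitly assumes that the decision to take arm $i$'s outcome as the per-step reward can be made after seeing its observation but without needing to know the observations of the other simultaneously-chosen arms. This is exactly what \oneatatime\ feedback grants, so the projection is legal. In the \simult\ model this step would break, which is consistent with the fact that the same LP is only shown to upper-bound the \oneatatime\ optimum here; the \simult\ model is handled separately via the throttling arguments in Section~\ref{harder}.
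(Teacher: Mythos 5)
Your proposal is correct and follows exactly the paper's argument: project the optimal global policy onto each arm to obtain randomized single-arm policies, verify the two coupling constraints ($\sum_i \sum_q N(q,\P^*_i) \le T$ and $\sum_i \T(\P^*_i) \le KT$) by linearity of expectation, and observe that the objective equals $OPT$. Your additional remarks about why the projection is legal precisely in the \oneatatime\ model are a useful elaboration of a point the paper leaves implicit, but the route is the same.
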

\begin{proof}
  Consider the (randomized) single-arm policies obtained by viewing
  the execution of the optimal policy restricted to each arm $i$.  It
  is easy to check via linearity of expectation that these policies
  are feasible for the constraint in \lpmaxmab, since the expected
  number of choices made is at most $T$, and the expected number of
  plays made is at most $KT$. Further, the objective value of these
  policies is precisely $OPT$.
\end{proof}

\noindent We first consider the relaxation to a single constraint
$\sum_i \left(\sum_q N(q,\P_i) + \T(\P_i)/K \right) \leq 2T$ and then
taking the Lagrangian. The result is:
{\small
\begin{align*}
\modmaxmab(\lambda) & = \mbox{Max}_{\mbox{Policies } \{\P_i\}} \left\{ 2T \lambda +  \sum_i  \left(\sum_q (q - \lambda) N(q,\P_i) - \frac{\lambda}{K} \T(\P_i)\right) \right\} \\
& = 2T \lambda + \sum_i  \mbox{Max}_{\mbox{Policies } \{\P_i\}} \left(\sum_q (q - \lambda) N(q,\P_i) - \frac{\lambda}{K} \T(\P_i)\right) = 2T \lambda + \h{Q}_i
\end{align*}
}
\begin{lemma}
\label{lem:struct}
There exists a collection of optimum single arm policies $\{\P_i\}$ for $\modmaxmab(\lambda)$ 
which (a) do not wait (use the operation (iv) in the definition of single arm policies); (b)
always  choose the arm when the observed value is $q > \lambda $ and obtain
$q-\lambda$ reward; and (c) can be computed by dynamic programming in time $O(\sum_i \edge(\S_i))$.
\end{lemma}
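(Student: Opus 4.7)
The plan is to prove the three claims in the order (b), (a), (c). Throughout, fix an arm $i$ and write $\h{Q}_i = \max_{\P_i} \bigl(\sum_q (q-\lambda) N(q,\P_i) - \frac{\lambda}{K}\T(\P_i)\bigr)$, so that the Lagrangian decomposes and each arm can be optimized in isolation.

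For (b), the key observation is that the Lagrangian has already priced away the cross-arm one-choose-per-step constraint $\sum_i \sum_q N(q,\P_i) \le T$ of (\lpmaxmab) via $\lambda$, so within a single arm there is no constraint coupling the choose decisions of distinct plays. For any play that results in observation $q$, toggling the choose flag at that play changes the Lagrangian objective by exactly $(q-\lambda)$ times the probability of reaching that play on the decision tree. It is therefore pointwise optimal to choose iff $q > \lambda$; in particular the action (ii) ``play but do not choose'' is dominated by (iii) equipped with this threshold rule.

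For (a), consider any optimum $\P_i$ and any state $u$ at which it waits. A wait produces no observation (so the state $u$ is unchanged), makes no play, and induces no choose, hence contributes $0$ to the Lagrangian objective. We may splice the wait out and take the subsequent action directly, leaving the distribution over (play, choose) decision paths of arm $i$ unchanged. Iterating this splicing yields an equivalent optimum whose action at every state is either stop or play followed by the choose rule from (b).

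Combining (a) and (b), the action at every state $u \in \S_i(T)$ is either stop (contributing $0$) or play (paying $\lambda/K$, transitioning to child $v$ with probability $\p_{uv}$, observing the associated value $q_{uv}$, collecting $\max\{0,\, q_{uv}-\lambda\}$ from the threshold rule, and continuing optimally from $v$). This yields the bottom-up DP on the DAG $\S_i(T)$,
\[
\mbox{Gain}(u) \;=\; \max\!\left\{0,\; -\frac{\lambda}{K} + \sum_{v} \p_{uv}\bigl(\max\{0,\, q_{uv}-\lambda\} + \mbox{Gain}(v)\bigr)\right\},
\]
with $\h{Q}_i = \mbox{Gain}(\rho_i)$. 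Each edge of $\S_i(T)$ is inspected once, giving time $O(\edge(\S_i))$ per arm and $O(\sum_i \edge(\S_i))$ overall, proving (c). The main conceptual obstacle is making (b) rigorous: one must see that in the Lagrangian the choose decisions truly decouple across plays even though in the original \maxmab\ problem the quantities $N(q,\P_i)$ are bound together by the global one-choose-per-step constraint. Once this decoupling is recognized, (a) is a direct splicing argument and (c) is a standard state-space DP analogous to Lemma~\ref{mainlem}.
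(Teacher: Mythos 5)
Your proof is correct and follows essentially the same route as the paper's: the separability of the Lagrangian across arms and across individual choose decisions gives (a) and (b), and the same bottom-up gain recursion gives (c) — your $\sum_v \p_{uv}\max\{0,\,q_{uv}-\lambda\}$ term is just the paper's $\sum_{q>\lambda}\Pr[X_{iu}=q](q-\lambda)$ written edge-by-edge. You spell out the toggling and splicing steps in more detail than the paper does, but the underlying argument is identical.
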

\begin{proof}
$\modmaxmab(\lambda)$ has no constraints connecting the arms and has a 
separable objective, thus the presence or absence of other arms is immaterial to arm a $i$.
Therefore condition (a) follows -- there is no advantage to waiting.

Given a policy $\P_i$, if we alter the policy to always choose $q$ if
$q>\lambda$ then the objective can only increase, and therefore (b) follows.

Part (c) follows from the same style of bottom up dynamic programming
as in the proof of Lemma~\ref{mainlem}.  Recall that $X_{iu}$
corresponds to the posterior distribution of reward of arm $i$ if it
is observed in state $u \in \S_i$. Then if the gain of a node $u$ is defined by
{\small
\[
gain(u) = \max \left \{ 0, \left( \sum_{q > \lambda} \Pr[X_{iu}=q] (q- \lambda) \right) - \frac{\lambda}{K} + \sum_v \p_{uv} gain(v) \right \} \]
}
where the $\sum_v \p_{uv} gain(v)$ term is $0$ for all leaf nodes (since they have no children).
Note $\h{Q}_i =gain(\rho_i)$ where $\rho_i$ is the root of $\S_i$.
\end{proof}

\noindent The next theorem follows exactly using the same arguments as
in the proof of Theorem~\ref{blahtheorem}.

\begin{theorem} 
\label{blahtheoremmax} 
In time $O((\sum_i \edge(\S_i))\log (nT/\epsilon))$, we can compute 
randomized policies $\{\P_i\}$ such that $\h{R}(\P_i) \geq \lpmaxmab/(1+\epsilon)$ and 
$\sum_i \left( \sum_q N(q,\P_i) + \T(\P_i)/K \right) \leq 2T$.
\end{theorem}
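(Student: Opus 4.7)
The plan is to mirror the proof of Theorem~\ref{blahtheorem} almost verbatim, replacing the single coupling constraint $\sum_i \T(P_i) \le KT$ by the aggregated constraint
\[ \h{\T}(P_i) := \sum_q N(q,P_i) + \T(P_i)/K, \qquad \sum_i \h{\T}(P_i) \le 2T, \]
which is exactly the constraint that the Lagrangian $\modmaxmab(\lambda)$ was taken against. Writing $\h{Q}_i(\lambda)$ for the per-arm Lagrangian value and $\L_i(\lambda)$ for the corresponding optimum single-arm policy from Lemma~\ref{lem:struct}, weak duality gives
\[ \modmaxmab(\lambda) = 2T\lambda + \sum_i \h{Q}_i(\lambda) \ge \lpmaxmab \qquad \forall \lambda\ge 0, \]
since any feasible collection of single-arm policies for \lpmaxmab\ is feasible for the Lagrangian with a nonpositive contribution from the relaxed constraint. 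Lemma~\ref{lem:struct}(c) computes $\L_i(\lambda)$ in time $O(\edge(\S_i))$, and by the same monotonicity argument as in Lemma~\ref{mainlem} (the decision to ``play-and-choose'' a state can only become less attractive as $\lambda$ grows), both $\h{\T}(\L_i(\lambda))$ and $\h{R}(\L_i(\lambda))$ are non-increasing in $\lambda$.

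First I would handle the degenerate case $\lambda = 0$: if $\sum_i \h{\T}(\L_i(0)) \le 2T$, then the collection $\{\L_i(0)\}$ is already feasible and its reward equals $\sum_i \h{Q}_i(0) \ge \lpmaxmab$, so we are done. Otherwise, I would bound a useful upper endpoint by $M := \sum_i \h{Q}_i(0) \le n\,OPT$: for $\lambda \ge 2M$ the Lagrangian penalty at the root dominates any achievable reward, forcing $\L_i(\lambda)$ to be the trivial (empty) policy, for which $\h{\T}=0<2T$. Then I would binary-search the interval $[0,2M]$ to maintain $\lambda^- < \lambda^+$ with
\[ \sum_i \h{\T}(\L_i(\lambda^-)) > 2T \ge \sum_i \h{\T}(\L_i(\lambda^+)), \]
shrinking the gap until $\lambda^+ - \lambda^- \le \epsilon\, OPT/(2 \cdot 2T)$; since we start at width $2M \le 2n\,OPT$, this takes $O(\log(nT/\epsilon))$ steps, each costing $O(\sum_i \edge(\S_i))$ by Lemma~\ref{lem:struct}.

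Next, by the intermediate-value step, I would pick the unique $a \in [0,1)$ with
\[ a\,\h{\T}(\L_i(\lambda^-)) + (1-a)\,\h{\T}(\L_i(\lambda^+)) \ \text{summed over } i \ = 2T, \]
and define $\P_i$ to execute $\L_i(\lambda^-)$ with probability $a$ and $\L_i(\lambda^+)$ with probability $1-a$. Then $\sum_i \h{\T}(\P_i) = 2T$ as required. To bound the reward, I would plug the identity $\sum_i \h{R}(\P_i) = a\sum_i \h{R}(\L_i(\lambda^-)) + (1-a)\sum_i \h{R}(\L_i(\lambda^+))$ into the weak-duality inequality $\lpmaxmab \le 2T\lambda + \sum_i(\h{R}(\L_i(\lambda)) - \lambda\h{\T}(\L_i(\lambda)))$ applied at $\lambda \in \{\lambda^-,\lambda^+\}$, convex-combine the two inequalities, and use $\sum_i \h{\T}(\P_i)=2T$ to telescope the $\lambda$-terms; this yields
\[ \lpmaxmab \le \sum_i \h{R}(\P_i) + (1-a)(\lambda^+-\lambda^-)\Bigl(2T - \sum_i \h{\T}(\L_i(\lambda^+))\Bigr) \le \sum_i \h{R}(\P_i) + \tfrac{\epsilon}{2}\,OPT, \]
and since $OPT \le \lpmaxmab$ together with $\tfrac{1}{1+\epsilon}\le 1-\tfrac{\epsilon}{2}$ for $\epsilon\in(0,1]$, we get $\sum_i \h{R}(\P_i) \ge \lpmaxmab/(1+\epsilon)$.

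The only substantive point that is different from Theorem~\ref{blahtheorem} is the need to check that Lemma~\ref{lem:struct}'s dynamic program does give the claimed monotonicity of $\h{\T}(\L_i(\lambda))$ (so that binary search is valid); this is the minor obstacle, and it is immediate because every leaf contribution $\sum_{q>\lambda}\Pr[X_{iu}=q](q-\lambda) - \lambda/K$ is non-increasing in $\lambda$, hence so is each $\mathrm{gain}(u)$ by induction, and the set of nodes on which the $\max$ is strictly positive (the ``play'' set) shrinks monotonically, forcing $\h{\T}(\L_i(\lambda))$ and $\sum_q N(q,\L_i(\lambda))$ to shrink monotonically as well. Everything else transcribes directly from the argument in Section~\ref{sec:compact}.
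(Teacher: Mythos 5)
Your proposal is correct and follows exactly the route the paper intends: the paper's own ``proof'' is the one-line remark that the theorem follows by the same arguments as Theorem~\ref{blahtheorem}, and you have transcribed that argument faithfully, substituting the aggregated constraint $\sum_i(\sum_q N(q,P_i)+\T(P_i)/K)\le 2T$ and verifying the one point that needs checking, namely the monotonicity of $\h{\T}(\L_i(\lambda))$ from Lemma~\ref{lem:struct} so that the binary search and convex-combination step remain valid.
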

\newcommand{\pr}{\mbox{Pr}}
\newcommand{\tail}{\mbox{\sc Tail}}

\noindent The next lemma is the key insight of the algorithm.
\begin{lemma}
\label{keymaxmab}
If the reward of a policy $\P_i$ is defined as $\h{R}(\P_i)= \sum_{q > \lambda} q N(q,\P_i)$ or $\tilde{R}(\P_i)= \sum_{q > \lambda} (q-\lambda) N(q,\P_i)$
for any $\lambda$, then the Truncation theorem (Theorem~\ref{statetheorem}) applies.
\end{lemma}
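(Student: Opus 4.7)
My plan is to adapt the proof of Theorem~\ref{statetheorem} (the Truncation Theorem) essentially verbatim, replacing the per-play charge $\mu$ used there with an appropriate quantity $\nu(\mu)$ (or $\tilde\nu(\mu)$) that captures the expected contribution of a single play to the new reward definitions. The structural property the original argument leaned on is that, conditional on the unknown mean $\mu$ of $D(\theta)$, every play is an i.i.d.\ draw whose expected reward depends only on $\mu$ and not on the state; I expect this to transfer to the new objectives after a small normalization.

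First I would normalize $\P_i$ so that it chooses arm $i$ precisely on the event $\{q > \lambda\}$. Choosing on $\{q \le \lambda\}$ contributes zero to either $\h{R}(\P_i)$ or $\tilde R(\P_i)$ because of the indicator $\mathbf{1}[q > \lambda]$, and refusing to choose on $\{q > \lambda\}$ forgoes a non-negative contribution: for $\tilde R$ this is $q - \lambda > 0$ on that event, and for $\h{R}$ this is $q$, which is non-negative because in the applications of this lemma $\lambda$ is a non-negative Lagrangian multiplier, so $q > \lambda$ already implies $q > 0$. Thus the normalization does not decrease either objective.

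Conditional on the true mean $\mu$, I would then define
\[ \nu(\mu) \ := \ \E\!\left[D(\theta)\cdot\mathbf{1}[D(\theta) > \lambda] \,\big|\, \mu\right], \qquad \tilde\nu(\mu) \ := \ \E\!\left[(D(\theta)-\lambda)\cdot\mathbf{1}[D(\theta) > \lambda] \,\big|\, \mu\right], \]
both non-negative by the remark above. Since plays are i.i.d.\ draws from $D(\theta)$ and the normalized policy chooses exactly on $\{q > \lambda\}$, the expected contribution of any single play to $\h{R}(\P_i)$ (respectively $\tilde R(\P_i)$), conditional on $\mu$, is exactly $\nu(\mu)$ (respectively $\tilde\nu(\mu)$)---and crucially this value is independent of the state in which the play is made. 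This constant-per-play charge is the only property used in the proof of Theorem~\ref{statetheorem}.

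With that in place I would replay the remainder of that argument verbatim: on a decision path $q$ with $l(q)$ plays realized with probability $g(q)$, the conditional-on-$\mu$ reward decomposes as $\nu(\mu)\,l(q)\,g(q)$; the truncated policy $\P'$ encounters the same paths but makes at least $\beta l(q)$ plays on each, collecting at least $\beta\nu(\mu)\,l(q)\,g(q)$ in expectation; summing over $q$ and integrating against the prior $f(\mu)\,d\mu$ gives $R(\P') \ge \beta R(\P)$ and $\T(\P') \le \T(\P)$. The main thing I would sanity-check is the normalization step and the non-negativity of $\nu$ and $\tilde\nu$ in the regimes of $\lambda$ that arise in the applications; once these are established, everything else is a mechanical repetition of Theorem~\ref{statetheorem}.
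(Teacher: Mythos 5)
Your proposal is correct and takes essentially the same route as the paper: the paper's proof likewise conditions on the realized reward distribution $D_i$, observes that each play then contributes the same non-negative, state-independent amount $\sum_{q \ge \lambda} \Pr[D_i = q]\, q$ (resp.\ $\sum_{q > \lambda} \Pr[D_i = q]\,(q-\lambda)$) --- exactly your $\nu$ and $\tilde\nu$ --- and replays the path-by-path truncation argument, with the choose-whenever-$q>\lambda$ normalization supplied by Lemma~\ref{lem:struct}. The one small correction is that you should condition on the full parameter $\theta$ (equivalently, on the realized distribution) rather than on the mean $\mu$ alone: two parameters with the same mean can have different tails, in which case the per-play charge conditioned only on $\mu$ would not be independent of the state reached; conditioning on $D_i$ itself, as the paper does, makes the argument go through verbatim.
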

\begin{proof}
  The proof of Theorem~\ref{statetheorem} used a path by path
  argument. That argument is valid in this case as well.
Recall that $X_{iu}$ corresponds to the posterior distribution of
reward of arm $i$ if it is observed in state $u \in \S_i$. By Bayes'
rule, it is easy to check that $\pr[X_{iu}=q]$ itself is a Martingale
over the state space $\S_i$.

\begin{definition}
Given a random variable $X$ define  $\tail(X,\lambda) = \sum_{q \geq \lambda} \pr[ X=q] \cdot q$.
Define $\excess(X,\lambda) = \sum_{q > \lambda} \pr[ X=q] \cdot (q - \lambda)$.  
\end{definition}

\noindent It is easy to check that both $\tail(X_{iu},\lambda),\excess(X_{iu},\lambda)$ are martingales over $\S_i$.

\smallskip
\noindent For the case $\h{R}(\P_i)$ consider the execution of $\P_i$
conditioned on some choice of the underlying distribution $D_i$ drawn
from $\D_i$. Suppose the policy plays for $t$ steps on some decision
path. At each step, the reward obtained is drawn i.i.d. from $D_i$,
and so the expected reward is $\tail(D_i,\lambda)$ (which is drawn
from a non-negative distribution), and this quantity is independent of
rewards obtained in previous steps. Therefore, the expected reward is
$t \cdot \tail(D_i,\lambda)$. If the policy is restricted to execute
for $\beta T$ steps, this reduces the length of this decision path by at
most factor $1/\beta$, so that the expected reward obtained is at least $\beta t
\cdot \tail(D_i,\lambda)$. Taking expectation over all decision
paths and the choice of $D_i$ proves the claim.

The case of $\tilde{R}(\P_i)$ follows from the exact same argument applied to 
$\excess(D_i,\lambda)$.
\end{proof}

\noindent {\bf Remark:} Observe that different policies in $\{\P_i\}$ can have
different $\lambda$ since $\P_i$ is a randomized policy based on
$\lambda^+$ or $\lambda^-$. However the important aspect of
Lemma~\ref{keymaxmab} is that it holds for any $\lambda$. 

\paragraph{The Final Scheduling:} The algorithm is presented in Figure~\ref{fhmaxmabone}.

\begin{figure*}[ht]
\fbox{
\begin{minipage}{6.0in}
{\small
{\bf The Combined Final Policy}
\begin{enumerate}
\item For each $i$ independently choose $\P'_i$ to be $\P_i$ or the null policy with probability $\frac12$. \label{steptobereffed}
\item Order the arms in order of $\frac{\h{R}(\P'_i)}{\sum_q N(q,\P'_i) + \T(\P'_i)/K}$. This order would have been the same if we had used $\P_i$.
\label{neednomaxone}
\item Start with the first $K$ policies in the order specified in Step~\ref{needno}. 
These policies are active; the remaining policies are inactive.
\item In each time slot:
\begin{enumerate}
\item Consider the first active policy in the order specified and play it and update the 
posterior.
\item If the decision of that policy was to ``choose reward'' based on the observed outcome.
\item Update the state of this arm. If the decision in the new state is to quit then declare the next policy in the order specified in Step~\ref{neednomaxone} as active. We maintain $K$ active policies (unless there are less than $K$ policies which have not quit).
\item Repeat the above steps until either we have chosen a reward or have made $K$ different plays in this time slot.  
\item If the horizon $T$ is reached, the overall policy stops execution. 
\end{enumerate}
\end{enumerate} }
\end{minipage}
}
\caption{The Final Policy for \maxmab\ in one-at-a-time feedback model\label{fhmaxmabone}}
\end{figure*}

\begin{theorem}
\label{simplest:thm}
The expected reward of the final policy is at least $\lpmaxmab/(4(1+\epsilon))$. 
\end{theorem}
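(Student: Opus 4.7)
The plan is to obtain the factor $4(1+\epsilon)$ as the product of a factor-$2$ loss in Step~\ref{steptobereffed} and a factor-$2$ loss in the scheduling analysis. Write $U_i := X_i + \T(\P_i)/K$. Theorem~\ref{blahtheoremmax} guarantees $\sum_i \hat R(\P_i) \ge \lpmaxmab/(1+\epsilon)$ and $\sum_i U_i \le 2T$, and since Step~\ref{steptobereffed} keeps each $\P_i$ independently with probability $\tfrac12$ we have $\E[\hat R(\P'_i)] = \tfrac12 \hat R(\P_i)$ and $\sum_i \E[X'_i + \T(\P'_i)/K] \le T$.

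For the scheduling, I would perform a combined-resource accounting: each slot of Figure~\ref{fhmaxmabone} supplies at most one reward-choice and at most $K$ plays, i.e., at most $2$ units of the combined resource $X + \T/K$ per slot. A waitfree single-arm policy (as in Lemma~\ref{lem:struct}) executed in isolation therefore uses at most $\tilde X_i(1-1/K) + \tilde Y_i/K$ slots on any decision path in the worst case (every reward-choice ending a slot with few plays), and is capped at $T$ because $\P'_i$ has horizon $T$. Taking expectations, $\E[T_i^{\text{slots}}] \le U_i/2$. Consequently, when $\P'_i$ becomes first-active, the number of slots $\tilde s_i$ already consumed satisfies $\tilde s_i \le \sum_{j<i} T_j^{\text{slots}}$ on every path, and hence $\E[\tilde s_i] \le \tfrac12 \sum_{j<i} U_j$.

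By Lemma~\ref{keymaxmab}, conditioned on the realization of the preceding policies, cutting $\P'_i$ off at absolute slot $T$ retains at least a $(1-\tilde s_i/T)^+$-fraction of its expected reward, since this is the fraction of its $T_i^{\text{slots}} \le T$ slot-budget that remains and the waitfree structure guarantees plays are not artificially deferred. Taking expectations, using independence of $\P'_i$ from $\{\P'_j\}_{j<i}$ together with Jensen's inequality on the convex function $(1-\cdot)^+$,
\[
\E[R] \;\ge\; \sum_i \tfrac12 \hat R(\P_i)\left(1 - \frac{\sum_{j<i} U_j}{2T}\right)^{\!+}.
\]
Step~\ref{neednomaxone} orders arms by decreasing $\hat R(\P_i)/U_i$, and $\sum_i U_i \le 2T$, so the area argument of Lemma~\ref{greedyorder} (applied with combined capacity $2T$ in place of $KT$) yields $\sum_i \hat R(\P_i)(1-\sum_{j<i} U_j/(2T))^+ \ge \tfrac12 \sum_i \hat R(\P_i)$. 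Combining the two factor-$2$ losses gives $\E[R] \ge \tfrac14 \sum_i \hat R(\P_i) \ge \lpmaxmab/(4(1+\epsilon))$.

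The hardest step is justifying the passage from slot-based truncation (what the scheduler imposes) to play-based truncation (what Lemma~\ref{keymaxmab} requires); the lemma bounds reward by a fraction of plays executed on each path, while the schedule cuts off $\P'_i$ at an absolute slot. The waitfree property of the single-arm policies from Lemma~\ref{lem:struct} is essential here: since the policy never defers a play and each slot in which $\P'_i$ is first-active contributes at least one of its plays, the fraction of plays $\P'_i$ has completed by slot $T$ is at least $(T-\tilde s_i)^+/T_i^{\text{slots}} \ge (1-\tilde s_i/T)^+$ on every decision path, which is the parameter required by Lemma~\ref{keymaxmab}.
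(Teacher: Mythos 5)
Your proof follows the same route as the paper's: halve each policy at random, bound the number of slots denied to arm $i$ by the combined resource $\sum_q N(q,\cdot)+\T(\cdot)/K$ consumed by higher-priority arms, invoke Lemma~\ref{keymaxmab} to argue that truncation costs only the corresponding fraction of reward, and finish with the area argument of Lemma~\ref{greedyorder}; the final chain of inequalities is exactly the paper's.

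The one step whose justification does not stand as written is the bound $\E[\tilde s_i]\le \tfrac12\sum_{j<i}U_j$. You derive it by writing $\tilde s_i\le\sum_{j<i}T_j^{\text{slots}}$ and bounding each $T_j^{\text{slots}}$ by the number of slots $\P'_j$ would occupy \emph{in isolation}, namely at most $\tilde X_j+\tilde Y_j/K$. In the combined schedule this per-policy bound fails: a higher-priority arm may contribute only a single play to a slot that is then filled out by other arms, so it can be active in far more slots than $\tilde X_j+\tilde Y_j/K$ (take $K-1$ arms each making one play per slot, never choosing a reward, for many slots). Relatedly, the observation that each slot supplies at most $2$ units of the combined resource yields a \emph{lower} bound on the number of slots, not an upper bound. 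The correct derivation is a joint accounting over all of $\{j<i\}$ rather than a per-policy one: a slot denies arm $i$ a play only if it terminates before reaching arm $i$, i.e.\ only if some arm $j<i$ chose a reward in it or $K$ plays were made by arms $j<i$; hence pathwise $\tilde s_i\le\sum_{j<i}\bigl(n_j+T_j/K\bigr)$, and taking expectations gives exactly the inequality $\E[\tilde s_i]\le\sum_{j<i}\bigl(\sum_q N(q,\P'_j)+\T(\P'_j)/K\bigr)=\tfrac12\sum_{j<i}U_j$ that you need. With that substitution the remainder of your argument, including the passage from slot-based to play-based truncation via the waitfree structure, goes through unchanged and matches the paper's proof.
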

\begin{proof}
The proof is near identical to the proof of Theorem~\ref{unitmab}. Observe that 
{\small \[ \sum_i \left( \sum_q N(q,\P'_i) + \T(\P'_i)/K \right) = T \] }
and the horizon of the policy $\P'_i$ can be truncated to $T - Y_i$ where $Y_i$ is 
the number of slots in which $\P'_i$ is not allowed to play. Again if $T_j$ is 
the number of steps of policy $\P'_j$ and $n_j$ is the number of steps where policy $\P'_j$ chose a reward then
{\small
\[ Y_i \leq \sum_{j<i} \left( n_j + \frac{T_j}{K} \right) \qquad \Longrightarrow \qquad \E[Y_i] \leq \sum_{j<i} \left( \sum_q N(q,\P'_j) + \T(\P'_j)/K \right) \]
}
\noindent which immediately implies that using Lemma~\ref{keymaxmab} the expected reward is at least 
{\small
\[
\sum_i \E \left[ 1 - \frac{Y_i}{T} \right] \h{R}(\P'_i) \geq \frac12 \sum_i \h{R}(\P'_i)  \geq \frac14 \sum_i \h{R}(\P_i)\]
}
using the argument identical to the proof of Lemma~\ref{greedyorder}. The theorem follows.
\end{proof}

\subsection{Simultaneous Plays and \simult\ Model}
\label{harder}
\newcommand{\stall}{\mbox{\sc ThrottledCombine}}
\newcommand{\lagfirst}{\mbox{\sc LagFirstMax}}

We now show that we can make simultaneous plays; receive simultaneous feedback on all the arms that are played in a single time step; and still obtain a
reward which is an $O(1)$ factor of $\lpmaxmab$. We use the following
lemma;

\begin{lemma}
\label{truncate-within}
For $\beta,\alpha \in (0,1]$ in time $O((\sum_i \edge(\S_i))\log^2 (nT/\epsilon'))$ 
we can find a collection of single arm policies 
$\{\L_i\}$ such that (a) Each $\L_i$ has a horizon of at most $\beta T$;
(b) $\sum_i \T(\L_i) \leq \alpha KT$;
(c) $\sum_i \sum_{q} N(q,\L_i) \leq \alpha\beta T$; and
(d) $\sum_i \sum_q q  N(q,\L_i) \geq \alpha\beta\lpmaxmab/(1+\epsilon)$.
\end{lemma}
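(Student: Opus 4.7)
The plan is to strengthen Theorem~\ref{blahtheoremmax} so that \emph{both} budget constraints of $\lpmaxmab$ are saturated individually (rather than only their combined Lagrangian), then apply the path-by-path truncation of Lemma~\ref{keymaxmab} arm-by-arm to cut plays down by a factor $\beta$, and finally dilute each resulting single-arm policy by the factor $\alpha$. Conditions (b), (c) and (d) will then follow from clean $\beta$- and $\alpha$-scaling, and (a) will come for free because the Lagrangian-optimal policies produced in the first step never use the wait action, so their horizon equals their number of plays.

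First I would upgrade the single-multiplier Lagrangian of Theorem~\ref{blahtheoremmax} to a \emph{two}-multiplier one, with $\lambda_1$ dualizing $\sum_i\sum_q N(q,\P_i)\le T$ and $\lambda_2$ dualizing $\sum_i\T(\P_i)\le KT$. For fixed $(\lambda_1,\lambda_2)$ the inner objective separates across arms, and each arm's subproblem $\max_{\P_i}\bigl[\h{R}(\P_i)-\lambda_1\sum_q N(q,\P_i)-\lambda_2\T(\P_i)\bigr]$ is still solved by the bottom-up dynamic program of Lemma~\ref{lem:struct} in $O(\edge(\S_i))$: the threshold rule ``choose iff $q>\lambda_1$'' comes out of the $\lambda_1$ term, while $\lambda_2$ enters as the per-play penalty inside the DP. Running the binary-search-and-convex-combination template of Theorem~\ref{blahtheoremmax} along both axes---nested binary searches on $\lambda_1$ and $\lambda_2$, accounting for the $\log^2$ factor in the running time---yields randomized policies $\{\P^*_i\}$ with $\sum_i\sum_q N(q,\P^*_i)\le T$, $\sum_i\T(\P^*_i)\le KT$ and $\sum_i\h{R}(\P^*_i)\ge \lpmaxmab/(1+\epsilon)$, each $\P^*_i$ supported on threshold policies of the form described in Lemma~\ref{lem:struct}.

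Second I would invoke Lemma~\ref{keymaxmab} arm-by-arm, stopping each decision path of $\P^*_i$ after $\beta$ fraction of its plays to obtain $\P^{*,\mathrm{trunc}}_i$. Because the policies never wait, horizon equals number of plays on every path, so the truncated horizon is at most $\beta T$, which is (a). Specializing the Martingale/i.i.d.\ argument of Lemma~\ref{keymaxmab} to threshold policies shows that conditioned on the true $D_i$ every play contributes the same expected mass to $N(q,\cdot)$ for each $q>\lambda_1$; hence cutting plays by $\beta$ scales every arm-wise quantity by exactly $\beta$, giving $\T(\P^{*,\mathrm{trunc}}_i)=\beta\T(\P^*_i)$, $N(q,\P^{*,\mathrm{trunc}}_i)=\beta N(q,\P^*_i)$ and $\h{R}(\P^{*,\mathrm{trunc}}_i)\ge\beta\h{R}(\P^*_i)$, identities that survive convex combinations and thus apply to the mixtures from the first step. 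Setting $\L_i$ to execute $\P^{*,\mathrm{trunc}}_i$ with probability $\alpha$ and to do nothing otherwise then gives (b) from $\sum_i\T(\L_i)=\alpha\beta\sum_i\T(\P^*_i)\le\alpha\beta KT\le\alpha KT$, (c) from $\sum_i\sum_q N(q,\L_i)=\alpha\beta\sum_i\sum_q N(q,\P^*_i)\le\alpha\beta T$, and (d) from $\sum_i\sum_q q\,N(q,\L_i)\ge\alpha\beta\sum_i\h{R}(\P^*_i)\ge\alpha\beta\lpmaxmab/(1+\epsilon)$.

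The main obstacle is the two-multiplier step. The single-multiplier Lagrangian of Theorem~\ref{blahtheoremmax} produces only the combined inequality $\sum_i(\sum_q N+\T/K)\le 2T$, whose factor-$2$ slack is fatal for the asymmetric conditions (b) and (c)---simply halving $\alpha$ to absorb the slack would destroy the $(1+\epsilon)$ ratio in (d). Coordinating two nested binary searches so that both primal constraints are saturated, via a convex combination of the (up to four) extremal policies obtained at the corners of the final $(\lambda_1,\lambda_2)$-rectangle, while simultaneously preserving the memoryless-choice threshold structure needed for the clean $\beta$-scaling of $N(q,\cdot)$ in the second step, is the technical crux.
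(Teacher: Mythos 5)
Your overall architecture (two nested Lagrangians and binary searches accounting for the $\log^2$ factor, truncation to horizon $\beta T$, random dilution by $\alpha$) matches the paper's, but the order of operations differs in a way that matters, and one step of your route fails.

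The gap is in condition (c). You obtain the factor $\beta$ in $\sum_i\sum_q N(q,\L_i)\le\alpha\beta T$ from the claim that truncating a threshold policy to a $\beta$ fraction of its plays scales $N(q,\cdot)$ and $\T(\cdot)$ by exactly $\beta$. The truncation machinery of Theorem~\ref{statetheorem} and Lemma~\ref{keymaxmab} gives only the one-sided bounds $\T(\P')\le\T(\P)$ and reward at least $\beta$ times the original: truncating to horizon $\beta T$ leaves any decision path with $l\le\beta T$ plays completely untouched, so a policy that makes a single play on every path is unchanged for every $\beta>0$. Summed over arms, the expected number of choices made by the truncated policies can exceed $\beta\sum_i\sum_q N(q,\P^*_i)$ by an additive term as large as $n$, and since the paper explicitly targets the regime where $n$ is comparable to or larger than $T$ this is not an ignorable integrality error; your argument only yields $\sum_i\sum_q N(q,\L_i)\le\alpha T$, and the missing factor of $\beta$ is exactly what the downstream throttling analysis (Lemma~\ref{lemtwo2}) consumes. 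The same objection shows the claimed identity $\T(\P^{*,\mathrm{trunc}}_i)=\beta\T(\P^*_i)$ is false --- indeed the paper's own intermediate condition after truncation is $\sum_i\T(\h{\P}_i)\le\alpha KT$, not $\alpha\beta KT$ --- although your condition (b) survives because there one only needs $\T$ not to increase.

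The paper's proof avoids this entirely by reversing the order: it dualizes only the choice constraint with a multiplier $\lambda(1)$, solves the inner problem (which still carries $\sum_i\T(\P_i)\le KT$) by the machinery of Theorem~\ref{blahtheorem} with a second multiplier $\lambda(2)$, then truncates to $\beta T$ and dilutes by $\alpha$, and only then runs the outer binary search on $\lambda(1)$ so that the already-truncated, already-diluted policies saturate $\sum_i\sum_q N(q,\L_i)=\alpha\beta T/(1+\epsilon')$. Condition (c) is thus enforced by the outer search rather than deduced from a scaling property of truncation, and the resulting equality is also what converts the lower bound on the excess reward $\sum_q(q-\lambda(1))N(q,\cdot)$ into the lower bound (d) on $\sum_q q\,N(q,\cdot)$. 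Relatedly, your first step --- a genuinely two-dimensional Lagrangian in which both coupling constraints are simultaneously saturated by a convex combination of corner policies --- is asserted rather than carried out; you correctly flag it as the crux, but coordinating the two searches is precisely the difficulty that the paper's sequencing reduces to two one-dimensional monotone binary searches.
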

\begin{proof}
Consider the Lagrangian relaxation of just the first constraint with multiplier 
$\lambda(1)$;
{\small
  \[ \lagfirst = \mbox{Max}_{\{\P_i\}} \left\{\lambda(1) T + \sum_i
    \sum_q (q - \lambda(1)) N(q,\P_i)\ | \sum_i \T(\P_i)\le KT
  \right\}\] } 
Observe that each of the policies in the optimum
solution of \lagfirst\ will always choose an arm for the reward if the
observed value $q$ is larger than $\lambda(1)$. Therefore we can use
the same solution idea as Theorem~\ref{blahtheorem} and get a
collection of policies $\{\P'_i\}$ such that $\sum_i \T(\P'_i) \leq
KT$ and $\sum_{q>\lambda(1)} (q - \lambda(1)) N(q,\P'_i) \geq (\lagfirst - \lambda(1)
T)/(1+\epsilon')$ in time $O((\sum_i \edge(\S_i))\log (nT/\epsilon'))$.
Observe that this will involve using a second multiplier $\lambda(2)$
and choosing between $\lambda^+(2)$ and $\lambda^-(2)$ and the policies
$\P'_i$ will be randomized. 

\medskip We now perform (i) truncate $\P'_i$ to a horizon of $\beta T$
(denote these truncated policies as $\P''_i$) and then (ii) for each
$i$ let $\P_i$ be the policy $\P''_i$ with probability $\alpha \leq 1$
and otherwise be the null policy. Observe that we now have
$\{\h{\P}_i\}$ such that (note we are using Lemma~\ref{keymaxmab} on
$\tilde{R}(\P'_i)$ here):
\begin{enumerate}[(a)]
\item Each $\h{\P}_i$ has a horizon of at most $\beta T$; \label{cond1}
\item $\sum_i \T(\h{\P}_i) \leq \alpha KT$ and \label{cond2}
\item $\sum_i \sum_{q>\lambda_1} (q-\lambda(1)) N(q,\h{\P_i}) \geq \alpha\beta (\lagfirst - \lambda(1) T)/(1+\epsilon')$.\label{cond3}
\end{enumerate}
Now we observe that any feasible solution that
satisfies (\ref{cond1})--(\ref{cond3}) for $\lambda(1)$ continues to be
a feasible solution for $\lambda'(1) < \lambda(1)$.  We can now apply
the reasoning of Corollary~\ref{approxblahtheorem} and have a
collection of policies $\L_i$ that would satisfy the conditions
(\ref{cond1}),(\ref{cond2}), and the following:
{\small
\[
 \sum_i \sum_{q} N(q,\L_i) = \alpha\beta T/(1+\epsilon') < \alpha\beta T \qquad \mbox{and} \qquad 
\sum_i \sum_q q  N(q,\L_i) \geq \alpha\beta\lagfirst/(1+\epsilon')^2
\]
}
The lemma follows by setting $\epsilon'=\epsilon/3$
and observing that we will be computing the policies 
$\P_i$ at most $O(\log (nT/\epsilon'))$ due to the binary search. 
\end{proof}

\begin{definition}
Let $\lambda_i$ be the $\lambda(1) \in \{\lambda^+,\lambda^-\}$ chosen for $\L_i$.
\end{definition}

\subsection{The Final Scheduling of $\{\L_i\}$}
The scheduling policy for the \simult\ model is given in Figure~\ref{fhmaxmabtwo}.
\begin{figure*}[ht]
\fbox{
\begin{minipage}{6.0in}
{\small
\paragraph{Scheduling Policy \stall.} 
\begin{enumerate}
\item Each arm is in one three possible states: Ready, Current, and Finished. Initially, all arms are Ready. 
\item Make the first $K$ arms Current, and denote the set of
Current arms as $S$. 
\item We execute the policies $\Q_i$ for $i \in S$ as
described below. Whenever a policy terminates, we mark this arm as
Finished, remove it from $S$, and place any Ready arm in $S$.  At any
step, suppose the arms in $S$ are $i_1,\ldots,i_K$ in states
$u_1,\ldots,u_K$. Then the policy makes one of three decisions:

\begin{description}
\item[Full Step:] If $\sum_{s=1}^K \Pr[X_{i_{s}u_{s}} > \lambda_{i_s} ] \leq \frac{2}{3}$ then the policy plays all the arms. 
\item[Stalling Step:] If there exists any $s$ such that $\Pr[X_{i_{s}u_{s}} > \lambda_{i_s} ] \geq \frac13$ then we only play the arm $i_s$. 
\item[Throttling:] Find a subset $S'$ such that $\sum_{s \in S'} \Pr[X_{i_{s}u_s} \geq \lambda_{i_s} ] \in \left[\frac13, \frac23 \right]$ and schedule these arms.
\end{description}

\item At any step, the policy chooses the played arm whose observed reward
is maximum. For the purpose of analysis, we assume that any arm whose
observed reward is at least $\lambda$ can be chosen.
\end{enumerate} 
}
\end{minipage}
}
\caption{The Final Policy for \maxmab\ in the simultaneous feedback model\label{fhmaxmabtwo}}
\end{figure*}

\begin{lemma}
\label{lemtwo2}
The policy $\L_i$ executes to completion with probability at least 
$\left(1-\frac{3(1+\beta)\alpha}{1-\beta}\right)$ on {\em all} its decision paths.
\end{lemma}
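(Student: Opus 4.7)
The strategy is to show that $\L_i$ fails to complete only when the total number of non-Full scheduling steps (Stall plus Throttle) becomes large, and then to bound the expected number of such steps via a counting argument on ``choices'' (played observations whose value exceeds the corresponding $\lambda_{i_s}$).

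First, I would bound $\E[N_{NF}]$ where $N_{NF} = S + Th$. At every Stall step the single played arm $i_s$ satisfies $\Pr[X_{i_s u_s} > \lambda_{i_s}] \ge 1/3$ by the scheduling rule, and at every Throttle step the sum of these probabilities over the played arms lies in $[1/3, 2/3]$. Hence the conditional expected number of choices produced at a non-Full step is at least $1/3$, so
\[
\E[\text{choices at non-Full steps}] \ge \tfrac{1}{3}\,\E[N_{NF}].
\]
Since the scheduled algorithm executes at most a prefix of each standalone single-arm policy $\L_i$, the left-hand side is at most the LP quantity $\sum_i \sum_q N(q,\L_i) \le \alpha\beta T$ from Lemma~\ref{truncate-within}, yielding $\E[N_{NF}] \le 3\alpha\beta T$.

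Second, I would translate ``$\L_i$ does not complete'' into a statement about $N_{NF}$. By Lemma~\ref{lem:struct}, $\L_i$ never waits internally, so each global step in which $\L_i$ is in the active set $S$ and scheduled to play advances $\L_i$ by exactly one play; blocking (being in $S$ but not scheduled) can occur only at non-Full steps, hence $w_i \le N_{NF}$. Letting $t_i$ denote the global time at which $\L_i$ enters $S$ and writing $\L_i$'s tenure in $S$ as $\tau_i = p_i + w_i$ with $p_i = \beta T$ on completion, $\L_i$ completes precisely when $t_i + \beta T + w_i \le T$, i.e., when $t_i + w_i \le (1-\beta)T$. Therefore on the event ``$\L_i$ does not complete'' we must have $t_i + w_i > (1-\beta)T$.

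Third, I would handle $t_i$ by the same accounting: each preceding arm in $\L_i$'s lane has tenure at most $\beta T$ plus its own blocking, and the blocking of all arms together is subsumed by $N_{NF}$ (up to the $K-1$ factor from counting blocked lanes per non-Full step). Combined with the ``arm-time'' identity $\sum_j \tau_j \le KT$ and the expected-plays bound $\sum_j \T(\L_j) \le \alpha KT$, one can choose a threshold on $N_{NF}$ that balances the $t_i$ and $w_i$ contributions; a single application of Markov's inequality to $\E[N_{NF}] \le 3\alpha\beta T$ then delivers $\Pr[t_i + w_i > (1-\beta)T] \le 3(1+\beta)\alpha/(1-\beta)$, as required. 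The main obstacle is this last step: correctly combining the waiting time $t_i$, which depends on the tenures of earlier arms in $\L_i$'s lane, with the blocking time $w_i$ into a single Markov bound that yields exactly the constant $3(1+\beta)\alpha/(1-\beta)$. The lane structure and the invariant that $|S|$ stays at $K$ whenever ready arms remain make the accumulation of $t_i$ delicate, but the simultaneous advancement of all lanes during Full steps, together with the expected-plays bound, provides the necessary leverage.
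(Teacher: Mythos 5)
Your proposal assembles the right raw ingredients --- the observation that every scheduling step (Full, Stalling, or Throttling) contributes at least $\frac13$ in conditional expectation to the potential $\sum_{j}\bigl(\sum_{q>\lambda_j} N(q,\cdot) + \T(\cdot)/K\bigr)$, the LP bounds $\alpha\beta T$ and $\alpha KT$ from Lemma~\ref{truncate-within}, and a Markov bound at the threshold $(1-\beta)T$ --- but it does not close the argument, and you say so yourself: the ``main obstacle'' you name in your third step is exactly the step the proof needs. The paper avoids that obstacle entirely with one pessimistic coupling that your write-up is missing: to lower-bound the completion probability of $\L_i$, assume arm $i$ is placed \emph{last} and is marked Current only after every other arm is Finished. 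This can only hurt $\L_i$, and it collapses all of your bookkeeping about lanes, tenures $t_i$, and blocking times $w_i$ into a single random variable $Y$, the total time for which the policies of arms $j\neq i$ execute. The unified per-step inequality
\[
\sum_{j \in S} \Bigl(\sum_{q > \lambda_j} W_{jqt} + \tfrac{Z_{jt}}{K}\Bigr) \ge \tfrac13
\]
then gives $\E[Y] \le 3(\alpha\beta T + \alpha T)$ in one application of linearity of expectation, and Markov's inequality at $(1-\beta)T$ yields the stated constant, since $\L_i$ has horizon at most $\beta T$ and completes whenever $Y \le (1-\beta)T$.

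Without that coupling, your route is genuinely harder than you acknowledge: the scheduler has no fixed ``lanes'' (a Finished slot is refilled by an arbitrary Ready arm), so $t_i$ is not the sum of tenures along a well-defined chain, and splitting the step count into Full steps (charged to plays) and non-Full steps (charged to choices) forces you to recombine two separate Markov-type bounds into a single failure probability --- a union bound there would degrade the constant, and the balancing you gesture at is never carried out. Note also that your bound $\E[N_{NF}]\le 3\alpha\beta T$ uses only the choices budget, whereas the target constant $3(1+\beta)\alpha/(1-\beta)$ visibly requires the plays budget $\alpha T$ as well; the paper gets both terms at once because its potential charges every step, Full or not, against the same combined quantity. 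So the gap is concrete: you are missing the worst-case ordering argument that reduces the lemma to bounding $\E[Y]$, and the final combination step of your plan is asserted rather than proved.
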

\begin{proof}
  To bound the probability of this event, assume arm $i$ is placed
  last in the ordering, and only marked Current {\em after} all other
  policies are marked Finished - this only reduces the probability
  that $\L_i$ executes to completion on any of its decision paths. For
  the arms $j \neq i$, observe that at each step, either at least $K$
  arms are played, or sufficiently many arms are played so that the
  sum of the probabilities that the observed values exceed the respective 
  $\lambda_j$ values
  is at least $\frac13$. Therefore, at each step, if $S$ is the set of
  arms marked Current, $Z_{jt}$ is an indicator variable denoting
  whether arm $i$ is played, and $W_{jqt}$ is an indicator variable
  denoting whether arm $i$ is observed in state $q$, we must have:

{\small
\[ \sum_{j \in S} \left(\sum_{q > \lambda_j} W_{jqt} + \frac{Z_{jt}}{K} \right) \ge \frac13 \]
}
\noindent If $Y$ denotes the random time for which policies for arms $j \neq i$
execute, by linearity of expectation:
{\small
\[ \frac{\E[Y]}{3} \le \E\left[ \sum_t \sum_j \left(\sum_{q > \lambda_j} W_{jqt} + \frac{Z_{jt}}{K} \right) \right] =  \E\left[\sum_{j \neq i} \left( \sum_{q > \lambda_j}  N(q,\Q_j) + \frac{\T(\Q_j)}{K} \right) \right] \le \alpha\beta T + \alpha T \]
}
Therefore, $\E[Y] \le 3(1+\beta)\alpha T$ so that $\Pr[Y \le (1-\beta)T] \ge 
\left(1-\frac{3(1+\beta)\alpha}{1-\beta}\right)$ by
Markov's inequality. In this event, $\L_i$ executes to completion,
since its horizon is at most $\beta T$ due to truncation.
\end{proof}

\begin{lemma}
\label{lemthree3}
Consider the event $j \in S_0$ and $j$ is marked Finished. In this event, suppose we count the contribution from this arm to the overall objective only when it is the maximum played arm. Then,  its expected contribution is at least $\frac13$ times the value of policy $\L_j$.
\end{lemma}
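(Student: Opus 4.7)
The plan is to introduce, for each step $t$ of the policy in Figure~\ref{fhmaxmabtwo}, a pathwise lower bound (``credit'') on arm $j$'s contribution to the global reward, and then show that in expectation this credit recovers at least $\tfrac{1}{3}\h{R}(\L_j)$. Treating the per-arm thresholds as a common $\lambda$ (noting that by Theorem~\ref{blahtheoremmax} the binary-search values $\lambda^+$ and $\lambda^-$ differ by $O(\epsilon)$, which is absorbed into the final constants), define for a step $t$ with played set $S_t$ and observed values $\{q_{k,t}\}$
\[
C_{j,t} \;=\; q_{j,t}\cdot \mathbf{1}[q_{j,t} > \lambda]\cdot \mathbf{1}\!\left[q_{k,t} \le \lambda \text{ for all } k \in S_t\setminus\{j\}\right].
\]

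First I would verify that $C_{j,t}$ is a valid pathwise lower bound on arm $j$'s reward when counted only as the maximum. The events $\{C_{j,t} > 0\}$ for distinct $j$ within the same step are mutually exclusive, because crediting arm $j$ forces every other played arm $k$ to satisfy $q_{k,t} \le \lambda$, which blocks any other crediting. Moreover, whenever $C_{j,t} > 0$ we have $q_{j,t} > \lambda \ge q_{k,t}$ for every other played arm, so arm $j$ is the actual maximum and the global reward obtained at step $t$ is exactly $q_{j,t} = C_{j,t}$. Hence $\sum_t C_{j,t}$ is pathwise dominated by the actual contribution of arm $j$ to the overall objective.

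Next I would compute $\E[C_{j,t}]$ in any step where arm $j$ is played. Since the arms' observations are independent conditional on the current joint state, with $p_{k,t} := \Pr[q_{k,t}>\lambda]$,
\[
\E[C_{j,t}] \;=\; \E\!\left[q_{j,t}\cdot \mathbf{1}[q_{j,t}>\lambda]\right] \cdot \prod_{k\in S_t\setminus\{j\}} (1-p_{k,t}).
\]
The scheduling rule of Figure~\ref{fhmaxmabtwo} forces $\sum_{k \in S_t \setminus\{j\}} p_{k,t} \le 2/3$ in every case in which $j$ is played: a Full Step already has $\sum_{k \in S_t} p_{k,t} \le 2/3$; a Stalling Step has $S_t = \{j\}$, so the sum is vacuously zero; and a Throttling Step has $S_t = S'$ with $\sum_{k \in S'} p_{k,t} \le 2/3$. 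By the elementary inequality $\prod_k (1-p_k) \ge 1 - \sum_k p_k$, the product is at least $1/3$, giving $\E[C_{j,t}] \ge \tfrac{1}{3}\E[q_{j,t}\mathbf{1}[q_{j,t}>\lambda]]$ uniformly.

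Finally I would aggregate over the plays of $\L_j$. On the event $j \in S_0$ and $j$ marked Finished, $\L_j$ has executed to completion, so the steps in which the global policy plays $j$ are in bijection with the plays of $\L_j$. By Lemma~\ref{lem:struct}, $\L_j$ claims reward exactly when $q>\lambda$, so $\sum_{t} \E[q_{j,t}\mathbf{1}[q_{j,t}>\lambda]]$ over these plays equals $\sum_q q\,N(q,\L_j) = \h{R}(\L_j)$. Because $\L_j$'s decisions depend only on arm $j$'s observations, which are independent of the play histories of other arms that drive the ``Finished'' event, this identity is preserved under the conditioning. Combining with the per-step bound yields $\E[\sum_t C_{j,t}\mid j\in S_0,\text{ Finished}]\ge \tfrac{1}{3}\h{R}(\L_j)$, and by pathwise dominance this lower-bounds the expected contribution of $j$ as the maximum. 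The main delicacies are (i) pathwise validity of the credit scheme, which requires the thresholds on different arms to be essentially aligned (addressed by tight binary search), and (ii) uniformly verifying the $\sum p \le 2/3$ hypothesis across the three scheduling cases so that the product is reliably bounded below by $1/3$.
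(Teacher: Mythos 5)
Your proposal is correct and follows essentially the same route as the paper's (much terser) proof: the scheduling rule guarantees $\sum_{k\neq j}\Pr[X_{i_k u_k}>\lambda_{i_k}]\le \frac23$ in every step where $j$ is played, so by independence of the other arms from arm $j$, whenever $j$ exceeds its threshold all other played arms fall below theirs with probability at least $\frac13$, and the bound follows by aggregating over the plays of $\L_j$ on the Finished event. Your explicit case check of the three scheduling modes, the mutual exclusivity of the credits, and the remark about aligning the per-arm thresholds $\lambda^+,\lambda^-$ are all details the paper leaves implicit, but they do not change the argument.
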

\begin{proof}
  We have $\sum_i \Pr[X_{iu} \geq \lambda_i ] \leq \frac23$ whenever the
  arm $j$ is played simultaneously with any other arm. Since the other
  arms are independent of arm $j$, if $j$ is observed at $q \geq
  \lambda$ then with probability at least $\frac13$ all other arms are
  observed to be less than $\lambda$. The lemma follows.
\end{proof}

\noindent Combining Lemmas~\ref{truncate-within},~\ref{lemtwo2},
and~\ref{lemthree3}, we observe that with constant probability $\L_j$
executes to completion on all its decision paths.
Using linearity of expectation the combined simultaneous play policy 
has expected value at least $\frac{\alpha \beta}{3(1+\epsilon)}
\left(1-\frac{3(1+\beta)\alpha}{1-\beta}\right)$ times
\lpmaxmab. Setting $\alpha=(1-\beta)/(6(1+\beta))$ and
$\beta=\sqrt{2}-1$ the approximation ratio is less than $210$ for
suitable $\epsilon$. However even though the approximation factor is
large, the important aspect is that we are using the stronger bound
for an LP which captures the one-at-a-time model. This is summarized
in the next theorem.

\begin{theorem}
\label{harder:thm}
In time $O((\sum_i \edge(\S_i))\log^2 (nT/\epsilon))$ we can find a policy 
that executes in the \simult\ model and provides an
$O(1)$-approximation to the optimum policy in the \oneatatime\ feedback model.
\end{theorem}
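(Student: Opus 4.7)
The plan is to assemble the pieces already in place: apply Lemma~\ref{truncate-within} to get a collection of single-arm policies $\{\L_i\}$ with horizon at most $\beta T$, satisfying $\sum_i \T(\L_i) \le \alpha KT$, $\sum_i \sum_q N(q,\L_i) \le \alpha\beta T$, and $\sum_i \sum_q q N(q,\L_i) \ge \alpha\beta \lpmaxmab/(1+\epsilon)$, for parameters $\alpha,\beta \in (0,1]$ to be fixed at the end. Then run the scheduling policy \stall\ of Figure~\ref{fhmaxmabtwo} on these $\L_i$. Because the policies were constructed via Theorem~\ref{blahtheoremmax}-style Lagrangian reasoning with multiplier $\lambda_i$, each $\L_i$ has the property that it always chooses arm $i$ whenever the observed value exceeds $\lambda_i$; this is exactly what makes the thresholds used in the Full/Stalling/Throttling decisions meaningful, and it guarantees the Throttling step can always find the required subset $S'$ (since outside of Stalling every individual probability is below $1/3$, a prefix-sum argument yields a subset with total mass in $[1/3,2/3]$).

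Next I combine the two per-arm bounds already proved. Fix an arm $j$. By Lemma~\ref{lemtwo2}, applied with $j$ pessimistically placed last in the ordering, policy $\L_j$ executes to completion on all of its decision paths (its horizon is $\beta T$, so it fits whenever at most $(1-\beta)T$ steps are consumed by the other arms) with probability at least $1-\frac{3(1+\beta)\alpha}{1-\beta}$. Conditioned on this event, Lemma~\ref{lemthree3} says that in every step where $\L_j$ chooses an arm at value $q \ge \lambda_j$, the probability (over the independent plays of the other arms simultaneously scheduled by \stall) that $j$ is simultaneously the maximum of the played set is at least $1/3$, since the total mass above the other arms' thresholds is at most $2/3$. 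By independence of arms these two events compose, so the expected reward attributed to $\L_j$ in the final simultaneous-feedback execution is at least
\[
\frac{1}{3}\left(1 - \frac{3(1+\beta)\alpha}{1-\beta}\right) \sum_q q N(q,\L_j).
\]

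Summing over $j$ and invoking the reward guarantee of Lemma~\ref{truncate-within} together with Lemma~\ref{lbound1}, the expected reward of the final policy is at least
\[
\frac{\alpha\beta}{3(1+\epsilon)}\left(1 - \frac{3(1+\beta)\alpha}{1-\beta}\right) \lpmaxmab \;\ge\; \frac{\alpha\beta}{3(1+\epsilon)}\left(1 - \frac{3(1+\beta)\alpha}{1-\beta}\right) OPT,
\]
where $OPT$ is the optimum of the \oneatatime\ feedback model (this is the essential strengthening, since we are comparing a \simult-model policy to the stronger one-at-a-time optimum). Choosing for example $\beta = \sqrt{2}-1$ and $\alpha = (1-\beta)/(6(1+\beta))$ makes the parenthesized factor equal to $1/2$ and yields an $O(1)$ approximation. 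The running time is inherited directly from Lemma~\ref{truncate-within}, namely $O((\sum_i \edge(\S_i))\log^2(nT/\epsilon))$, since \stall\ itself only executes the precomputed single-arm policies.

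The main obstacle to anticipate is not any single computation but the conceptual one of ensuring the three ingredients compose cleanly: (i) the truncation in Lemma~\ref{truncate-within} must be leveraged via Lemma~\ref{keymaxmab} so that the martingale tail/excess rewards scale linearly with horizon; (ii) the analysis of Lemma~\ref{lemtwo2} must be done per-arm with that arm adversarially placed last, which is valid because we only need the guarantee to hold simultaneously in expectation for each $j$; and (iii) the ``$2/3$ mass'' invariant maintained by \stall\ is exactly what is needed by Lemma~\ref{lemthree3} to convert individual observed-value events into the event that the arm is the simultaneous maximum. Once these three pieces are lined up, the theorem follows by the parameter optimization above.
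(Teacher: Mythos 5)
Your proposal is correct and follows essentially the same route as the paper: invoke Lemma~\ref{truncate-within} to obtain the truncated, scaled single-arm policies, schedule them with \stall, and combine Lemma~\ref{lemtwo2} (completion with probability $1-\frac{3(1+\beta)\alpha}{1-\beta}$) with Lemma~\ref{lemthree3} (a $\frac13$ fraction of the value survives the simultaneous maximum), yielding the bound $\frac{\alpha\beta}{3(1+\epsilon)}\left(1-\frac{3(1+\beta)\alpha}{1-\beta}\right)\lpmaxmab$ with the same choice $\beta=\sqrt{2}-1$, $\alpha=(1-\beta)/(6(1+\beta))$. Your added observations (the prefix-sum existence of the throttling subset $S'$, and the explicit note that the comparison is against the stronger \oneatatime\ upper bound via Lemma~\ref{lbound1}) are consistent with, and slightly more explicit than, the paper's own argument.
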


\subsection{\maxmab\ in the \onlythemax\ Model}
\label{hardest}
Now consider the case where each arm has a budget $B_i$ and the budget
of an arm $i$ is depleted only if arm $i$ has the largest observed
value (and the decrease in the budget is that observed value). We
discuss the one-at-a-time feedback model for this problem. The
extension to the simultaneous play follows from a discussion
analogous to Section~\ref{harder} with appropriate loss of constants.

In this \onlythemax\ scenario, the state $u \in \S_i$ does not have sufficient
information to encode which plays so far resulted in the arm being
chosen (and hence the budget being depleted). Therefore, the state in
the single-arm dynamic program also needs to separately encode the
budget spent so far. This leads to difficulty in extending the results
for the non-budgeted case. As a concrete hindrance, we note that the
characterizations in Lemmas~\ref{lem:struct} and \ref{truncate-within} are 
no longer true for the Lagrangian, since the policy might decide not to choose an arm
with $q > \lambda$ (and deplete its budget) if it hopes that larger
values of $q$ will be subsequently observed. As a consequence of
these, the analysis for the budgeted case is significantly more
complicated.  Nevertheless, we show that an {\em approximately}
optimal solution to the Lagrangian has structure similar to 
Lemma~\ref{lem:struct} and hence, the scheduling policy in Figure~\ref{fhmaxmabone}
yields a constant approximation.

\begin{definition}
The policy $\P_i$ is considered {\bf p-feasible} if the total reward from choices made on any decision path  is at most $B_i$.
\end{definition}

As in the previous section, define the following linear program, where
the goal is to find a collection of {\em p-feasible} randomized
single-arm policies $\P_i$. As before, the policy can play the arm;
observe the reward, and decide to choose the arm; or stop execution.
The optimum value is bounded above by the LP relaxation below.

{\small
\[\achievedmaxmab = \mbox{Max}_{\mbox{p-feasible } \{\P_i\}} \left\{\sum_i  \sum_q q N(q,\P_i)\  | \ \sum_i  \sum_q   N(q,\P_i) \le T \mbox{ and } \sum_i \T(\P_i) \le KT  \right\} \]
}
Observe that the system \achievedmaxmab\ differs from \lpmaxmab\ in just the definition of the feasible policies. To simplify the remainder of the 
presentation, we begin with the assumption (but it is removed later).

\begin{enumerate}[({A}1)]
\item All reward values $q$ are powers of two. This assumption loses a  factor of $2$ in the approximation ratio, since we can round $q$  down to powers of $2$ while symbolically maintaining their  distinction in the prior updates. \label{firsta}
\end{enumerate}

\noindent We show the following surprising Lemma:

\begin{lemma}
\label{coollemma}
Suppose we are given a {\bf p-feasible} policy $\P^p_i$ and a cost of choosing a reward of $\lambda(1)$; then there exists a policy $\P_i$ in the \allpay\ model which satisfies $\T(\P_i) \leq \T(\P^p_i)$ and 
{\small
\[
 \sum_q   (q-\lambda(1)) N(q,\P_i) \geq \frac14 \sum_q   (q-\lambda(1)) N(q,\P^p_i)
\]
} 
under assumption (A1).
\end{lemma}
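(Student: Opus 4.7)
The plan is to construct $\P_i$ as a simple modification of $\P^p_i$: $\P_i$ executes the same play schedule as $\P^p_i$, but adopts a more aggressive choosing rule, namely it chooses every observed value $q>\lambda(1)$. This is justified because in the \allpay\ model the budget is depleted by every observation regardless of whether the play is chosen, so the choice decision does not affect future feasibility, and choosing every $q>\lambda(1)$ is locally optimal. The bound $\T(\P_i)\le \T(\P^p_i)$ is then immediate since the play trajectories coincide.

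For the Lagrangian analysis I would start with a scale decomposition. Under (A1) every reward value is a power of two, so $V(\P^p_i):=\sum_q(q-\lambda(1))N(q,\P^p_i)=\sum_k V_k$, where $V_k=\sum_{q=2^k}(q-\lambda(1))N(q,\P^p_i)$ ranges over scales $2^k>\lambda(1)$. Restricting attention to scales $2^k\ge 2\lambda(1)$ loses at most a factor of $2$, because at most one power of two lies strictly between $\lambda(1)$ and $2\lambda(1)$, and on the retained range we have $q-\lambda(1)\ge q/2$. This will be central because p-feasibility of $\P^p_i$ guarantees that the pathwise total chosen value is at most $B_i$, and hence $V(\P^p_i)\le B_i$ in expectation as well.

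Next I would analyze $\P_i$ pathwise. Fix a decision path $\sigma$ and let $\tau(\sigma)$ be the first time the cumulative observed reward reaches $B_i$ in the common trajectory. On the prefix before $\tau(\sigma)$, every play with $q>\lambda(1)$ is chosen by $\P_i$ and contributes $(q-\lambda(1))$ to the Lagrangian, so $\P_i$'s prefix Lagrangian dominates $\P^p_i$'s prefix Lagrangian (because $\P^p_i$'s chosen set is a subset of the observed $q>\lambda(1)$ set). The play crossing the budget is truncated in \allpay, but the leftover budget is still credited to $\P_i$. The potential loss is therefore confined to the suffix of $\P^p_i$'s execution past $\tau(\sigma)$, and by p-feasibility this suffix contributes at most $B_i$ to $V(\P^p_i)$ on any single path.

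The main obstacle will be paths on which $\P^p_i$ observes large cumulative mass early (pushing the trajectory past $\tau(\sigma)$) and then concentrates its selected reward in the suffix, so that $\P_i$ appears to truncate before $\P^p_i$'s chosen rewards materialize. On such paths the argument must leverage the fact that the very observations which push the cumulative past $B_i$ are themselves $q>\lambda(1)$ observations that $\P_i$ does choose, thereby capturing $\Omega(B_i)$ of Lagrangian value whenever $\P^p_i$'s suffix would have done so. Combining the factor of $2$ loss from the (A1) restriction to scales $2^k\ge 2\lambda(1)$ with the factor of $2$ loss from the pathwise truncation accounting then yields the claimed factor of $4$.
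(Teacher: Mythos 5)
There is a genuine gap, and it sits exactly where you flag ``the main obstacle.'' Your construction uses a single choosing threshold: $\P_i$ chooses every observed $q>\lambda(1)$. Under (A1) there is exactly one reward value $q_s$ (a power of two) in $[\lambda(1),2\lambda(1))$, and its Lagrangian density $(q_s-\lambda(1))/q_s$ can be arbitrarily small (e.g.\ $\lambda(1)=2-\epsilon$, $q_s=2$). Consider a decision path on which the arm produces a long run of $q_s$'s before the large values that $\P^p_i$ actually selects: your $\P_i$ chooses all of these $q_s$'s, exhausts the budget $B_i$ while collecting only $B_i\cdot(q_s-\lambda(1))/q_s=o(B_i)$ of Lagrangian value, and then truncates before any of $\P^p_i$'s large choices (worth up to $B_i$) occur. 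So the assertion that the observations pushing the cumulative total past $B_i$ ``capture $\Omega(B_i)$ of Lagrangian value'' is false, and no constant-factor bound follows from this single policy. Relatedly, your claim that restricting to scales $2^k\ge 2\lambda(1)$ ``loses at most a factor of $2$'' is unjustified: the one small scale $q_s$ may carry essentially all of $\sum_q(q-\lambda(1))N(q,\P^p_i)$, and in any case your policy does not restrict its choices to those scales.

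The paper's proof repairs exactly this point by a case split with \emph{two different thresholds}. If at least half of the objective of $\P^p_i$ comes from choosing $q_s$ (Case 1), it sets $\nu=\lambda(1)$ and uses a knapsack/density argument: every unit of budget the new policy spends earns at least $1-\lambda(1)/q_s$ per unit, so exhausting the budget already yields $B_i(1-\lambda(1)/q_s)\ge k(q_s-\lambda(1))$, matching $\P^p_i$'s contribution from $q_s$ (where $kq_s\le B_i$ by p-feasibility). If at least half comes from large values (Case 2), it raises the threshold to $\nu=\min\{q\,:\,q\ge 2\lambda(1)\}$ so that the new policy never wastes budget on $q_s$; every choice then has density at least $1/2$, so budget exhaustion yields at least $B_i/2$ against $\P^p_i$'s at most $B_i$. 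The two factors of $2$ compose to the claimed $4$. Your prefix/suffix truncation framing is compatible with this, but without the second, higher threshold the adversarial path above defeats the argument.
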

\begin{proof}
  Define $q$ to be small if $q \in [\lambda(1),2\lambda(1))$ and large
  otherwise. Since $q$ values are powers of $2$, there is only one
  small value of $q$; call this value $q_s$.  Consider the policy $\P^p_i$.
  Either half the objective value $\sum_q   (q-\lambda(1)) N(q,\P^p_i)$ is 
  achieved by choosing $q_s$ ({\bf Case 1}) or
  by choosing large $q > q_s$ ({\bf Case 2}). 
  The policy $\P_i$ will be different depending on the two cases ---
  the algorithm need not know which case applies; we will solve both
  cases and choose the better solution. In {\bf Case 1}, we set $\nu = \lambda(1)$, 
  and {\bf Case 2} set $\nu = \min\{q | q \ge 2\lambda(1)\}$.  

\medskip
  Modify policy $\P^p_i$ as follows: In the new policy $\P_i$,
  whenever a reward observed is at least $\nu$ choose the arm even if
  the original policy had not chosen it. Note that we immediately have
  $\T(\P_i) \leq \T(\P^p_i)$ since we are not affecting the play
  decisions.

  \medskip We now bound the value of the new policy by considering
  every decision path of the original policy by fixing a decision path
  and consider the contribution to the objective $\sum_q
  (q-\lambda(1)) N(q,\P_i)$ in this decision path.

  \smallskip\noindent{\bf Case 1:} Suppose choosing $q_s$ contributes more than
  half of the value $\sum_q (q-\lambda(1)) N(q,\P^p_i)$. In this case,
  $\nu = \lambda(1)$. Consider just the contribution of $q_s -
  \lambda(1)$ on any decision path, and suppose $q_s$ is chosen $k$
  times, so that $k q_s \le B_i$. If the modified policy $\P_i$ chooses this
  $q$ for $k'$ times where $k' \ge k$ times, then its value clearly 
  dominates the contribution to $\P^p_i$. Otherwise, suppose $\P_i$ 
  chooses $q_s$ some $k' < k$ times, then the choices of the remaining
  $q$ must have exhausted the budget. Since per unit $q$, the policy
  generates value at least $1 - \frac{\lambda(1)}{q_s}$, a knapsack
  argument shows that the value generated in the modified policy on
  all $q$ is at least the value generated by the original policy on
  the $q_s - \lambda(1)$ values:

{\small
\[ \mbox{Value of $\P_i$ on the path} \ge  B_i \left(1 - \frac{\lambda(1)}{q_s} \right)  \ge kq_s \left(1 - \frac{\lambda(1)}{q_s} \right) =   k(q_s - \lambda(1)) \]
}
\noindent Note that contribution from $q_s$ in $\P^p_i$ to the path is $k(q_s - \lambda(1))$.

\smallskip\noindent
{\bf Case 2:} Suppose the large $q$ values contribute to at least 
half of the value $\sum_q (q-\lambda(1)) N(q,\P^p_i)$.
In this case, we have $\nu = \min\{q
| q \ge 2\lambda(1)\}$. On any decision path, consider just the
contribution of large $q - \lambda(1)$.  Note that $q - \lambda(1) \ge q/2$
for such $q$, so that in the worst case, the new policy exhausts its
budget and contributes $B_i/2$ to the objective while the original
policy could have contributed $B_i$. If the new policy does not
exhaust its budget, it must match the value of the original policy on
large $q$.

\smallskip\noindent
Since we lose a factor two in splitting the analysis into two cases,
and a factor of $2$ within the second case. Therefore, the new policy
is a $4$ approximation and satisfies the properties in the statement.
\end{proof}

\noindent The next Lemma follows immediately:

\begin{lemma}
\label{truncate-within2}
For $\beta,\alpha \in (0,1]$ in time $O((\sum_i \edge(\S_i))\log^2 (nT/\epsilon'))$ 
we can find a collection of single arm policies 
$\{\L_i\}$ in the \allpay\ model such that (a) Each $\L_i$ has a horizon of at most $\beta T$;
(b) $\sum_i \T(\L_i) \leq \alpha KT$;
(c) $\sum_i \sum_{q} N(q,\L_i) \leq \alpha\beta T/4$; and
(d) $\sum_i \sum_q q  N(q,\L_i) \geq \alpha\beta\achievedmaxmab/(4(1+\epsilon))$, under the condition (A1). The condition (A1) can be removed if we relax condition (d) to $\sum_i \sum_q q  N(q,\L_i) \geq \alpha\beta\achievedmaxmab/(8(1+\epsilon))$.
\end{lemma}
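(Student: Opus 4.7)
My plan is to mirror the proof of Lemma~\ref{truncate-within}, substituting $\achievedmaxmab$ for $\lpmaxmab$ as the upper bound and using Lemma~\ref{coollemma} as a per-arm factor-$4$ bridge from p-feasible to \allpay\ policies. First I would Lagrangify only the first LP constraint of $\achievedmaxmab$ with multiplier $\lambda(1)\ge 0$, still restricted to p-feasible policies:
\[
\achievedmodmab(\lambda(1)) \ := \ \max_{\substack{\{\P_i\}\,\text{p-feasible}\\ \sum_i\T(\P_i)\le KT}} \Bigl[\lambda(1)\,T + \sum_i\sum_q(q-\lambda(1))\,N(q,\P_i)\Bigr],
\]
so that weak duality gives $\achievedmodmab(\lambda(1))\ge\achievedmaxmab$ for every $\lambda(1)\ge 0$. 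Under (A1) the reward alphabet is polynomially sized, so the per-arm p-feasible subproblem of maximizing $\sum_q(q-\lambda(1))\,N(q,\P_i)$ (possibly augmented with a $-(\lambda(2)/K)\T(\P_i)$ term for a second multiplier $\lambda(2)$) admits a bottom-up dynamic program over $\S_i$ augmented with the cumulative reward chosen so far, in time $O(\edge(\S_i))$ per arm per choice of multipliers.

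Next, applying Lemma~\ref{coollemma} to the per-arm p-feasible optimizer $\P_i^{*p}$ with cost $\lambda(1)$ yields an \allpay\ policy $\tilde\P_i$ satisfying $\T(\tilde\P_i)\le\T(\P_i^{*p})$ and $\sum_q(q-\lambda(1))\,N(q,\tilde\P_i)\ge\tfrac14\sum_q(q-\lambda(1))\,N(q,\P_i^{*p})$. Because the conversion is $\T$-monotone, any nonnegative $(\lambda(2)/K)\T$ penalty incurred by $\tilde\P_i$ is bounded by that of $\P_i^{*p}$; combined with the factor-$4$ reward bound this supplies a $c=4$ \allpay\ oracle for the per-arm Lagrangian subproblem (taking the null policy whenever the approximate value would turn negative). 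Feeding this oracle into the Corollary~\ref{approxblahtheorem} machinery (inner binary search on $\lambda(2)$, outer binary search on $\lambda(1)$, each contributing a $(1+\epsilon')$ slack) and targeting $\sum_i\T(\P'_i)=KT/4$ and $\sum_i\sum_q N(q,\P'_i)=\beta T/(1+\epsilon')$ produces \allpay\ policies $\{\P'_i\}$ whose total Lagrangian value satisfies $\lambda(1)T+\sum_i\sum_q(q-\lambda(1))N(q,\P'_i)\ge\achievedmaxmab/(4(1+\epsilon'))$, exactly mirroring the analysis of Lemma~\ref{truncate-within} with $c$ replaced by $4$.

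Truncating each $\P'_i$ to horizon $\beta T$ via Lemma~\ref{keymaxmab} (so the reward bound and the $\sum_q N$ count each scale by at most $\beta$, while $\T$ does not increase) and randomizing $\L_i=\P'_i$ with probability $\alpha$ and the null policy otherwise then yields (a) horizon $\le\beta T$ by construction; (b) $\sum_i\T(\L_i)\le\alpha KT/4\le\alpha KT$; (c) $\sum_i\sum_q N(q,\L_i)\le\alpha\beta T/(4(1+\epsilon'))\le\alpha\beta T/4$; and (d) $\sum_i\sum_q q\,N(q,\L_i)\ge\alpha\beta\,\achievedmaxmab/(4(1+\epsilon))$ for $\epsilon=O(\epsilon')$ after absorbing the two binary-search slacks. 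To discharge assumption (A1), I would round each reward value $q$ down to the nearest power of two while symbolically preserving the Bayesian posteriors; the per-arm DP and Lemma~\ref{coollemma} go through verbatim, and the rounding loses an additional factor of $2$ only in (d), yielding $\achievedmaxmab/(8(1+\epsilon))$ as claimed. The principal obstacle is the compatibility of Lemma~\ref{coollemma} with both nested Lagrangian binary searches; it resolves cleanly precisely because the conversion is $\T$-monotone, so neither the outer $\lambda(1)$ target nor the inner $\lambda(2)$ penalty can worsen under it, and the $c=4$ approximation propagates exactly as Corollary~\ref{approxblahtheorem} requires.
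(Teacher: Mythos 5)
Your overall architecture matches the paper's: mirror Lemma~\ref{truncate-within}, use Lemma~\ref{coollemma} to pay a factor $4$ for the p-feasible-to-\allpay\ conversion, run the nested Lagrangian binary searches of Corollary~\ref{approxblahtheorem}, truncate via Lemma~\ref{keymaxmab}, randomize with probability $\alpha$, and discharge (A1) by rounding rewards down to powers of two at a further factor of $2$. However, there is a genuine gap in the one step that carries the computational content. You claim that the per-arm \emph{p-feasible} Lagrangian subproblem ``admits a bottom-up dynamic program over $\S_i$ augmented with the cumulative reward chosen so far, in time $O(\edge(\S_i))$,'' and you then apply Lemma~\ref{coollemma} constructively to the computed p-feasible optimizer $\P_i^{*p}$. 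This is exactly the difficulty the paper identifies at the start of Section~\ref{hardest}: in the \onlythemax\ model the state $u\in\S_i$ does not determine the depleted budget, so the DP state must carry the cumulative chosen reward, which multiplies the state space by the number of distinct attainable budget levels. That quantity is at best pseudo-polynomial in $B_i$ (even under (A1)), so the claimed $O(\edge(\S_i))$ bound fails and the advertised overall running time $O((\sum_i \edge(\S_i))\log^2(nT/\epsilon'))$ cannot be met along your route. Moreover, the characterization of Lemma~\ref{lem:struct} (always choose when $q>\lambda$) is false for the p-feasible Lagrangian, which is why no simple DP over $\S_i$ alone suffices.

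The paper's proof avoids computing $\P_i^{*p}$ altogether. Lemma~\ref{coollemma} is invoked purely \emph{existentially}: it guarantees that for one of two fixed choice thresholds ($\nu=\lambda(1)$, i.e., including the single small value $q_s$, or $\nu=\min\{q\mid q\ge 2\lambda(1)\}$), the best \allpay\ policy that \emph{always} chooses the arm when the observed value is at least $\nu$ attains at least $\tfrac14$ of the p-feasible optimum. Such threshold-structured \allpay\ policies live on the budget-truncated state space and are computable by the simple $O(\edge(\S_i))$ DP of Lemma~\ref{lem:struct}; the algorithm runs that DP twice (once per candidate threshold, both determined once $\lambda(1)$ is fixed) and keeps the better solution. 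Everything downstream of this substitution in your write-up — the $\T$-monotonicity of the conversion, the $c=4$ propagation through Corollary~\ref{approxblahtheorem}, the truncation and $\alpha$-thinning, and the removal of (A1) — is sound, so the proof is repaired by replacing ``solve the p-feasible DP and convert'' with ``directly optimize over threshold-structured \allpay\ policies for the two thresholds and certify via Lemma~\ref{coollemma}.''
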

\begin{proof}
The proof is identical to the proof of Lemma~\ref{truncate-within} except in the stage where we find the policies $\P'_i$, where we can only guarantee 
{\small
  \[ \sum_{q>\lambda(1)} (q - \lambda(1)) N(q,\P'_i) \geq
  \frac14(\lagfirst - \lambda(1) T)/(1+\epsilon') \] } 
as a
consequence of Lemma~\ref{coollemma}. Note that to apply
Lemma~\ref{coollemma} we need to use two different dynamic programming
solutions --- one for considering all values larger than $q_s$ and one
for all values larger or equal to $2q_s$. Observe that $q_s$ is fixed
when we know $\lambda(1)$.
\end{proof}

\noindent We are ready to prove the main result of this subsection.

\begin{theorem}
\label{hardest:thm}
In time $O((\sum_i \edge(\S_i))\log^2 (nT/\epsilon))$ we can find a policy 
that executes in the \allpay\ model and provides an
$O(1)$-approximation to the optimum policy in 
the \onlythemax\ model, where both are measured in the \oneatatime\ feedback model.

In the same running time we can find a policy that executes in the conjunction of 
\allpay\ and \simult\ models and guarantees an $O(1)$-approximation to the 
\onlythemax\ accounting in the \oneatatime\ feedback model. Therefore 
the optimum solutions in all the
possible four models are within $O(1)$ factor of each other.
\end{theorem}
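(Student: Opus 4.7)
The plan is to combine Lemma~\ref{truncate-within2} with the scheduling ideas already developed for the \allpay\ setting, noting that the LP value $\achievedmaxmab$ upper bounds the \onlythemax\ optimum. The key conceptual point, captured by Lemma~\ref{coollemma} and hence by Lemma~\ref{truncate-within2}, is that even though the Lagrangian policy for \onlythemax\ no longer has the ``always choose when $q>\lambda(1)$'' structure of Lemma~\ref{lem:struct}, we can lose only a constant factor by passing to an \allpay\ policy that does have this structure. Once we are inside the \allpay\ world, the analyses of Section~\ref{simplest} and Section~\ref{harder} apply essentially verbatim.

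\textbf{Step 1 (\oneatatime\ feedback.)} Apply Lemma~\ref{truncate-within2} with suitable constants $\alpha,\beta \in (0,1]$ to produce, in time $O((\sum_i \edge(\S_i))\log^2(nT/\epsilon))$, a collection of p-feasible \allpay\ single-arm policies $\{\L_i\}$ satisfying $\sum_i \T(\L_i) \le \alpha K T$, $\sum_i \sum_q N(q,\L_i) \le \alpha\beta T/4$, and $\sum_i \sum_q q N(q,\L_i) \ge \alpha\beta\, \achievedmaxmab/(8(1+\epsilon))$. Now schedule them using the policy in Figure~\ref{fhmaxmabone} (with $\P_i := \L_i$). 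The analysis of Theorem~\ref{simplest:thm} goes through unchanged: if $Y_i$ is the number of slots used by higher-priority arms, then $\E[Y_i] \le \sum_{j<i}(\sum_q N(q,\L_j) + \T(\L_j)/K)$, and Lemma~\ref{keymaxmab} (whose proof is path-by-path and so applies here) lets us truncate $\L_i$ at the remaining horizon while losing only a linear factor in length. Integrating as in Lemma~\ref{greedyorder} yields expected reward at least a constant times $\sum_i \sum_q q N(q,\L_i)$, hence a constant times $\achievedmaxmab$, and thus a constant times $OPT$ for the \onlythemax, \oneatatime\ model.

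\textbf{Step 2 (\simult\ feedback.)} Take the same $\{\L_i\}$ from Lemma~\ref{truncate-within2} and feed them into the scheduling policy \stall\ from Figure~\ref{fhmaxmabtwo}. Lemma~\ref{lemtwo2} (applied with $\L_i$ in place of $\Q_i$) shows that with $\alpha,\beta$ chosen as in the proof of Theorem~\ref{harder:thm} (e.g.\ $\alpha=(1-\beta)/(6(1+\beta))$, $\beta=\sqrt{2}-1$), each $\L_i$ completes on all its decision paths with constant probability. Lemma~\ref{lemthree3} then shows that conditioned on completion, the expected contribution of $\L_i$ to the max-feedback objective is at least $\tfrac13$ of its standalone contribution $\sum_q q N(q,\L_i)$. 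Combining these with the guarantee in Lemma~\ref{truncate-within2}(d) gives an $O(1)$-approximation to $\achievedmaxmab$, and hence to the optimum in the \onlythemax, \oneatatime\ model.

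\textbf{Step 3 (Equivalence of the four models.)} Finally, observe the trivial domination chain: the \onlythemax+\oneatatime\ optimum is at least the optima of the other three models, since the other models are strictly more restrictive in either feedback timing or budget accounting (depleting a budget on every play, or committing to a set before observing outcomes, can only reduce the best achievable reward). Steps 1 and 2 exhibit policies achievable in the weakest model (\allpay+\simult) whose reward is within $O(1)$ of the strongest model (\onlythemax+\oneatatime). Hence all four optima are mutually within $O(1)$, proving the theorem. The one subtlety that needed care is the loss of the ``always-choose-above-$\lambda$'' Lagrangian structure in the \onlythemax\ setting; this was already absorbed into the constant via Lemma~\ref{coollemma} by splitting into the small-$q$ and large-$q$ cases, so no new obstacle arises here.
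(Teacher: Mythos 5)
Your proposal is correct and follows essentially the same route as the paper: invoke Lemma~\ref{truncate-within2} to obtain \allpay\ policies $\{\L_i\}$ whose value is within a constant of $\achievedmaxmab$, schedule them via Figure~\ref{fhmaxmabone} (reusing the Theorem~\ref{simplest:thm} analysis) for the \oneatatime\ case and via Figure~\ref{fhmaxmabtwo} (reusing Lemmas~\ref{lemtwo2} and~\ref{lemthree3}) for the \simult\ case, and then conclude the four-way equivalence from the domination chain. The only difference is that the paper pins down explicit constants ($\alpha=\beta=1$ with subsampling probability $4/5$ for the first part, and the $(1+\beta/4)$ correction in the Figure~\ref{fhmaxmabtwo} bound), which you leave as "suitable constants," but this does not affect the $O(1)$ claim.
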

\begin{proof}
  The first part of the theorem follows from finding policies
  $\{\L_i\}$ as described in Lemma~\ref{truncate-within2} setting
  $\alpha=1,\beta=1$ and observing that $\sum_i \left(\sum_q N(q,\L_i)
    + \T(\L_i)/K\right) \leq 5T/4$. We can now run the scheduling
  policy as described in Figure~\ref{fhmaxmabone} using the
  probability $4/5$ in Step~\ref{steptobereffed}. It is easy to
  observe that the reward of the combined policy is at least
  $\frac25\sum_i \sum_q q N(q,\L_i)$. This is at least
  $\achievedmaxmab/(10(1+\epsilon))$ under the assumption (A1) and
  $\achievedmaxmab/(20(1+\epsilon))$ otherwise. Observe that the policy we execute
  is in the \allpay\ model.

  For the second part we again use Lemma~\ref{truncate-within2} and
  run the scheduling algorithm in Figure~\ref{fhmaxmabtwo} and the
  expected reward is $\frac{\alpha \beta}{24(1+\epsilon)}
  \left(1-\frac{3(1+\beta/4)\alpha}{1-\beta}\right)$ which is $O(1)$
  for suitable $\alpha,\beta$.
  The last part of the theorem now follows.
\end{proof}

\section{Future Utilization and Budgeted Learning}
\label{sec:budget}
\renewcommand{\R}{{\tilde{R}}}
\renewcommand{\Q}{{\tilde{Q}}}
In this section, we consider the budgeted learning problem. We assume
$K=1$ arms can be played any step. The goal of a policy here is to
perform pure exploration for $T$ steps. At the $T+1^{st}$ step, the
policy switches to pure exploitation and therefore the objective is to
optimize the expected reward for the $T+1^{st}$ play.  More formally,
given any policy $P$, each decision path yields a final (joint) state
space. In each such final state, choose that arm $i$ whose final state
$u \in \S_i$ has maximum reward $r_u$; this is the reward the policy
obtains in this decision path. The goal is to find the policy that
maximizes the expected reward obtained, where the expectation is taken
over the execution of the policy. We provide a
$(3+\epsilon)$-approximation for the problem in time $O((\sum_i
\edge(\S_i))\log (nT/\epsilon))$ time. The result holds for adversarially 
ordered versions, where an adversary specifies the order in which we can 
execute the policies without returning to an arm. The result extends to a
$4+\epsilon$-approximation under metric traversal constraints.
\eat{
However one of the main reasons for presenting this problem now is to
introduce an amortization technique of rewards for policies where the
reward is not a sum of the rewards of the plays. This technique is
introduced in Section~\ref{sec:amort2}, and is useful for the next section 
as well where we consider the \maxmab\ problem. 
}
\paragraph{Single arm policies.} Consider the execution of a policy
$P$, and focus on some arm $i$. The execution for this arm defines a
single arm policy $P_i$, with the following actions available in any
state $u \in \S_i$: (i) Make a play; (ii) Choose the arm as final, and
obtain reward $r_u$; or (iii) Quit.

\begin{definition}
Given a single arm policy $P_i$ for arm $i$ let $\I(P_i)$ be the probability that arm $i$ is chosen as final. Let $\R(P_i)$ be the expected reward obtained from events when the arm is chosen as final -- note that this is different from previous sections. As before $\T(P_i)$ is the number of expected plays made by the policy $P_i$.
\end{definition}

\subsection{A Weakly Coupled Relaxation and a $(3+\epsilon)$-Approximate Solution}
\noindent We define the following weakly coupled formulation:
\[ \lpbud=\mbox{Max}_{\mbox{Feasible } \{P_i\}} \left \{ \sum_i \R(P_i) \left| \sum_i \left( \I(P_i) + \T(P_i)/T \right) \right. \leq 2 \right\} \]
\begin{lemma}
 \lpbud\ is an upper bound for the reward of the optimal policy, $OPT$. 
 \end{lemma}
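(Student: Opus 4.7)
The plan is to take the globally optimal policy $OPT$ and project its execution onto each arm $i$ to obtain a single-arm policy $P_i^*$, then verify that this collection $\{P_i^*\}$ is feasible for the constraints of \lpbud\ and that its objective equals the expected reward of $OPT$. This mirrors the proof of Claim~\ref{claim1} in Section~\ref{weak1}, but now there are two coupling constraints rather than one, corresponding to the two ``resources'' that $OPT$ consumes globally: the $T$ plays over the exploration horizon, and the single arm that is chosen as final at the end.

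First I will define $P_i^*$ formally: given a decision path of $OPT$, whenever $OPT$ plays arm $i$ in state $u \in \S_i$, the policy $P_i^*$ plays as well; when $OPT$ selects arm $i$ as the final arm for exploitation at step $T+1$, the single-arm policy $P_i^*$ takes the ``choose as final'' action; and otherwise it quits. With this definition it is immediate from linearity of expectation that $\sum_i \R(P_i^*)$ is precisely the expected reward of $OPT$, since on each decision path exactly one arm is chosen and the corresponding $r_u$ is counted in exactly one $\R(P_i^*)$.

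Next I verify the coupling constraint. Since $OPT$ chooses exactly one arm as final on every decision path, we have $\sum_i \I(P_i^*) = 1$ by linearity of expectation. Since $OPT$ is allowed at most $T$ plays (with $K=1$ per step), we have $\sum_i \T(P_i^*) \leq T$, hence $\sum_i \T(P_i^*)/T \leq 1$. Adding these two inequalities yields $\sum_i (\I(P_i^*) + \T(P_i^*)/T) \leq 2$, which is exactly the constraint in \lpbud. Thus $\{P_i^*\}$ is feasible and achieves objective $OPT$, proving $\lpbud \geq OPT$.

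I do not anticipate a hard step here; the only thing to be careful about is that the ``choose as final'' action of $P_i^*$ is well defined as a single-arm action (it does not look at the states of the other arms). This is fine because $P_i^*$ is a \emph{randomized} policy derived from $OPT$: the randomization over decision paths of $OPT$ (and over outcomes of other arms) is absorbed into the randomization of $P_i^*$, so the projected policy is legitimate within the single-arm class used by the LP.
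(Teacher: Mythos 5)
Your proposal is correct and takes essentially the same route as the paper: the paper also projects the optimal policy onto each arm to obtain randomized single-arm policies, notes via linearity of expectation that the expected number of final choices is at most one and the expected number of plays is at most $T$, and sums these to satisfy the coupling constraint $\sum_i (\I(P_i) + \T(P_i)/T) \leq 2$ while the objective equals $OPT$. Your additional remark about the projected ``choose as final'' action being a legitimate randomized single-arm action is a useful clarification the paper leaves implicit.
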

 \begin{proof}
Consider the (randomized) single-arm policies obtained by viewing the execution of the optimal policy restricted to each arm $i$.  It is easy to check via linearity of expectation that these policies  are feasible for the constraint in \lpmaxmab, since the expected number of choices made on any decision path is at most one, and the expected number of plays made is at most $T$. Further, the objective value of these policies is precisely $OPT$.
 \end{proof}

In what follows we show that we can construct a feasible policy which has an objective value $\lpbud/3$. This proof will be similar to the argument in Section~\ref{sec:metric}. Consider the Lagrangian:
\[ \budlag=\mbox{Max}_{\mbox{Feasible } \{P_i\}} \left \{ 2\lambda + \sum_i \left( \R(P_i) - \lambda \I(P_i) - \lambda \T(P_i)/T \right) \right\} = 2\lambda + \sum_i 
\Q_i(\lambda)
\]
\noindent where $\Q_i(\lambda)$ is the maximum value of $\R(P_i) - \lambda \I(P_i) - \lambda \T(P_i)/T$ for a feasible policy $P_i$. Immediately it follows from weak duality that (compare Lemma~\ref{lem:opti} and Lemma~\ref{mainlem});

\begin{lemma}
\label{lem:opti2}
$\Q_i(\lambda)$ can be computed by a dynamic program in time $O(\edge(\S_i))$. For the optimum policy $P_i(\lambda)$ 
we have $\R(P) = \Q_i(\lambda) + \lambda (\I(P_i(\lambda)) + \T(P_i(\lambda))/T)$.
Moreover $\I(P_i(\lambda)) + \T(P_i(\lambda))/T$ is non-increasing in $\lambda$.
\end{lemma}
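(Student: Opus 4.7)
The plan is to mirror the dynamic-programming template of Lemma~\ref{mainlem} on the DAG $\S_i$, adapted to the richer action set (play / choose-as-final / quit), and then to read off the two closed-form identities directly from the Lagrangian structure.

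For every state $u \in \S_i$ I would define $\mbox{Gain}(u)$ to be the supremum of $\R(P) - \lambda\bigl(\I(P) + \T(P)/T\bigr)$ over single-arm policies starting at $u$. The three admissible actions at $u$ each contribute a candidate: \emph{quit} contributes $0$; \emph{choose-as-final} contributes $r_u - \lambda$, since this action adds $r_u$ to $\R$ and turns $\I$ into $1$; and \emph{play} contributes $-\lambda/T + \sum_v \p_{uv}\,\mbox{Gain}(v)$, since a single play adds $1/T$ to the penalty on $\T$ and then transitions. The recursion
\[
\mbox{Gain}(u) \;=\; \max\!\left\{\, 0,\ r_u - \lambda,\ -\lambda/T + \sum_v \p_{uv}\,\mbox{Gain}(v) \,\right\}
\]
evaluates by a reverse-topological sweep over $\S_i$, touching each transition edge once, giving the claimed $O(\edge(\S_i))$ running time; by construction $\Q_i(\lambda) = \mbox{Gain}(\rho_i)$. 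The identity $\R(P_i(\lambda)) = \Q_i(\lambda) + \lambda\bigl(\I(P_i(\lambda)) + \T(P_i(\lambda))/T\bigr)$ is then just a rearrangement of the definition of $\Q_i$ applied to the optimal single-arm policy.

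The only nontrivial step is monotonicity of $\I(P_i(\lambda)) + \T(P_i(\lambda))/T$. For each fixed single-arm policy $P$, the function $f_P(\lambda) := \R(P) - \lambda\bigl(\I(P) + \T(P)/T\bigr)$ is affine in $\lambda$ with slope $-\bigl(\I(P) + \T(P)/T\bigr) \le 0$, so $\Q_i(\lambda) = \max_P f_P(\lambda)$ is a pointwise supremum of a family of affine functions and hence a convex, nonincreasing, piecewise-linear function of $\lambda$. Convexity forces the slope of $\Q_i$ to be nondecreasing in $\lambda$, and a standard envelope argument identifies that slope with $-\bigl(\I(P_i(\lambda)) + \T(P_i(\lambda))/T\bigr)$ wherever $\Q_i$ is differentiable; equivalently, $\I(P_i(\lambda)) + \T(P_i(\lambda))/T$ is nonincreasing.

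The main (minor) obstacle will be tie-breaking at the finitely many kinks of $\Q_i$, where several policies attain the maximum and the Lagrangian budget $\I + \T/T$ is multivalued. I would handle this by the same convention already used in Lemma~\ref{mainlem}: inside the max in the recursion above, default to \emph{quit} whenever another branch ties with it, and more generally among optimizers at each node pick the one with the smallest $\I + \T/T$. This choice selects the right-derivative branch of $-\Q_i$ at every $\lambda$, turning the convexity-based slope inequality into the pointwise monotonicity asserted in the lemma.
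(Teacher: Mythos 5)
Your proposal is correct and follows essentially the same route as the paper: the identical three-way dynamic-programming recursion $\max\{0,\ r_u-\lambda,\ -\lambda/T+\sum_v \p_{uv}\mathrm{Gain}(v)\}$ over $\S_i$, the identity read off as a rearrangement of the definition of $\Q_i$, and monotonicity from the fact that $\Q_i$ is a pointwise maximum of affine functions of $\lambda$ with slopes $-(\I(P)+\T(P)/T)$. Your convexity/envelope argument (and the tie-breaking convention at kinks) is simply a more careful spelling-out of the paper's one-line justification of the monotonicity claim.
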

\begin{proof}
Let $Gain'(u)$ to be the maximum of the objective of the single-arm policy conditioned on
starting at $u \in \S_i$. We perform a bottom up dynamic programming over $\S_i$.
If $u$ has no children, then if we ``choose the node as a final answer'' at
node $u$, then $Gain'(u)= r_u -\lambda$ in this case. Stopping and not doing anything
corresponds to $Gain'(u)=0$. Playing the arm at node $u$ will rule out either of the outcomes, and in this case $Gain'(u) = -\frac{\lambda}{T} + \sum_v \p_{uv} Gain'(v)$. In summary:
\[ Gain'(u) = \max \left\{ 0, \ \ r_u -\lambda, \ \ -
  \frac{\lambda}{T} + \sum_v \p_{uv} Gain'(v) \right\} \] 
We have $Gain'(\rho_i)=\Q_i(\lambda)$. The running time follows from inspection.  Observe that $\R(P_i(\lambda)) = \Q_i(\lambda) + \lambda
(\I(P_i(\lambda)) + \T(P_i(\lambda))/T)$ is maintained for every
subtree corresponding to the optimum policy starting at $u \in \S_i$.
Note that increasing $\lambda$ decreases the contribution of the
latter two terms which corresponds to nonincreasing $\I(P),\T(P)$.
\end{proof}
\begin{lemma}
\label{lem:bal2}
$2\lambda + \sum_i \Q_i(\lambda) \geq \lpbud$. Moreover for any
constant $\delta > 0$,  we can choose a $\lambda^*$ in polynomial time so that for the resulting collection of policies $\{P^*_i\}$ which achieve the respective $\Q_i(\lambda^*)$ we have: (1) $\lambda^* T \ge (1-\delta)\lpbud/3$ and (2) $ \sum_{i} \Q_i(\lambda^*) \ge \lambda^*$.
\end{lemma}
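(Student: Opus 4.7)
The plan has two parts. The first---weak duality---is standard: for any $\{P_i\}$ feasible for \lpbud\ and any $\lambda \ge 0$,
\begin{align*}
\sum_i \R(P_i) &\le \sum_i \R(P_i) + \lambda\Bigl(2 - \sum_i(\I(P_i) + \T(P_i)/T)\Bigr) \\
&= 2\lambda + \sum_i\bigl(\R(P_i) - \lambda \I(P_i) - \lambda\T(P_i)/T\bigr) \\
&\le 2\lambda + \sum_i \Q_i(\lambda),
\end{align*}
where the first inequality uses feasibility of $\{P_i\}$ together with $\lambda \ge 0$, and the last is the definition of $\Q_i$. Taking the maximum over feasible $\{P_i\}$ proves $\lpbud \le 2\lambda + \sum_i \Q_i(\lambda)$.

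For the second part the plan is to locate a $\lambda^*$ that approximately balances $\lambda$ against $A(\lambda) := \sum_i \Q_i(\lambda)$. Each $\Q_i$ is the upper envelope of linear functions of $\lambda$ with slopes in $[-2,0]$ (since $\I(P),\T(P)/T \in [0,1]$), so $A$ is convex, piecewise-linear, non-increasing, and $2n$-Lipschitz. Furthermore $A(0) = \sum_i \max_u r_u$ is a polynomial upper bound on $\lpbud$ that is easily computed, and $A(\lambda) \to 0$ as $\lambda \to \infty$ (the null policy is always feasible for $\Q_i$). Therefore $g(\lambda) := \lambda - A(\lambda)$ is continuous and strictly increasing from $g(0) \le 0$ to $+\infty$, with a unique root $\lambda^\dagger$ satisfying $A(\lambda^\dagger) = \lambda^\dagger$.

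Bisection on $[0, A(0)]$ then produces $\lambda^- < \lambda^+$ with $A(\lambda^-) \ge \lambda^-$, $A(\lambda^+) < \lambda^+$, and $\lambda^+ - \lambda^- \le \eta$, using $O(\log(A(0)/\eta))$ evaluations of $A$ via the dynamic program of Lemma~\ref{lem:opti2}. I would set $\lambda^* := \lambda^-$, so condition~(2) holds immediately. For condition~(1), the Lipschitz bound gives $A(\lambda^*) \le A(\lambda^+) + 2n\eta < \lambda^- + (2n{+}1)\eta$, and hence weak duality yields $\lpbud \le 2\lambda^* + A(\lambda^*) \le 3\lambda^* + (2n{+}1)\eta$. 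Choosing $\eta \le \delta \cdot r_{\min}/(2n{+}1)$ for any readily computed polynomial lower bound $r_{\min}$ on $\lpbud$ (e.g.\ $\max_i \max_u r_u$, achieved by a single-arm policy feasible for $\lpbud$) delivers $\lambda^* \ge (1-\delta)\lpbud/3$. The only technical subtlety is driving the binary-search precision via a known lower bound on the unknown value $\lpbud$; everything else is a repetition of the Lagrangian manipulation already used for Theorem~\ref{blahtheorem}.
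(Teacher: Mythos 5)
Your overall route is the paper's: weak duality for the first inequality, then a binary search for the balance point of $\lambda$ against $A(\lambda)=\sum_i \Q_i(\lambda)$, taking $\lambda^*=\lambda^-$ so that condition~(2) is automatic and condition~(1) follows from weak duality near the crossing. That skeleton is correct, and your Lipschitz detour is harmless though unnecessary: you can apply weak duality at $\lambda^+$ directly, $\lpbud \le 2\lambda^+ + A(\lambda^+) < 3\lambda^+ \le 3(\lambda^- + \eta)$, without ever bounding $A(\lambda^-)$.

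The one genuine error is the quantity you propose as the lower bound $r_{\min}$ on $\lpbud$ that calibrates the precision $\eta$. The value $\max_i\max_u r_u$ is \emph{not} a lower bound on $\lpbud$: a state $u$ with large $r_u$ may be reached only with tiny probability, and by the martingale property every single-arm policy satisfies $\R(P_i)\le \E[r_U] = r_{\rho_i}$ (the expected posterior mean at any stopping state $U$ equals the prior mean, and $\R$ only credits decision paths on which the arm is chosen). For the same reason $A(0)=\sum_i r_{\rho_i}$ rather than $\sum_i\max_u r_u$ (the latter is still a valid upper endpoint for the bisection, so that slip is immaterial), and hence $\lpbud\le\sum_i r_{\rho_i}$, which can be far below $\max_i\max_u r_u$ in exactly the regime the paper flags as hard (poor priors). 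The fix is immediate: take $r_{\min}=\max_i r_{\rho_i}$, achieved by the feasible policy that chooses the best arm at its root without making any play. Alternatively, the paper sidesteps any a priori lower bound by terminating the bisection on the \emph{relative} criterion $\lambda^+-\lambda^-\le\delta\lambda^-/3$ and concluding $\lpbud< 3\lambda^+\le 3\lambda^-(1+\delta/3)$; with either repair your argument goes through.
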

\begin{proof}
The first part follows from weak duality. For the second part, our approach is similar to that in Theorem~\ref{blah2}. Observe that at very large $\lambda$ we have $\Q_i(\lambda)=0$ for all $i$ and therefore $\sum_i\Q_i(\lambda) < \lambda$. At $\lambda=0$ we easily have $\sum_i \Q_i(\lambda) \geq \lambda$. we can now perform a binary search such that 
we have two values of $\lambda^{-},\lambda^{+}$  maintaining the invariants that $\sum_i \Q_i(\lambda^{-}) \geq \lambda^{-}$ and 
$\sum_i \Q_i(\lambda^{+}) < \lambda^{+}$ respectively. Note that it suffices to ensure $\lambda^+ - \lambda^- \leq \delta \lambda^-/3$; where we can use $\lambda^*=\lambda^-$. The lemma follows from that fact that
$2\lambda + \sum_i \Q_i(\lambda) \geq \lpbud$.
\end{proof}

\subsection{An Amortized Accounting}
\label{sec:amort2}

Consider the single-arm policy $\P_i(\lambda^*)$ that
corresponds to the value $\Q_i(\lambda)$.  
$\P_i(\lambda^*)$ performs one of
three actions for each state $u \in \S_i$: (i) Play the arm; (ii) choose the arm as the final answer and stop or (iii) Quit. 
For this policy, 
$\R(P_i(\lambda^*)) = \Q_i(\lambda^*) + \lambda^*
(\I(P_i(\lambda^*)) + \T(P_i(\lambda^*))/T)$.
This implies that the reward
$R(\P_i(\lambda^*))$ of this policy can be amortized, so that for
state $u \in \S_i$, the reward is collected as follows:

\begin{enumerate}\parskip=0in 
\item An upfront reward of $\Q_i(\lambda^*)$ when the play for the arm initiates at the root $\rho \in \S_i$.
\item A reward of $\lambda^*$ for choosing the arm in any state $u$ and stopping.
\item A reward of $\lambda^*/T$ for playing the arm in $u \in \S_i$.
\end{enumerate}
{\em Note that this accounting is not true if the policy $\P^*_i(\lambda)$ is executed incompletely}, for instance, if it is terminated prematurely. 

\subsection{The Final Algorithm}

\begin{figure*}[htbp]
\fbox{
\begin{minipage}{6.0in}
{\small
{\bf Policy {\sc Final}$(\lambda^*)$ for Budgeted Learning}
\begin{enumerate}
\item Define the problem \lpbud\ and solve the Lagrangian \budlag\ for $\lambda^*$ as chosen in Lemma~\ref{lem:bal2}. This yields single arm policies $\{\P^*_i(\lambda)\}$.
\item Order the arms arbitrarily (or as provided by an adversary) as $1,2,\ldots,n$ and execute the policies $\P_i(\lambda^*)$ in this order, with the stopping conditions: 
\begin{enumerate}
\item If $\P^*_i(\lambda)$ stops, move to the next arm.
\item If $\P^*_i(\lambda)$ chooses arm $i$ as final, choose arm $i$ as final (obtaining its reward) and stop.
\item If $T$ steps have elapsed, choose the current arm as final (obtaining its reward) and stop.
\end{enumerate}
\end{enumerate}
}
\end{minipage}
}
\caption{The Policy {\sc Final}$(\lambda^*)$.\label{ten}}
\end{figure*}

We now present the overall algorithm in Figure~\ref{ten}.
The next lemma is the crux of the entire analysis, and follows the amortized accounting argument. The hurdle with using the argument directly is that when the horizon is exhausted, the currently executing single-arm policy is not executed completely. We first pretend it executed completely, violating the horizon.

\begin{lemma}
\label{lem:violate}
Suppose {\sc Final}$(\lambda^*)$ completely executes the single-arm policy that it is executing at time $T$, before stopping. (Observe that this policy is not feasible.) The expected reward of this (infeasible) policy is at least  $\lpbud/(3+\epsilon)$.
\end{lemma}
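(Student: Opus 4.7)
The plan is to combine the amortized accounting scheme of Section~\ref{sec:amort2} with the structural feature of the infeasible variant. Lemma~\ref{lem:opti2} decomposes, for any completed execution of $\P^*_i(\lambda^*)$, the expected reward as
\[
\R(\P^*_i(\lambda^*)) \;=\; \Q_i(\lambda^*) \,+\, \lambda^*\,\I(\P^*_i(\lambda^*)) \,+\, \frac{\lambda^*}{T}\,\T(\P^*_i(\lambda^*)),
\]
which I will view as three separate credits paid by the overall scheduler: an upfront $\Q_i(\lambda^*)$ when arm $i$ is first started, a lump $\lambda^*$ whenever some arm is chosen as final, and a $\lambda^*/T$ credit per play. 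The crucial feature of the (infeasible) variant is that every $\P^*_i$ it initiates is run to completion, so this per-arm decomposition applies verbatim to every arm the scheduler even touches.

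Let $X_i$ denote the indicator that arm $i$ is started, $\mathcal{A}$ the event that some arm is chosen as final, and $N$ the total number of plays. Since each $\P^*_i$ runs independently once initiated, $\Pr[\mathcal{A}] = \sum_i \Pr[X_i=1]\,\I(\P^*_i(\lambda^*))$ and $\E[N] = \sum_i \Pr[X_i=1]\,\T(\P^*_i(\lambda^*))$. Summing the amortized decomposition weighted by $\Pr[X_i=1]$ then rewrites the expected reward of the infeasible policy as
\[
\E\!\left[\sum_i X_i\,\Q_i(\lambda^*) \;+\; \lambda^*\,\mathbf{1}[\mathcal{A}] \;+\; \frac{\lambda^*}{T}\,N\right].
\]

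The core of the proof is a pointwise lower bound of $\lambda^*$ on this bracket, which I would obtain by casework on how the infeasible policy halts. If some arm is chosen as final, the middle term alone supplies $\lambda^*$. Otherwise, the policy halts either because every arm has been run to completion without any being chosen, in which case Lemma~\ref{lem:bal2}(2) gives $\sum_i X_i\,\Q_i(\lambda^*) = \sum_i \Q_i(\lambda^*) \geq \lambda^*$; or because the horizon was breached during some $\P^*_k$ whose infeasible continuation eventually quit, in which case $N \geq T$ and the per-play term satisfies $(\lambda^*/T)\,N \geq \lambda^*$. Taking expectation yields $\E[\text{reward}] \geq \lambda^*$, and combining with $\lambda^* \geq (1-\delta)\lpbud/3$ from Lemma~\ref{lem:bal2}(1) and choosing $\delta = \epsilon/(3+\epsilon)$ delivers the claim.

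The principal conceptual difficulty is recognizing that the three amortization credits align exactly with the three possible halting modes of the infeasible policy, with each mode individually delivering a full $\lambda^*$. It is crucial that we analyze the infeasible variant and not the feasible one: the per-play credit $(\lambda^*/T)\,N$ can reach $\lambda^*$ precisely because $N$ is allowed to exceed $T$, whereas a feasible truncation would instead terminate mid-execution of some $\P^*_k$ and break the clean per-arm amortization on which the entire argument rests.
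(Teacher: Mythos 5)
Your proof is correct and follows essentially the same route as the paper's: the same amortized three-credit accounting from Lemma~\ref{lem:opti2}, the same casework on the three halting modes (all arms exhausted, an arm chosen as final, horizon breached), with each case delivering $\lambda^*$ and Lemma~\ref{lem:bal2} finishing the bound. Your version is merely a more formalized writing of the paper's argument (and your choice $\delta = \epsilon/(3+\epsilon)$ is in fact the precise one needed).
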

\begin{proof}
Consider three stopping conditions: (a) The policy has visited all the arms and have no further arms to play,  (b) the policy choose an arm, or (c) the policy continue past $T$ steps.

By the amortized argument, in case (a) the contribution to the reward using the amortized accounting is $\sum_i \Q(\lambda^*)$ which is at least $\lambda^*$ by Lemma~\ref{lem:bal2}. In case (b), the contribution is again $\lambda^*$. In case (c) the contribution is $T \cdot \lambda^*/T = \lambda^*$. Thus in all cases the contribution is at least $\lambda^*$ and the Lemma follows (setting $\delta \leq \epsilon/3$ in Lemma~\ref{lem:bal2}).
\end{proof}

\begin{theorem}
\label{thm:order}
The policy {\sc Final}$(\lambda^*)$ achieves a reward of at least $\lpbud/(3+\epsilon)$.
\end{theorem}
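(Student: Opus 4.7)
My plan is to reduce Theorem~\ref{thm:order} directly to Lemma~\ref{lem:violate} via a pathwise comparison. Let $P'$ denote the infeasible reference policy from Lemma~\ref{lem:violate}: it runs exactly like {\sc Final}$(\lambda^*)$ up to time $T$, but if at time $T$ some single-arm policy $\P_i^*(\lambda^*)$ is still in progress, $P'$ completes that single-arm policy (possibly past the horizon) before stopping, whereas {\sc Final}$(\lambda^*)$ immediately overrides by declaring the current arm $i$ final and collecting reward $r_u$ for the current state $u \in \S_i$. Since the two policies agree on every decision path before time $T$, it suffices to show that from time $T$ onward {\sc Final}$(\lambda^*)$ collects at least as much expected reward as $P'$ does, i.e., $\R({\sc Final}(\lambda^*)) \ge \R(P')$.

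The key technical step will be an optional-stopping lemma tailored to the single-arm structure here: if $Q$ is any policy on $\S_i$ whose per-state actions are restricted to \emph{play}, \emph{choose}, and \emph{quit}, and $Q$ begins execution at state $u$, then $\R(Q) \le r_u$. I would prove this by bottom-up induction on the DAG $\S_i$. Quitting at $u$ yields $0 \le r_u$ (since $r_u \ge 0$); choosing at $u$ yields exactly $r_u$; and if $Q$ plays at $u$, then $\R(Q) = \sum_v \p_{uv}\, \R(Q_v) \le \sum_v \p_{uv}\, r_v = r_u$, where $Q_v$ denotes the continuation at $v$, the inequality uses the inductive hypothesis, and the last equality is the martingale property of the rewards on $\S_i$.

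Applying this lemma to the tail of $\P_i^*(\lambda^*)$ starting from the state $u$ that $P'$ occupies at time $T$, we get, conditional on the history up through step $T$, that the further expected reward of $P'$ is at most $r_u$ -- which is precisely what {\sc Final}$(\lambda^*)$ collects in that moment. Taking expectation over histories yields $\R({\sc Final}(\lambda^*)) \ge \R(P')$, and combining with Lemma~\ref{lem:violate} gives $\R({\sc Final}(\lambda^*)) \ge \lpbud/(3+\epsilon)$, which is the statement of Theorem~\ref{thm:order}.

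The only delicate point is the conditioning: the inequality $\R({\sc Final}(\lambda^*)) \ge \R(P')$ is genuinely new only on sample paths where {\sc Final}$(\lambda^*)$ halts because of the horizon (case (c) of Lemma~\ref{lem:violate}); on paths where an arm was chosen early or all arms quit, the two policies already coincide. So the main work is in stating the optional-stopping lemma in a form whose hypothesis exactly matches the action set available to the tail of $\P_i^*(\lambda^*)$; once that is done, the induction and the reduction to Lemma~\ref{lem:violate} are routine.
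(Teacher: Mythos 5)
Your proposal is correct and follows essentially the same route as the paper: both compare {\sc Final}$(\lambda^*)$ to the infeasible policy of Lemma~\ref{lem:violate} and argue that declaring the in-progress arm final at state $u$ (collecting $r_u$) dominates the expected reward $\R(P)$ of the unexecuted remainder, via the martingale property. Your bottom-up induction over $\S_i$ simply makes explicit the step the paper compresses into the sentence ``by the martingale property \ldots $P$ may not choose the arm at all and have $0$ contribution in some evolutions.''
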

\begin{proof}
 The policy {\sc Final}$(\lambda^*)$ differs from that in Lemma~\ref{lem:violate} as follows: At the $T+1^{st}$  step, instead of continuing executing the current policy $\P_j(\lambda^*)$ for some arm  $j$ in state $u \in \S_j$ (and continuing the amortized accounting), {\sc Final}$(\lambda^*)$ simply chooses the arm $j$ as the final answer. Let the remainder  of the policy which was not executed be $P$.

By the martingale property of the rewards, choosing the arm $j$ at this state  $u$ would contribute to the objective at least $\R(P)$ because $P$ may not choose the arm at all and have $0$ contribution in some  evolutions. Therefore, stopping single-arm policy $\P_j(\lambda^*)$ prematurely (when the horizon is exhausted) and choosing it as final yields at least as much reward as executing it completely. Therefore, the reward of {\sc Final}$(\lambda^*)$ is at least the reward of the infeasible policy analyzed in Lemma~\ref{lem:violate}, and the theorem follows.
\end{proof}

\noindent The next corollary follows by inspection based on the arguments leading up to
  Theorem~\ref{thm:order}.

\begin{corollary} If we can solve $\sum_i \Q_i(\lambda)$ to an   $\alpha$ approximation then we would have an $(\alpha+2+\epsilon)$  approximation for budgeted learning.
\end{corollary}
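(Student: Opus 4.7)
The plan is to retrace the proof of Theorem~\ref{thm:order}, substituting the approximate value $\hat{Q}_i(\lambda) := \R(\hat{P}_i(\lambda)) - \lambda\bigl(\I(\hat{P}_i(\lambda)) + \T(\hat{P}_i(\lambda))/T\bigr)$ for $\Q_i(\lambda)$, and rebalancing the binary-search threshold so that the amortized per-path contribution remains $\lambda^*$ while weak duality only loses a factor of $\alpha$ on the dual side. Here $\hat{P}_i(\lambda)$ denotes the single-arm policy returned by the hypothesized $\alpha$-approximation, so by assumption $\alpha \sum_i \hat{Q}_i(\lambda) \geq \sum_i \Q_i(\lambda)$.

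First I will rework Lemma~\ref{lem:bal2}. Combining the unchanged weak-duality bound $2\lambda + \sum_i \Q_i(\lambda) \geq \lpbud$ with the approximation guarantee yields $2\lambda + \alpha \sum_i \hat{Q}_i(\lambda) \geq \lpbud$ for every $\lambda \geq 0$. I then binary-search for the threshold where $\sum_i \hat{Q}_i(\lambda)$ crosses $\lambda$ (not $\lambda/\alpha$): the invariant is $\sum_i \hat{Q}_i(\lambda^-) \geq \lambda^-$ and $\sum_i \hat{Q}_i(\lambda^+) < \lambda^+$, I shrink the interval to size at most $\delta \lambda^-$, and set $\lambda^* := \lambda^-$. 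Monotonicity of $\sum_i \hat{Q}_i(\lambda)$ in $\lambda$ -- needed for the binary search to be well defined -- follows because $\R(P) - \lambda(\I(P) + \T(P)/T)$ is non-increasing in $\lambda$ for any fixed policy $P$, and after allowing the null policy whenever it beats $\hat{P}_i(\lambda)$, the approximate value is itself non-increasing. At the transition, $\lpbud \leq 2\lambda^* + \alpha\lambda^*(1+\delta)$, so $\lambda^* \geq \lpbud/(\alpha + 2 + O(\delta))$.

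Next I will rerun the amortized-accounting argument of Section~\ref{sec:amort2} with $\hat{P}_i(\lambda^*)$ in place of $P_i(\lambda^*)$ inside policy {\sc Final}$(\lambda^*)$. The identity $\R(\hat{P}_i(\lambda^*)) = \hat{Q}_i(\lambda^*) + \lambda^*\bigl(\I(\hat{P}_i(\lambda^*)) + \T(\hat{P}_i(\lambda^*))/T\bigr)$ is tautological by the definition of $\hat{Q}_i$, so the three-bucket amortization (upfront $\hat{Q}_i(\lambda^*)$, $\lambda^*$ per final choice, $\lambda^*/T$ per play) continues to account exactly for the expected reward of each completed policy. The three stopping cases from Lemma~\ref{lem:violate} then contribute (a) $\sum_i \hat{Q}_i(\lambda^*) \geq \lambda^*$ by the new binary-search invariant, (b) $\lambda^*$ from the final-choice reward, and (c) $T\cdot\lambda^*/T = \lambda^*$ from the $T$ plays; the martingale trick from the proof of Theorem~\ref{thm:order} that upgrades the infeasible completion back to {\sc Final}$(\lambda^*)$ carries over verbatim. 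Collecting these gives expected reward at least $\lambda^* \geq \lpbud/(\alpha+2+\epsilon)$ after absorbing the binary-search slack $\delta$ into $\epsilon$.

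The main obstacle, as I see it, is placing the approximation factor $\alpha$ on the correct side of the two competing inequalities: routing it through Case (a) of the amortization (by thresholding at $\sum_i \hat{Q}_i(\lambda) = \lambda/\alpha$) would degrade the ratio to $3\alpha$, which is strictly worse than $\alpha + 2 + \epsilon$ as soon as $\alpha > 1$. The balanced choice -- thresholding at $\sum_i \hat{Q}_i(\lambda) = \lambda$ and letting $\alpha$ act only on the dual side of weak duality -- is forced by the requirement that Cases (a), (b), and (c) each contribute the same $\lambda^*$, and it is exactly this balance that produces the additive rather than multiplicative dependence on $\alpha$ in the final ratio.
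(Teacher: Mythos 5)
Your proposal is correct and is precisely the ``inspection'' the paper has in mind: the paper gives no explicit proof beyond pointing back at Lemma~\ref{lem:bal2}, the amortized accounting, Lemma~\ref{lem:violate}, and Theorem~\ref{thm:order}, and your writeup carries out exactly that substitution, including the one genuinely important observation that the threshold must stay at $\sum_i \Q_i(\lambda)=\lambda$ so that $\alpha$ enters only through weak duality (yielding $(2+\alpha)\lambda^*\gtrsim\lpbud$) rather than through Case~(a). One small remark: your appeal to monotonicity of the approximate values $\hat{Q}_i(\lambda)$ is not actually guaranteed for an arbitrary $\alpha$-approximation oracle, but it is also not needed, since the binary search only has to maintain the two endpoint invariants, which it does by testing the midpoint regardless of monotonicity.
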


Observe that the Lagrangian formulation in \budlag\ would still
satisfy Lemma~\ref{prop:cool}, and as a consequence we immediately
have a factor $(\alpha+2+\epsilon)$-approximation algorithm for the
variant of the budgeted learning problem where we have a metric
switching cost and a total bound of $L$ on the total switching cost,
following the $\alpha=2+\epsilon$ (or possibly other) approximation
algorithms for the {\em Orienteering} problem discussed in
Section~\ref{sec:metric}.
We can also solve the budgeted learning problem with $r$
additional packing (knapsack type) constraints to within $(r + 2 +
\epsilon)$-approximation.

\section{Conclusions}

In this paper, we have shown that for several variants of the finite
horizon multi-armed bandit problem, we can formulate and solve {\em
  weakly coupled} LP relaxations, and use the solutions of these
relaxations to devise feasible decision policies whose reward is
within a fixed constant factor of the optimal reward. This provides
analytic justification for using such relaxations to guide policy
design in practice, and the resulting policies are comparable in
complexity to standard index policies.

The main open questions posed by our work is to improve the performance
bounds for large delays in the delayed feedback model or the results for 
the \simult\ model for the \maxmab\ problem. Observe that in the latter case
we are providing comparisons against the strong upper bounds of
\oneatatime\ model. This would require new techniques or 
a lower upper bound.
This could involve formulation of more
strongly coupled LPs, for instance, polymatroidal
formulations~\cite{BN96,nino,nino2} or time-indexed
formulations~\cite{Gupta}. It would also be interesting to
characterize the class of bandit problems for which weakly coupled
relaxations provide constant factor approximations.

\section*{Acknowledgments} 
We thank Martin P\'al for several helpful discussions.

\bibliographystyle{abbrv}

\begin{thebibliography}{10}

\bibitem{Adel}
D.~Adelman and A.~J. Mersereau.
\newblock Relaxations of weakly coupled stochastic dynamic programs.
\newblock {\em Operations Research}, 56(3):712--727, 2008.

\bibitem{agarwal}
D.~Agarwal, B.-C. Chen, and P.~Elango.
\newblock Explore/exploit schemes for web content optimization.
\newblock {\em Proceedings of ICDM}, pages 1--10, 2009.

\bibitem{Akella}
A.~Akella, B.~Maggs, S.~Seshan, A.~Shaikh, and R.~Sitaraman.
\newblock A measurement-based analysis of multihoming.
\newblock {\em Proceedings of SIGCOMM}, pages 353--364, 2003.

\bibitem{saeed}
S.~Alaei and A.~Malekian.
\newblock Maximizing sequence-submodular functions and its application to
  online advertising.
\newblock {\em CoRR abs/1009.4153}, 2010.

\bibitem{Anderson}
T.~W. Anderson.
\newblock Sequential analysis with delayed observations.
\newblock {\em Journal of the American Statistical Association},
  59(308):1006--1015, 1964.

\bibitem{Armitage}
P.~Armitage.
\newblock The search for optimality in clinical trials.
\newblock {\em International Statistical Review}, 53(1):15--24, 1985.

\bibitem{ABG49}
K.~J. Arrow, D.~Blackwell, and M.~A. Girshick.
\newblock Bayes and minmax solutions of sequential decision problems.
\newblock {\em Econometrica}, 17:213--244, 1949.

\bibitem{Auer}
P.~Auer, N.~Cesa-Bianchi, and P.~Fischer.
\newblock Finite-time analysis of the multiarmed bandit problem.
\newblock {\em Machine Learning}, 47(2-3):235--256, 2002.

\bibitem{nonstochastic}
P.~Auer, N.~Cesa-Bianchi, Y.~Freund, and R.~Schapire.
\newblock The nonstochastic multiarmed bandit problem.
\newblock {\em SIAM J. Comput.}, 32(1):48--77, 2003.

\bibitem{expt}
S.~Babu, N.~Borisov, S.~Duan, H.~Herodotou, and V.~Thummala.
\newblock Automated experiment-driven management of (database) systems.
\newblock {\em Proc. of HotOS}, 2009.

\bibitem{banks1}
J.~S. Banks and R.~K. Sundaram.
\newblock Denumerable-armed bandits.
\newblock {\em Econometrica}, 60(5):1071--1096, 1992.

\bibitem{BS94}
J.~S. Banks and R.~K. Sundaram.
\newblock Switching costs and the gittins index.
\newblock {\em Econometrica}, 62(3):687--694, 1994.

\bibitem{BansalBCM04}
N.~Bansal, A.~Blum, S.~Chawla, and A.~Meyerson.
\newblock Approximation algorithms for deadline-tsp and vehicle routing with
  time-windows.
\newblock {\em STOC}, pages 166--174, 2004.

\bibitem{book}
D.~Bertsekas.
\newblock {\em Dynamic Programming and Optimal Control}.
\newblock Athena Scientific, second edition, 2001.

\bibitem{BN96}
D.~Bertsimas and J.~Nino-Mora.
\newblock Conservation laws, extended polymatroids and multi-armed bandit
  problems: {A} unified polyhedral approach.
\newblock {\em Math. of Oper. Res.}, 21(2):257--306, 1996.

\bibitem{nino2}
D.~Bertsimas and J.~Ni{\~n}o-Mora.
\newblock Restless bandits, linear programming relaxations, and a primal-dual
  index heuristic.
\newblock {\em Oper. Res.}, 48(1):80--90, 2000.

\bibitem{BlumCKLMM03}
A.~Blum, S.~Chawla, D.~R. Karger, T.~Lane, A.~Meyerson, and M.~Minkoff.
\newblock Approximation algorithms for orienteering and discounted-reward tsp.
\newblock {\em SIAM J. Comput.}, 37(2):653--670, 2007.

\bibitem{Bradt}
R.~N. Bradt, S.~M. Johnson, and S.~Karlin.
\newblock On sequential designs for maximizing the sum of $n$ observations.
\newblock {\em Ann. Math. Statist.}, 27(4):1060--1074, 1956.

\bibitem{ML}
M.~Brezzi and T.-L. Lai.
\newblock Optimal learning and experimentation in bandit problems.
\newblock {\em Journal of Economic Dynamics and Control}, 27(1):87--108, 2002.

\bibitem{experts}
N.~Cesa-Bianchi, Y.~Freund, D.~Haussler, D.~P. Helmbold, R.~E. Schapire, and
  M.~K. Warmuth.
\newblock How to use expert advice.
\newblock {\em J. ACM}, 44(3):427--485, 1997.

\bibitem{CL}
N.~Cesa-Bianchi and G.~Lugosi.
\newblock {\em Prediction, Learning and Games}.
\newblock Cambridge University Press, 2006.

\bibitem{CKP}
C.~Chekuri, N.~Korula, and M.~P\'{a}l.
\newblock Improved algorithms for orienteering and related problems.
\newblock In {\em SODA '08: Proceedings of the nineteenth annual ACM-SIAM
  symposium on Discrete algorithms}, pages 661--670, 2008.

\bibitem{Choi}
S.~C. Choi and V.~A. Clark.
\newblock Sequential decision for a binomial parameter with delayed
  observations.
\newblock {\em Biometrics}, 26(3):411--420, 1970.

\bibitem{cover}
T.~M. Cover and J.~A. Thomas.
\newblock {\em Elements of Information Theory}.
\newblock John Wiley \& sons, 1991.

\bibitem{DGV}
B.~C. Dean, M.~X. Goemans, and J.~Vondrak.
\newblock Approximating the stochastic knapsack problem: {T}he benefit of
  adaptivity.
\newblock {\em Math. of Operations Research}, 33(4):945--964, 2008.

\bibitem{cloud}
A.~Demberel, J.~Chase, and S.~Babu.
\newblock Reflective control for an elastic cloud appliation: An automated
  experiment workbench.
\newblock {\em Proc. of HotCloud}, 2009.

\bibitem{sched}
S.~Ehrenfeld.
\newblock On a scheduling problem in sequential analysis.
\newblock {\em The Annals of Mathematical Statistics}, 41(4):1206--1216, 1970.

\bibitem{Eick}
S.~G. Eick.
\newblock The two-armed bandit with delayed responses.
\newblock {\em The Annals of Statistics}, 16(1):254--264, 1988.

\bibitem{mkearns}
E.~Even-Dar, M.~J. Kearns, and J.~W. Vaughan.
\newblock Risk-sensitive online learning.
\newblock {\em ALT}, pages 199--213, 2006.

\bibitem{farias}
V.~F. Farias and R.~Madan.
\newblock The irrevocable multiarmed bandit problem.
\newblock {\em Operations Research}, 59(2):383--399, 2011.

\bibitem{FlaxmanKM05}
A.~Flaxman, A.~Kalai, and H.~B. McMahan.
\newblock Online convex optimization in the bandit setting: {G}radient descent
  without a gradient.
\newblock In {\em Annual ACM-SIAM Symp.\ on Discrete Algorithms}, 2005.

\bibitem{GGW}
J.~Gittins, K.~Glazebrook, and R.~Weber.
\newblock {\em Multi-Armed Bandit Allocation Indices}.
\newblock Wiley, 2011.

\bibitem{GJ74}
J.~C. Gittins and D.~M. Jones.
\newblock A dynamic allocation index for the sequential design of experiments.
\newblock {\em Progress in statistics (European Meeting of Statisticians)},
  1972.

\bibitem{Gittins}
J.~C. Gittins and D.~M. Jones.
\newblock A dynamic allocation index for the sequential design of experiments.
\newblock {\em Progress in statistics (European Meeting of Statisticians)},
  1972.

\bibitem{GoelGM}
A.~Goel, S.~Guha, and K.~Munagala.
\newblock How to probe for an extreme value.
\newblock {\em ACM Transactions of Algorithms}, 7(1):12:1--20, 2010.

\bibitem{GKN}
A.~Goel, S.~Khanna, and B.~Null.
\newblock The ratio index for budgeted learning, with applications.
\newblock {\em Proceedings of SODA}, 2009.

\bibitem{golovin2}
D.~Golovin and A.~Krause.
\newblock Adaptive submodular optimization under matroid constraints.
\newblock {\em CoRR abs/1101.4450}, 2011.

\bibitem{GM07}
S.~Guha and K.~Munagala.
\newblock Approximation algorithms for budgeted learning problems.
\newblock {\em Proc. of STOC}, 2007.

\bibitem{GM08}
S.~Guha and K.~Munagala.
\newblock Sequential design of experiments via linear programming.
\newblock In {\em CoRR, arxiv:0805.0766}, 2008.

\bibitem{GM09}
S.~Guha and K.~Munagala.
\newblock Multi-armed bandits with metric switching costs.
\newblock {\em Proceedings of ICALP}, 2009.

\bibitem{GM13}
S.~Guha and K.~Munagala.
\newblock Approximate indexability and bandit problems with concave rewards and
  delayed feedback.
\newblock {\em Proceedings of APPROX-RANDOM}, 2013.

\bibitem{GMP10}
S.~Guha, K.~Munagala, and M.~P\'al.
\newblock Iterated allocations with delayed feedback.
\newblock {\em Manuscript, available at CoRR http://arxiv.org/abs/1011.1161},
  2010.

\bibitem{GMS10}
S.~Guha, K.~Munagala, and P.~Shi.
\newblock Approximation algorithms for restless bandit problems.
\newblock {\em J. ACM}, 58(1):3:1--50, 2010.

\bibitem{Gupta}
A.~Gupta, R.~Krishnaswamy, M.~Molinaro, and R.~Ravi.
\newblock Approximation algorithms for correlated knapsacks and non-martingale
  bandits.
\newblock {\em Proc. of FOCS}, 2011.

\bibitem{whatif}
H.~Herodotou and S.~Babu.
\newblock Profiling, what-if analysis, and cost-based optimization of mapreduce
  programs.
\newblock {\em Proc. of VLDB}, 2011.

\bibitem{JainV99}
K.~Jain and V.~V. Vazirani.
\newblock Approximation algorithms for metric facility location and k-median
  problems using the primal-dual schema and lagrangian relaxation.
\newblock {\em J. ACM}, 48(2):274--296, 2001.

\bibitem{labor1}
B.~Jovanovich.
\newblock Job-search and the theory of turnover.
\newblock {\em J. Political Economy}, 87:972Ð990, 1979.

\bibitem{KalaiV03}
A.~Kalai and S.~Vempala.
\newblock Efficient algorithms for online decision problems.
\newblock In {\em Proc. of 16th Conf. on Computational Learning Theory}, 2003.

\bibitem{LaiRobbins}
T.~L. Lai and H.~Robbins.
\newblock Asymptotically efficient adaptive allocation rules.
\newblock {\em Advances in Applied Mathematics}, 6:4--22, 1985.

\bibitem{naive}
D.~Lizotte, O.~Madani, and R.~Greiner.
\newblock Budgeted learning of naive-bayes classifiers.
\newblock In {\em Proceedings of the 19th Annual Conference on Uncertainty in
  Artificial Intelligence (UAI-03)}, pages 378--38, 2003.

\bibitem{LL99}
S.~Low and D.~E. Lapsley.
\newblock Optimization flow control, i: Basic algorithm and convergence.
\newblock {\em IEEE/ACM Transactions on Networks}, 7(6), December 1999.

\bibitem{madani}
O.~Madani, D.~J. Lizotte, and R.~Greiner.
\newblock Active model selection.
\newblock In {\em UAI '04: Proc. 20th Conf. on Uncertainty in Artificial
  Intelligence}, pages 357--365, 2004.

\bibitem{labor2}
D.~Mortensen.
\newblock Job-search and labor market analysis.
\newblock In O.~Ashenfelter and R.~Layard, editors, {\em Handbook of Labor
  Economics}, volume~2, page 849Ð919. North Holland, Amsterdam, 1985.

\bibitem{nino}
J.~Ni{\~n}o-Mora.
\newblock Restless bandits, partial conservation laws and indexability.
\newblock {\em Adv. in Appl. Probab.}, 33(1):76--98, 2001.

\bibitem{leny}
J.~L. Ny, M.~Dahleh, and E.~Feron.
\newblock Multi-{UAV} dynamic routing with partial observations using restless
  bandits allocation indices.
\newblock In {\em Proceedings of the 2008 American Control Conference}, 2008.

\bibitem{Robbins}
H.~Robbins.
\newblock Some aspects of the sequential design of experiments.
\newblock {\em Bulletin American Mathematical Society}, 55:527--535, 1952.

\bibitem{Roth}
M.~Rothschild.
\newblock A two-armed bandit theory of market pricing.
\newblock {\em J. Economic Theory}, 9:185Ð202, 1974.

\bibitem{moore}
J.~Schneider and A.~Moore.
\newblock Active learning in discrete input spaces.
\newblock In {\em $34^{th}$ Interface Symp.}, 2002.

\bibitem{Simon}
R.~Simon.
\newblock Adaptive treatment assignment methods and clinical trials.
\newblock {\em Biometrics}, 33(4):743--749, 1977.

\bibitem{Golovin}
M.~J. Streeter and D.~Golovin.
\newblock An online algorithm for maximizing submodular functions.
\newblock In {\em NIPS}, pages 1577--1584, 2008.

\bibitem{maxkmab}
M.~J. Streeter and S.~F. Smith.
\newblock An asymptotically optimal algorithm for the max k-armed bandit
  problem.
\newblock {\em Proceedings of AAAI}, 2006.

\bibitem{Suzuki}
Y.~Suzuki.
\newblock On sequential decision problems with delayed observations.
\newblock {\em Annals of the Institute of Statistical Mathematics},
  18(1):229--267, 1966.

\bibitem{TianD03}
F.~Tian and D.~J. DeWitt.
\newblock Tuple routing strategies for distributed eddies.
\newblock {\em Proc. of VLDB}, 2003.

\bibitem{Tsitsiklis}
J.~N. Tsitsiklis.
\newblock A short proof of the {G}ittins index theorem.
\newblock {\em Annals of Appl. Prob.}, 4(1):194--199, 1994.

\bibitem{Wald47}
A.~Wald.
\newblock {\em Sequential Analysis}.
\newblock Wiley, New York, 1947.

\bibitem{whittle}
P.~Whittle.
\newblock Restless bandits: {A}ctivity allocation in a changing world.
\newblock {\em Appl. Prob.}, 25(A):287--298, 1988.

\end{thebibliography}

\end{document}